\begin{document}

  \title{Bayesian Pyramids: Identifiable Multilayer Discrete Latent Structure Models for Discrete Data}
  \author{Yuqi Gu\footnote{Address for correspondence: Yuqi Gu, Department of Statistics, Columbia University, New York, NY,
USA. E-mail: yuqi.gu@columbia.edu.}
  \hspace{.2cm}\\
    Department of Statistics, Columbia University\\
    and \\
    David B. Dunson\footnote{Email: dunson@duke.edu.}\\
    Department of Statistical Science, Duke University}
    \date{}
  \maketitle

\def\spacingset#1{\renewcommand{\baselinestretch}%
{#1}\small\normalsize} \spacingset{1}

\begin{abstract}
High dimensional categorical data are routinely collected in biomedical and social sciences. It is of great importance to build interpretable parsimonious models that perform dimension reduction and uncover meaningful latent structures from such discrete data. Identifiability is a fundamental requirement for valid modeling and inference in such scenarios, yet is challenging to address when there are complex latent structures. 
{In this article, we propose a class of identifiable multilayer (potentially deep) discrete latent structure models for discrete data, termed \textit{Bayesian Pyramids}. 
We establish the identifiability of Bayesian Pyramids by developing novel transparent conditions on the pyramid-shaped deep latent directed graph.}
The proposed identifiability conditions can ensure Bayesian posterior consistency under suitable priors. As an illustration, we consider the two-latent-layer model and propose a Bayesian shrinkage estimation approach. Simulation results for this model corroborate the identifiability and estimability of model parameters. Applications of the methodology to DNA nucleotide sequence data uncover useful discrete latent features that are highly predictive of sequence types. The proposed framework provides a recipe for interpretable unsupervised learning of discrete data, and can be a useful alternative to popular machine learning methods.
\end{abstract}

%

\noindent%
{\bf Keywords:} 
Bayesian inference;
Deep generative models;
Identifiability;
Interpretable machine learning;
Latent class;
Multivariate categorical data.

\spacingset{1.45} 

\section{Introduction}

High-dimensional unordered categorical data are ubiquitous in many scientific disciplines, including the DNA nucleotides of A, G, C, T in genetics \citep{pokholok2005cell,nguyen2016dna},
occurrences of various species in ecological studies of biodiversity \citep{ovaskainen2016eco,ovaskainen2020eco},
responses from psychological and educational assessments or social science surveys \citep{eysenck2020person,skinner2019survey}, 
and document data gathered from huge text corpora or publications \citep{blei2003lda,erosheva2004pnas}.
Modeling and extracting information from multivariate discrete data require different statistical methods and theoretical understanding from those for continuous data.
In an unsupervised setting, it is an important task to uncover reliable and meaningful latent patterns from the potentially high-dimensional and heterogeneous discrete observations. 
Ideally, the inferred lower-dimensional latent representations should not only provide scientific insights on their own, but also aid downstream statistical analyses through effective dimension reduction.

Recently, there has been a surge of interest in {interpretable} machine learning, see \cite{doshi2017int}, \cite{rudin2019nature}, and \cite{murdoch2019int}, among others. 
Latent variable approaches, however, have received limited attention in this emerging literature, likely due to the associated complexities.
Indeed,  deep learning models with many layers of latent variables are usually considered as uninterpretable black boxes.
For example, the deep belief network  \citep{hinton2006dbn,lee2009dbn} is a very popular deep learning architecture, but it is generally not reliable to interpret the inferred latent structure.
However, for high-dimensional data, it is highly desirable to perform dimension reduction to extract the key signals in the form of lower-dimensional latent representations.
If the latent representations are themselves reliable, then they can be viewed as surrogate features of the data and then passed along to existing interpretable machine learning methods for downstream tasks.
A key to success in such modeling and analysis processes is the interpretability of the latent structure. 
This in turn relies crucially on the identifiability of the statistical latent variable model being used.

In statistical terms, a set of parameters for a family of statistical models are said to be identifiable if distinct values of the parameters correspond to distinct distributions of the observed data.
Studies under such an identifiability notion date back to \cite{koopmans1950identification}, \cite{teicher1961mix}, and \cite{goodman1974lcm}.
Model identifiability is a fundamental prerequisite for valid statistical estimation and inference. 
In the latent variable context, if one wishes to interpret the parameters and latent representations learned using a latent variable model, then identifiability is necessary for making such interpretation meaningful and reproducible.
Early considerations of identifiability in the latent variable context can be traced to the seminal work \cite{anderson1956fa} for  traditional factor analysis.
For modern latent variable models potentially containing nonlinear, non-continuous, and even deep layers of latent variables, identifiability issues can be challenging to address.

Recent developments on the identifiability of continuous latent variable models include
\cite{drton2011global}, \cite{anandkumar2013dag},  and \cite{chen2019factor}.
Discrete latent variable models are an appealing alternative to their continuous counterparts in terms of the combination of interpretability and flexibility.    
Finite mixture models \citep{mclachlan1988mixture} routinely used for model-based clustering are a canonical example involving a single discrete latent variable.
Such relatively simple approaches are insufficiently flexible for complex data sets.  Extensions with multiple latent variables and/or multilayer structure have distinct advantages in such settings, but come with increasingly complex identifiability issues.
{In this work, we are motivated to build  identifiable deep latent variable models, which are flexible enough to capture the complex dependencies in real-world data, yet also with appropriate restrictions and parsimony to yield identifiability.}

{We propose a family of multilayer, potentially deep, discrete latent variable models and propose novel identifiability conditions for them.
We establish identifiability for hierarchical latent structures organized in a ``pyramid''-\,shaped Bayesian network. In such a \textit{Bayesian Pyramid}, observed variables are at the bottom and multilayer latent variables above them describe the data generating process.
Sparse graphical connections occur between layers, and our identifiability conditions impose structural and size constraints on these between-layer graphs. 
Technically, we tackle identifiability by first reformulating the Bayesian Pyramid as a \textit{constrained} latent class model \citep[LCM;][]{goodman1974lcm} in a layerwise manner. Then we derive a nontrivial algebraic property of LCMs under such parameter constraints (Proposition 2) and combine it with Kruskal's theorem \citep{kruskal1977, allman2009} on tensor decompositions to establish identifiability.
Our identifiability results are not only technically novel, but also provide insights into methodology development. 
Indeed, the identifiability theory directly inspires the specification of deep latent architecture in Bayesian Pyramids, which features fewer latent variables deeper up the hierarchy.
The identifiability results offer a theoretical basis for learning potentially deep and interpretable latent structures from high-dimensional discrete data.}

A nice consequence of the identifiability results is the posterior consistency of Bayesian procedures under suitable priors.
As an illustration, we consider the two-latent-layer model and propose a Bayesian shrinkage estimation approach. We develop a Gibbs sampler with data augmentation for computation.
Simulation studies corroborate identifiability and show good performance of Bayes estimators.  
We apply the proposed approach to two DNA nucleotide sequence datasets \citep{uci2019}.  
For the splice junction data, when using latent representations learned from our two-latent-layer model in downstream classification of nucleotide sequence types, we achieve a remarkable accuracy  comparable to fine-tuned convolutional neural networks \citep[i.e., in][]{nguyen2016dna}.
This suggests that the developed recipe of unsupervised learning of discrete data may serve as a useful alternative to popular machine learning methods.

The rest of this paper is organized as follows.
{Section \ref{sec-layers} proposes Bayesian Pyramids, a new family of pyramid-shaped deep latent variable models for discrete data, and reformulates a Bayesian Pyramid into Constrained Latent Class Models (CLCMs) in a layerwise manner.
Section \ref{sec-id} first considers the identifiability of the general CLCMs, and then proposes identifiability conditions for the multilayer deep Bayesian Pyramids.}
To illustrate the proposed framework, Section \ref{sec-method} focuses on a two-latent-layer Bayesian Pyramid and proposes a Bayesian estimation approach.
Section \ref{sec-simu} provides simulation studies that examine the performance of the proposed methodology and corroborate the identifiability theory.
Section \ref{sec-data} applies the method to real data on nucleotide sequences.
Finally, Section \ref{sec-disc} discusses implications and future directions.
Technical proofs, posterior computation details, and additional data analyses are included in the Supplementary Material.
{Matlab code implementing the proposed method is available at \texttt{https://github.com/yuqigu/BayesianPyramids}.}

\section{Bayesian Pyramids: Multilayer Latent Structure Models}\label{sec-layers}

This section proposes Bayesian Pyramids to model the joint distribution of multivariate unordered categorical data.
For an integer $m$,  denote $[m]=\{1,2,\ldots,m\}$.
Suppose for each subject, there are $p$ observed variables $\bo y=(y_1,\ldots,y_p)^\top$, where $y_j\in[d_j]$ for each variable $j\in[p]$. The $d_j$ is the number of categories that the $j$th observed variable  can potentially take.
We mainly consider multiple (potentially deep) layers of \textit{binary} latent variables, motivated by better computational tractability and also the simpler interpretability, with each variable encoding presence or absence of a certain latent construct. 
{The stack of multiple layers of binary latent variables induces a model resembling}
deep belief networks  \citep{hinton2006dbn}. However, our proposed class of models is more general in terms of distributional assumptions. 
In the following, we first describe in detail our proposed Bayesian Pyramids in Section \ref{subsec-dp}, and then connect them to a latent class model \citep{goodman1974lcm} subject to certain constraints.

\subsection{Multilayer Bayesian Pyramids}\label{subsec-dp}
The proposed models belong to the broader family of Bayesian networks \citep{pearl2014}, which are directed acyclic graphical models that can encode rich conditional independence information.
We propose a ``pyramid''-like Bayesian network with one latent variable at the root and more and more latent variables in downward layers, where the bottom layer consists of the $p$ observed variables $y_{1},\ldots,y_{p}$.
Denote the number of latent layers in this Bayesian network by $D$, which can be viewed as the depth.
Specifically, let the layer of latent variables consecutive to the observed $\bo y$ be $\aaa^{(1)}=(\alpha_{1},\ldots,\alpha_{K_1})\in\{0,1\}^{K_1}$ with $K_1$ variables, 
and let a deeper layer of latent variables consecutive to $\aaa^{(m)}\in\{0,1\}^{K_m}$ be $\aaa^{(m+1)}\in\{0,1\}^{K_{m+1}}$ for $m=1,2,\ldots, D-1$. 
Finally, at the top and the deepest layer of the pyramid, we specify a single discrete latent variable $z$, or equivalently $z^{(D)} \in\{1,\ldots,B\}$.
In this Bayesian network, all the directed edges are pointing in the top-down direction only between two consecutive layers, and there are no edges between variables within a layer. This gives the following factorization of the joint distribution of $\bo y$ and latent variables, 
{where the subscript $i$ denotes an index of a random subject,
\begin{align}\label{eq-model}
	&\MP(\bo y_i,\{\aaa^{(m)}_i\},z^{(D)}_i)
	= \MP(\bo y_i\mid \aaa^{(1)}_i)
	\prod_{m=1}^{D-2} \MP(\aaa^{(m)}_i\mid \aaa^{(m+1)}_i)
	\MP(\aaa^{(D-1)}_i \mid z^{(D)}_i)
	 \mathbb P(z^{(D)}_i);\\
\label{eq-modelfac}
	&\MP(\bo y_i\mid \aaa^{(1)}_i) 
	= \prod_{j=1}^p \MP\Big(y_{i,j} \,\Big{|}\, \aaa^{(1)}_{i,\pa(j)}\Big),
	\quad
	\MP(\aaa^{(m)}_i \mid \aaa^{(m+1)}_i) 
	= \prod_{k=1}^{K_m} \MP\Big(\alpha^{(m)}_{i,k} \,\Big{|}\, \aaa^{(m+1)}_{i,\pa(k^{(m)})} \Big);
\end{align}
In the above display, for each $j\in[p]$, the $\pa(j)\subseteq [K_1]$ is a collection of first-latent-layer variables which are parents of $y_j$.
Similarly, for each $k\in[K_m]$, the $\pa(k^{(m)})\subseteq[K_{m+1}]$ is a collection of $(m+1)$-latent-layer variables which are parents of the $m$-latent-layer variable $\alpha_{k}$.
}

\begin{figure}[h!]\centering
\resizebox{\textwidth}{!}{
    \begin{tikzpicture}[scale=1.7]
  
    \node (h_root)[hidden] at (4,1.4) {$z_i$};
    
    \node (h21)[hidden] at (1,0.2) {$\alpha^{(2)}_{i,1}$};
    \node (h22)[hidden] at (4,0.2) {$\cdots$};
    \node (h23)[hidden] at (7,0.2) {$\alpha^{(2)}_{i,K_2}$};
    
    \node (hb1)[hidden] at (0,-1) {$\alpha^{(1)}_{i,1}$};
    \node (hb2)[hidden] at (1,-1) {$\alpha^{(1)}_{i,2}$};
    \node (hb3)[hidden] at (2,-1) {$\alpha^{(1)}_{i,3}$};
    \node (hb4)[hidden] at (3,-1) {$\cdots$};
    \node (hb5)[hidden] at (4,-1) {$\cdots$};
    \node (hb6)[hidden] at (5,-1) {$\cdots$};
    \node (hb7)[hidden] at (6,-1) {$\cdots$};
    \node (hb8)[hidden] at (7,-1) {$\cdots$};
    \node (hb9)[hidden] at (8,-1) {$\alpha^{(1)}_{i,K_1}$};
    
    \node (v1)[neuron] at (-0.8,-2.5) {$y_{i,1}$};
    \node (v2)[neuron] at (0,-2.5) {$y_{i,2}$};
    \node (v3)[neuron] at (0.8,-2.5) {$y_{i,3}$};
    
    \node at (1.5,-2.5) {$\cdots$};
    
    \node (v4)[neuron] at (2.2,-2.5) {$\cdots$};
    \node (v5)[neuron] at (3,-2.5) {$\cdots$};
    \node (v6)[neuron] at (3.8,-2.5) {$\cdots$};

    \node at (5,-2.5) {$\cdots$};
    \node at (6,-2.5) {$\cdots$};
    
    \node at (4,-1.8) {$\cdots$};
    \node at (5,-1.8) {$\cdots$};
    \node at (6,-1.8) {$\cdots$};
    \node at (7,-1.8) {$\cdots$};

    \node (v7)[neuron] at (7.2,-2.5) {$\cdots$};
    \node (v8)[neuron] at (8,-2.5) {$\cdots$};
    \node (v9)[neuron] at (8.8,-2.5) {$y_{i,p}$};

    \draw[arr] (hb1) -- (v1);
    \draw[arr] (hb1) -- (v2);
    \draw[arr] (hb1) -- (v3);
    
    \draw[arr] (hb2) -- (v2);
    
    \draw[arr] (hb4) -- (v4);
    \draw[arr] (hb4) -- (v5);
    \draw[arr] (hb4) -- (v6);
    
    \draw[arr] (hb9) -- (v7);
    \draw[arr] (hb9) -- (v8);
    \draw[arr] (hb9) -- (v9);

    \draw[arr,dotted] (h_root) -- (h21);
    \draw[arr,dotted] (h_root) -- (h22);
    \draw[arr,dotted] (h_root) -- (h23);
    
    \draw[arr] (h21) -- (hb1);
    \draw[arr] (h21) -- (hb2);
    \draw[arr] (h21) -- (hb3);
    
    \draw[arr] (h22) -- (hb4);
    \draw[arr] (h22) -- (hb5);
    \draw[arr] (h22) -- (hb6);
    
    \draw[arr] (h23) -- (hb7);
    \draw[arr] (h23) -- (hb8);
    \draw[arr] (h23) -- (hb9);

    \draw[arr] (h21) -- (hb4);
    \draw[arr] (h22) -- (1,-0.73);
    \draw[arr] (h22) -- (hb7);
    \draw[arr] (h23) -- (4,-0.73);
    
    \draw[arr] (hb1) -- (v4);

    \node[anchor=west] (g1) at (9, -1.6) {$\mathbf G^{(1)}$};
    \node[anchor=west] (g2) at (9, -0.2) {$\mathbf G^{(2)}$};
    \node[anchor=west] (g3) at (9, 1.2) {$\bo\vdots$};
    
\end{tikzpicture}
}
\caption{Multiple layers of binary latent traits $\aaa^{(d)}_i$s model the distribution of observed $\bo y_i$. Binary matrices $\GG^{(1)},\GG^{(2)},\ldots$ encode the sparse connection patterns between consecutive layers. Dotted arrows emanating from the root variable $z_i$ summarize omitted layers $\{\aaa^{(d)}_i\}$.
}
\label{fig-layers}
\end{figure}
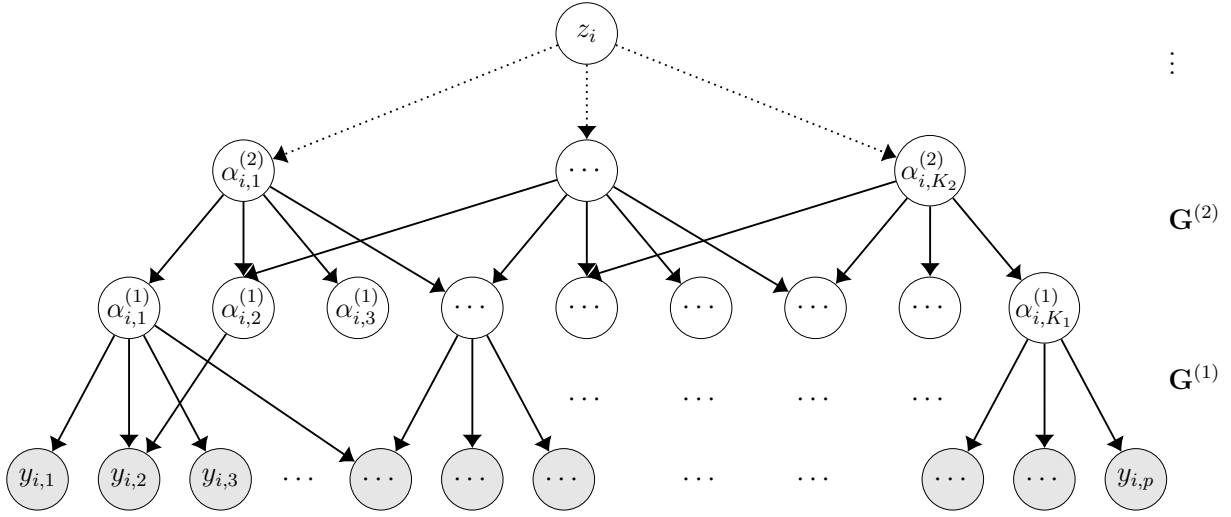

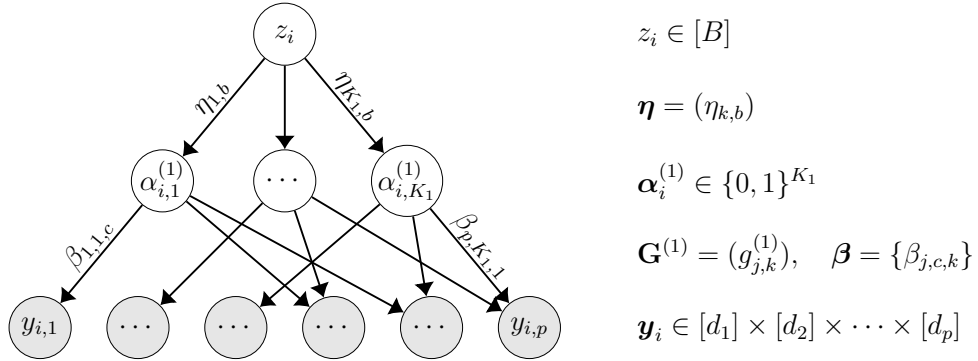
\begin{figure}[h!]\centering
\resizebox{0.8\textwidth}{!}{
    \begin{tikzpicture}[scale=1.8]

    \node (v1)[neuron] at (0, 0) {$y_{i,1}$};
    \node (v2)[neuron] at (0.8, 0) {$\cdots$};
    \node (v3)[neuron] at (1.6, 0) {$\cdots$};
    \node (v4)[neuron] at (2.4, 0) {$\cdots$};
    \node (v5)[neuron] at (3.2, 0) {$\cdots$};
    \node (v6)[neuron] at (4, 0)   {$y_{i,p}$};

    \node (h1)[hidden] at (1.0, 1.2) {$\alpha^{(1)}_{i,1}$};
    \node (h2)[hidden] at (2.0, 1.2) {$\cdots$};
    \node (h3)[hidden] at (3.0, 1.2) {$\alpha^{(1)}_{i,K_1}$};
    
    \node (h0)[hidden] at (2, 2.4) {$z_i$};

    \node[anchor=west] (z) at (4.8, 2.4) {$z_i\in [B]$};
    
    \node[anchor=west] (eta) at (4.8, 1.8) {$\bo\eta = (\eta_{k,b}) $};
    
    \node[anchor=west] (h) at (4.8, 1.2) {$\aaa^{(1)}_i \in \{0, 1\}^{K_1}$};
   
    \node[anchor=west] (g1) at (4.8, 0.6) {$\GG^{(1)} = (g^{(1)}_{j,k}),\quad \bo\beta=\{\beta_{j,c,k}\}$};
    
    \node[anchor=west] (v) at (4.8, 0) {$\bo y_i \in [d_1]\times [d_2]\times \cdots \times[d_p]$};

    \draw[qedge] (h0) -- (h1) node [midway,above=-0.12cm,sloped] {\textcolor{black}{$\eta_{1,b}$}}; 
    \draw[qedge] (h0) -- (h2);
    \draw[qedge] (h0) -- (h3) node [midway,above=-0.12cm,sloped] {\textcolor{black}{$\eta_{K_1,b}$}};

    \draw[qedge] (h1) -- (v1) node [midway,above=-0.12cm,sloped] {\textcolor{black}{$\beta_{1,1,c}$}}; 
    
    \draw[qedge] (h1) -- (v4) node [midway,above=-0.12cm,sloped] {};  
    
    \draw[qedge] (h2) -- (v2) node [midway,above=-0.12cm,sloped] {}; 
    
    \draw[qedge] (h2) -- (v4) node [midway,above=-0.12cm,sloped] {}; 
    
    \draw[qedge] (h1) -- (v5) node [midway,above=-0.12cm,sloped] {}; 
    
    \draw[qedge] (h3) -- (v3) node [midway,above=-0.12cm,sloped] {}; 
    
    \draw[qedge] (h3) -- (v5) node [midway,above=-0.12cm,sloped] {}; 
    
    \draw[qedge] (h2) -- (v6) node [midway,above=-0.12cm,sloped] {}; 

    \draw[qedge] (h3) -- (v6) node [midway,above=-0.12cm,sloped] {\textcolor{black}{$\beta_{p,K_1,1}$}}; 
    
\end{tikzpicture}
}
\caption{Two-latent-layer model. Latent variables $z_i, \alpha_{i,1},\ldots,\alpha_{i,K_1}$ and observed variables $y_{i,1},\ldots, y_{i,p}$ are subject-specific, and model parameters $\GG^{(1)}, \bo\beta, \bo\tau, \bo\eta$ are population quantities.}
\label{fig-2layer}
\end{figure}

Fig.~\ref{fig-layers} gives a graphical visualization of the proposed model. 
To make clear the sparsity structure of this graph, we introduce binary matrices $\GG^{(1)},\GG^{(2)},\cdots,\GG^{(D-1)}$, termed as \textit{graphical matrices}, to encode the connecting patterns between consecutive layers. 
That is, the $\GG^{(d)}$ summarizes the parent-child graphical patterns between the two consecutive layers $d$ and $d+1$.
Specifically, matrix $\GG^{(1)}=(g^{(1)}_{j,k})$ has size $p\times K_1$ and matrix $\GG^{(m)}=(g^{(m)}_{k,k'})$ has size $K_{m-1}\times K_m$ for $m=2,\ldots, D-1$, with entries
\begin{align*}
	&g^{(1)}_{j,k} = 1,~~~\text{if}~~\alpha^{(1)}_k~~\text{is a parent of}~~y_j;\\
	&g^{(m)}_{k,k'} = 1,~~\text{if}~~\alpha^{(m)}_k~~\text{is a parent of}~~\alpha^{(m-1)}_{k'},~~ 2\leq m\leq D-1.
\end{align*}
Each variable in the graph is subject-specific, {implying that all the \textit{circles} in Fig. \ref{fig-layers} represent subject-specific quantities.}
Namely, if there are $n$ subjects in the sample, each of them would have its own realizations of $\bo y$ and $\aaa^{(d)}$s.
The proposed directed acyclic graph is not necessarily a tree, as shown in Fig.~\ref{fig-layers}. That is, each variable can have multiple parent variables in the layer above it, while in a directed tree each variable can only have one parent.
{As a simpler illustration, we also provide a two-latent-layer Bayesian Pyramid in Fig.  \ref{fig-2layer}, which features a simpler yet still quite expressive architecture. 
For example, we can specify the conditional distribution of each observed $y_{j}$ using a multinomial or binomial logistic model with its parent variables as linear predictors, and specify the distribution of $\aaa$ given $z$ using a latent class model. Namely, for a random subject $i$,
\begin{align}\notag
\mathbb P(y_{i,j}=c\mid \aaa^{(1)}_i=\aaa) 
    &=\frac{ \exp\left(\beta_{j,c,0} + \sum_{k=1}^{K_1} \beta_{j,c,k} g^{(1)}_{j,k}\alpha_k \right)}{ \sum_{m=1}^{d_j} \exp\left(\beta_{j,m,0} + \sum_{k=1}^{K_1} \beta_{j,m,k} g^{(1)}_{j,k}\alpha_k \right)},
	~~ j\in[p],~ c \in [d_j],~  \aaa\in\{0,1\}^{K_1};
	\\ 
	\label{eq-2layer}
    \mathbb P(\aaa^{(1)}_i=\aaa) 
    &= \sum_{b=1}^B \mathbb P(z_i=b) \mathbb P(\aaa^{(1)}_i=\aaa\mid z_i=b) 
    = \sum_{b=1}^B \tau_b \prod_{k=1}^{K_1} \eta_{k,b}^{\alpha_k} (1-\eta_{k,b})^{1-\alpha_k}.
\end{align}
Later in Section \ref{sec-method}, we will focus on the two-latent-layer model when describing the estimation methodology; please refer to that part for further details.}

{We provide some discussions on the conceptual motivation for the proposed multilayer model. Intuitively, for each subject $i$, the $z_i\in[B]$ in the deepest layer of the pyramid can encode some coarse-grained latent categories of subjects, while the vector $\aaa^{(1)}_i\in\{0,1\}^{K_1}$ encodes more fine-grained latent features of subjects. 
The hierarchical multilayer structure is conceptually appealing as it can provide latent representations of data in \textit{multiple resolutions}, where there are nonlinear compositions between different latent resolutions that boost model flexibility.
The model \eqref{eq-2layer} for generating $\aaa^{(1)}$ given $z$ is a nonlinear latent class model (LCM). Hence $z$ cannot be viewed as a linear projection of the $\aaa^{(1)}$-layer; rather, the nonlinear LCM (with a deeper variable $z$ and conditional independence of $\alpha_k$'s given $z$) can encode and explain rich and complex dependencies between the variables $\alpha^{(1)},\ldots,\alpha^{(1)}_{K-1}$ using parsimonious parametrizations.}


\subsection{Reformulating the Bayesian Pyramid as a Constrained Latent Class Model}\label{sec-sub-spec}
{In this subsection, we reveal an interesting equivalence relationship between our Bayesian Pyramid models and Latent Class Models \citep[LCMs][]{goodman1974lcm} with certain equality constraints. Such equivalence will pave the way for investigating the identifiability of Bayesian Pyramids.}
The traditional latent class model (LCM) \citep{lazarsfeld1950,goodman1974lcm} posits the following joint distribution of $\bo y_i$ for a random subject $i$,
\begin{align}\label{eq-lca}
	\mathbb P(y_{i,1}=c_1,\ldots,y_{i,p}=c_p) = 
	\sum_{h=1}^H \nu_{h}\prod_{j=1}^p 
	  \lambda^{(j)}_{c_j,h},
	  \quad \forall c_j\in[d_j],~j\in[p].
\end{align}
The above equation specifies a finite mixture model with one discrete latent variable having $H$ categories (i.e., $H$ latent classes). 
In particular, given a latent variable $z\in[H]$, $\nu_h = \mathbb P(z = h)$ denotes the probability of $z$ belonging to the $h$th latent class,  and $\lambda^{(j)}_{c_j,h} = \mathbb P(y_j=c_j\mid z_j=h)$ denotes the conditional probability of response $c_j$ for variable $y_j$ given the latent class membership $h$.
Expression \eqref{eq-lca} implies the $p$ observed variables $y_1,\ldots,y_p$ are conditionally independent given the latent $z$.
In the Bayesian literature, the model in \cite{dunson2009} is a nonparametric generalization of \eqref{eq-lca}, which allows an infinite number of latent classes.

Latent class modeling under \eqref{eq-lca} is widely used in social and biomedical sciences, where researchers often hope to infer subpopulations of individuals having different profiles \citep{collins2009lca}. However, the overly-simplistic form of \eqref{eq-lca} can lead to poor performance in inferring distinct and interpretable subpopulations.  In particular, the model assumes that individuals in different subpopulations have completely different probabilities $\lambda^{(j)}_{c_j,h}$ for all $c_j \in [d_j]$ and $j \in [p]$, and conditionally on subpopulation membership all the variables are independent.  
These restrictions can force the number of classes to increase in order to provide an adequate fit to the data, which can degrade interpretability of a plain latent class model.


We introduce some notation before proceeding. 
Denote a $p\times H$ all-one matrix by $\mathbf 1_{p\times H}$.
For a matrix $\mathbf S$ with $p$ rows and a set $\mc A \subseteq [p]$, denote by $\mathbf S_{\mc A,\bcolon}$ the submatrix of $\mathbf S$ consisting of rows indexed in $\mc A$.
{Consider} a family of constrained latent class models, which enable learning of a potentially large number of interpretable, identifiable, and diverse latent classes.
A key idea is sharing of parameters within certain latent classes for each observed variable.
We introduce a binary \textit{constraint matrix} $\mathbf S=(S_{j,h})$ of size $p\times H$, which has rows indexed by the $p$ observed variables and columns indexed by the $H$ latent classes. 
The binary entry $S_{j,h}$ indicates whether the conditional probability table $\lambda^{(j)}_{1:d_j,h}=(\lambda^{(j)}_{1,h},\ldots,\lambda^{(j)}_{d_j,h})$ is free or instead equal to some unknown baseline.
Specifically, if $S_{j,h}=1$ then $\lambda^{(j)}_{1:d_j,h}$ is free; while for those latent classes $h\in[H]$ such that $S_{j,h}=0$, their conditional probability tables $\lambda^{(j)}_{1:d_j,h}$'s are constrained to be equal.
Hence, $\mathbf S$ puts the following equality constraints on the parameters of \eqref{eq-lca},
\begin{align}\label{eq-equa}
	~\text{ if }~
	S_{j,h_1}=S_{j,h_2}=0,
	~\text{ then }~
	\lambda^{(j)}_{1:d_j,\,h_1} = \lambda^{(j)}_{1:d_j,\,h_2}.
\end{align}
We also enforce a natural inequality constraint for identifiability,
\begin{align}\label{eq-neq}
	~\text{ if }~
	S_{j,h_1} \neq S_{j,h_2},
	~\text{ then }~
	\lambda^{(j)}_{c_j,h_1} \neq \lambda^{(j)}_{c_j,h_2}.
\end{align}
If $\mathbf S=\mathbf 1_{p\times H}$ then there are no active constraints and the original latent class model \eqref{eq-lca} is recovered.
We generally denote the conditional probability parameters by $\bo\Lambda=\{\lambda^{(j)}_{c_j, h};\,j\in[p],c_j\in[d_j],h\in[H]\}$ where $\lambda_{c_j, h} := \lambda_{c_j, 0}$ if $S_{j,h}=0$.
{As will be revealed soon, such constrained latent class models are related to our proposed Bayesian Pyramids via a neat mathematical transformation.}

Viewed from a different perspective, a latent class model \eqref{eq-lca} specifies a decomposition of the $p$-way probability tensor $\bo\Pi = (\pi_{c_1\ldots c_p})$, where $\pi_{c_1\cdots c_p} = \mathbb P(y_1= c_1,\ldots,y_p=c_p \mid \LLambda,\mb S,\nnu)$. This corresponds to the PARAFAC/CANDECOMP (CP) decomposition in the tensor literature \citep{koldabader2009}, which can be used to factorize general real-valued tensors, while our focus is on probability tensors.
The proposed equality constraint \eqref{eq-equa} induces a family of constrained CP decompositions.  

This family is connected with the sparse CP decomposition of \cite{zhoudunson2015}, with both having  equality constraints summarized by a $p\times H$ binary matrix. 
However, \cite{zhoudunson2015} encourage different observed variables to share parameters, while our proposed model encourages different latent classes to share parameters through \eqref{eq-equa}.
We have the following proposition linking the proposed multilayer Bayesian Pyramid to the constrained latent class model under equality constraint \eqref{eq-equa}.
For two vectors $\bo a, \bo b$ of the same length $M$, denote $\bo a\succeq\bo b$ if $a_i\geq b_i$ for all $i\in[M]$.
Denote by $\mathbbm{1}(\bcdot)$ the binary indicator function, which equals one if the statement inside is true and zero otherwise.

\begin{proposition}\label{prop-relation} 
    Consider the multilayer Bayesian Pyramid with binary graphical matrices $\GG^{(1)},\GG^{(2)},\cdots,\GG^{(D-1)}$. 
    \begin{itemize}
    \item[(a)]
    
	In marginalizing out all the latent variables \textbf{except} $\aaa^{(1)}$, the distribution of $\bo y$ is a constrained latent class model with $2^{K_1}$ latent classes, where each latent class can be characterized as one configuration of the $K_1$-dimensional binary vector $\aaa^{(1)}\in\{0,1\}^{K_1}$.
	The corresponding $p\times 2^{K_1}$ constraint matrix $\mb S^{(1)}$ is determined by the bipartite graph structure between the $\aaa^{(1)}$-layer and the $\bo y$-layer, with entries being
	\begin{align}\label{eq-relation} 
		S^{(1)}_{j,\aaa^{(1)}}
		&= 1-\mathbbm{1}\left(\alpha^{(1)}_k\geq \GG^{(1)}_{j,k}~\normalfont{\text{for all}}~k = 1,\ldots,K_1\right)\\
		\notag
	    &= 1-\mathbbm{1}\left(\aaa^{(1)}\succeq \GG^{(1)}_{j,\bcolon}\right), \quad
	    j\in[p],~~
	    \aaa^{(1)}\in\{0,1\}^{K_1}.
	\end{align}
	
	\item[(b)]
	Further, in considering the distribution of $\aaa^{(m)}\in\{0,1\}^{K_m}$ and marginalizing out all the latent variables deeper than $\aaa^{(m)}$ \textbf{except} $\aaa^{(m+1)}$, the distribution of $\aaa^{(m)}$ is also a constrained latent class model with $2^{K_m+1}$ latent classes, where each latent class is characterized as one configuration of the $K_{m+1}$-dimensional binary vector $\aaa^{(m+1)}\in\{0,1\}^{K_{m+1}}$.
	Its corresponding constraint matrix $\mb S^{(m)}$ is determined by the bipartite graph structure between the $m$-th and the $(m+1)$-th latent layers, with entries being
	\begin{align}
		\label{eq-relation2}
		S^{(m+1)}_{k,\, \aaa^{(m+1)}}
		&
		 = 1-\mathbbm{1}\left( \aaa^{(m+1)} \succeq \GG^{(m+1)}_{k,\bcolon} \right),
		 \quad
		 k\in[K_m],~~
		 \aaa^{(m+1)} \in \{0,1\}^{K_{m+1}}.
	\end{align}
	\end{itemize}
\end{proposition}

We present a toy example to illustrate Proposition \ref{prop-relation} and discuss its implications.
\begin{example}\label{exp-toy}
	Consider a multilayer Bayesian Pyramid with $p=6$ and $K_1 = 3$, with the graph between $\aaa^{(1)}$ and $\bo y$ displayed in Fig.~\ref{fig-2layer}. 
	The $6\times 3$ graphical matrix $\GG^{(1)}$ is presented below. 
	Proposition \ref{prop-relation}(a) states that there is a $p \times 2^{K_1}$ constraint matrix $\mb S^{(1)}$ taking the  form,
	
	\vspace{-6mm}
	{\singlespacing
	$$
		\GG^{(1)} = 
		\begin{pmatrix}
			1 & 0 & 0 \\
			0 & 1 & 0 \\
			0 & 0 & 1 \\
			1 & 1 & 0 \\
			1 & 0 & 1 \\
			0 & 1 & 1 \\
		\end{pmatrix},~~
		\mathbf S^{(1)} = 
		\begin{blockarray}{cccccccc}
			(000) & (100) & (010) & (001) & (110) & (101) & (011) & (111)\\
			\begin{block}{(cccccccc)}
			 1     & 0     & 1     & 1     & 0     & 0     & 1     & 0 \\
			 1     & 1     & 0     & 1     & 0     & 1     & 0     & 0 \\
			 1     & 1     & 1     & 0     & 1     & 0     & 0     & 0 \\
			 1     & 1     & 1     & 1     & 0     & 1     & 1     & 0 \\
			 1     & 1     & 1     & 1     & 1     & 0     & 1     & 0 \\ 
			 1     & 1     & 1     & 1     & 1     & 1     & 0     & 0 \\
		    \end{block}
		\end{blockarray}\,.
	$$
	}
    Entries of $\mb S^{(1)}$ are determined according to \eqref{eq-relation}, for example, $S^{(1)}_{1,(000)} = 1-\mathbbm{1}((000)\succeq \GG^{(1)}_{1,\bcolon}) = 1-\mathbbm{1}((000)\succeq(100)) = 1$; and $S^{(1)}_{6,(111)} = 1-\mathbbm{1}((111)\succeq \GG^{(1)}_{6,\bcolon}) = 1-\mathbbm{1}((111)\succeq(011)) = 0$.
    Each column of the constraint matrix $\mb S^{(1)}$ is indexed by a latent class characterized by a configuration of a $K_1$-dimensional binary vector.
	This implies that if only considering the first latent layer of variables $\aaa^{(1)}$, all the subjects are naturally divided into $2^{K_1}$ latent classes, each endowed with a binary pattern.
\end{example}

Proposition \ref{prop-relation} gives a nice structural characterization of multilayer Bayesian Pyramids. 
This characterization is achieved by relating the multilayer sparse graph to the constrained latent class model in a layer-wise manner.
This proposition provides a basis for investigating identifiability of multilayer Bayesian Pyramids; {see details in Section \ref{sec-id}.}

\subsection{Connections to Existing Models and Studies}\label{sec-connect}
We next briefly review connections between Bayesian Pyramids and existing models.
In educational measurement research, \cite{haertel1989rlcm} first used the term {restricted} latent class models.
Further developments along this line in the psychometrics literature led to a popular family of cognitive diagnosis models \citep{rupp2008review, dela2011gdina, von2019handbook}.
{These models are essentially binary latent skill models where each subject is endowed with a $K$-dimensional latent skill vector $\aaa\in\{0,1\}^K$ indicating the mastery/deficiency statuses of $K$ skills, and each test item $j$ is designed to measure a certain configuration of skills, summarized by a loading vector $\bo q_j\in\{0,1\}^K$. The matrix $\mathbf Q \in\{0,1\}^{J\times K}$ collecting all of the $J$ skill loading vectors $\bo q_1,\ldots,\bo q_J$ as row vectors is often pre-specified by educational experts.
The observed data for each subject is a $J$-dimensional binary vector $\bo r \in\{0,1\}^J$, indicating the correct/wrong answers to $J$ test questions in the assessment.
Recently, there have been emerging studies on the identifiability and estimation of such cognitive diagnosis models  \citep[e.g.,][]{chen2015qmat, xu2017, fang2019cdm, gu2019learning, gu2020partial, chen2020cdm}.
However, to our knowledge, there have not been works that model multilayer (i.e., deep) latent structure behind the data and investigate identifiability in such scenarios.}

Bayesian Pyramids are also related to deep belief networks \citep[DBNs,][]{hinton2006dbn}, 
sum-product networks \citep[SPNs,][]{poon2011spn} and latent tree graphical models \citep[LTMs,][]{mourad2013ltm} in the machine learning literature.
{DBNs have undirected edges between the deepest two layers designed based on computational considerations \citep{hinton2009dbn}, while a Bayesian Pyramid is a fully generative directed graphical model (Bayesian network) with all the edges pointing top down.
Such generative modeling is naturally motivated by the identifiability considerations and also provides a full description of the data generating process.
Also, DBNs in their popular form are models for multivariate binary data, feature a fully-connected graph structure, and use logistic link functions between layers.  In contrast, Bayesian Pyramids 
accommodate general multivariate categorical data, and allow flexible forms of layerwise conditional distributions and sparse connections between consecutive layers.} 
An SPN is a rooted directed acyclic graph consisting of sum nodes and product nodes.
\cite{zhao2015relate} show that an SPN can be converted to a bipartite Bayesian network, where the number of discrete latent variables equals the number of sum nodes in the SPN.
Our model is more general in that in addition to modeling a bipartite network between the latent layer and the observed layer, we further model the dependence of the latent variables instead of assuming them independent.
 LTMs are a special case of our model \eqref{eq-2layer} because, while all the variables in an LTM form a tree, we allow for more general DAGs beyond trees.
Although the above models are extremely popular, identifiability has received essentially no attention; an exception is  
\cite{zwiernik2018ltm}, which discussed identifiability for relatively simple LTMs.

Our two-latent-layer Bayesian Pyramid shares a similar structure with the nonparametric latent feature models in \cite{doshi2009correlated}.
Both works consider a mixture of binary latent feature models, with each data point associated with both a deep latent cluster ($z$ in our notation) and a binary vector of latent features ($\aaa^{(1)}$ in our notation). 
One distinction is that we adopt very flexible probabilistic distributions (see Examples 2 and 3) as conditional distributions in the DAG, while \cite{doshi2009correlated} posits that the latent features are a deterministic function of the products of $z$ and $\aaa^{(1)}$. 
In addition, Bayesian Pyramids are directly inspired by identifiability considerations, and in the next section, we provide explicit conditions that guarantee the model is identifiable -- not only in the two-latent-layer architecture similar to \cite{doshi2009correlated} but also in general deep latent hierarchies with $\aaa^{(2)},\ldots, \aaa^{(D-1)}, z$. Interestingly, \cite{doshi2009correlated} conjecture heuristically that their specification ``likely resolves the identifiability issues.''

\section{Identifiability and Constrained Latent Class Structure behind Bayesian Pyramids}\label{sec-id}

\subsection{Identifiability of the Constrained Latent Class Model and Posterior Consistency}
\label{sec-sub-id}

In Section \ref{sec-layers} we proposed a new class of multilayer latent variable models deemed Bayesian Pyramids, and 
showed that these models can be formulated as a type of constrained latent class model (CLCM)
defined in \eqref{eq-lca}-\eqref{eq-equa}.
In this section, we study theoretical properties of model \eqref{eq-lca}-\eqref{eq-equa}.

The classic latent class model in \eqref{eq-lca} was shown by \cite{gyllenberg1994non} to be not strictly identifiable. Strict identifiability generally requires one to establish a parameterization in which the parameters can be expressed as one-to-one functions of observables.  As a weaker notion, generic identifiability requires that the map is one-to-one except on a Lesbesgue measure zero subset of the parameter space.
In a seminal paper, \cite{allman2009} leveraged Kruskal's Theorem \citep{kruskal1977} to show generic identifiability for an {unconstrained} latent class model.
{However, \cite{allman2009}'s approach is not sufficient for establishing identifiability in constrained LCMs or Bayesian Pyramids, due to the complex parameter constraints in these models.
Indeed, \cite{allman2009}'s generic identifiability results imply that in the latent class model with an \textit{unconstrained} parameter space, there exists a measure-zero subset of parameters where identifiability breaks down. 
In constrained LCMs, the equality constraints \eqref{eq-equa} exactly enforce parameters to fall into a measure-zero subset. In Bayesian Pyramids, these equality constraints arise from between-layer potentially sparse graphs.
So without a careful and thorough investigation into the relationship between the parameter constraints and the graphical structure, Kruskal's Theorem is not directly applicable to investigating the identifiability of constrained LCMs or Bayesian Pyramids.}

Below we establish strict identifiability of model 
\eqref{eq-lca}-\eqref{eq-equa}, by carefully examining the algebraic structure imposed by the constraint matrix $\mathbf S$.
We first introduce some notation.
Denote by $\otimes$ the Kronecker product of matrices and by $\odot$ the Khatri-Rao product of matrices. 
In particular, consider matrices $\mathbf A=(a_{i,j})\in\mathbb R^{m\times r}$, $\mathbf B=(b_{i,j})\in\mathbb R^{s\times t}$; 
and matrices $\mathbf C=(c_{i,j})=(\bo c_{\bcolon,1}\mid\cdots\mid\bo c_{\bcolon,k})\in\mathbb R^{n\times k}$,
$\mathbf D=(d_{i,j})=(\bo d_{\bcolon,1}\mid\cdots\mid\bo d_{\bcolon,k})\in\mathbb R^{\ell\times k}$, then there are $\mathbf A\otimes \mathbf B \in\mathbb R^{ms\times rt}$ and $\mathbf C\odot \mathbf D \in\mathbb R^{n \ell\times k}$ with
\begin{align*}
	\mathbf A\otimes \mathbf B
	=
	\begin{pmatrix}
		a_{1,1}\mathbf B & \cdots & a_{1,r}\mathbf B\\
		\vdots & \vdots & \vdots \\
		a_{m,1}\mathbf B & \cdots & a_{m,r}\mathbf B
	\end{pmatrix},
	\qquad
	\mathbf C\odot \mathbf D
	=
	\begin{pmatrix}
		\bo c_{\bcolon,1}\otimes\bo d_{\bcolon,1}
		\mid \cdots \mid
		\bo c_{\bcolon,k}\otimes\bo d_{\bcolon,k}
	\end{pmatrix}.
\end{align*}
The above definitions show the Khatri-Rao product is a column-wise Kronecker product; see more in \cite{koldabader2009}. 
We first establish the following technical proposition, which is useful for the later theorem on identifiability.

\begin{proposition}\label{prop-kr-gen}
	For the constrained latent class model, define the following $p$ parameter matrices subject to constraints \eqref{eq-equa} and \eqref{eq-neq} with some constraint matrix $\mathbf S$,
	\begin{align*}
		\bo\Lambda^{(j)} 
		&= 
		\begin{pmatrix}
			\lambda^{(j)}_{1,1} & \lambda^{(j)}_{1,2} & \cdots & \lambda^{(j)}_{1,H} \\
			\vdots & \vdots & \vdots & \vdots \\
			\lambda^{(j)}_{d_j,1} & \lambda^{(j)}_{d_j,2} & \cdots & \lambda^{(j)}_{d_j,H}
		\end{pmatrix},
		\quad
		j=1,\ldots,p.
	\end{align*}
Denote the Khatri-Rao product of the above $p$ matrices by $\mathbf K = \odot_{j=1}^p \bo\Lambda^{(j)}$, which has size $\prod_{j=1}^p d_j \times H$. The following two conclusions hold.
\begin{itemize}
	\item[(a)] If the $H$ column vectors of the constraint matrix $\mathbf S$ are distinct, then $\mathbf K$ must have full column rank $H$.
	\item[(b)] If $\mathbf S$ contains identical column vectors, then  $\mathbf K$ can be rank-deficient.
	
\end{itemize}

\end{proposition}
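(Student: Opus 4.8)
The plan is to dispatch conclusion (b) quickly and to spend the real effort on (a). For (b), suppose two columns of $\mathbf S$, say indices $h_1$ and $h_2$, coincide, so that $S_{j,h_1}=S_{j,h_2}$ for every $j$. Then the inequality constraint \eqref{eq-neq} is never triggered between these two classes, so I am free to pick the unconstrained conditional tables so that $\boldsymbol\lambda^{(j)}_{\bcolon,h_1}=\boldsymbol\lambda^{(j)}_{\bcolon,h_2}$ for all $j$ (the $S_{j,\cdot}=0$ positions are already forced equal by \eqref{eq-equa}). For such a valid choice the $h_1$-th and $h_2$-th columns of $\mathbf K$ are identical Kronecker products, so $\mathbf K$ has a repeated column and rank at most $k-1$; this exhibits the asserted rank deficiency.

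For (a) I would identify the $h$-th column of $\mathbf K$ with the rank-one tensor $\boldsymbol\lambda^{(1)}_{\bcolon,h}\otimes\cdots\otimes\boldsymbol\lambda^{(p)}_{\bcolon,h}$, so that full column rank of $\mathbf K$ is equivalent to linear independence of these $k$ rank-one tensors, and prove the latter by induction on $p$. Two model-specific facts drive the argument, neither of which relies on genericity: each column $\boldsymbol\lambda^{(j)}_{\bcolon,h}$ is a probability vector, whence $\langle \mathbf 1_{d_j},\boldsymbol\lambda^{(j)}_{\bcolon,h}\rangle=1$; and by \eqref{eq-neq} a free column ($S_{j,h}=1$) differs from the baseline column shared by the $S_{j,\cdot}=0$ classes, so their difference is a nonzero vector lying in $\mathbf 1_{d_j}^{\perp}$. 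The base case $p=1$ is immediate, since distinct columns of $\mathbf S$ force $k\le 2$ and two distinct probability vectors are linearly independent.

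For the inductive step I remove variable $p$, split the classes into $H_0=\{h:S_{p,h}=0\}$ and $H_1=\{h:S_{p,h}=1\}$, and contract the last tensor mode against two vectors. Writing $\boldsymbol u_h$ for the $h$-th column of the reduced product $\odot_{j=1}^{p-1}\boldsymbol\Lambda^{(j)}$ and assuming $\sum_h a_h\,\boldsymbol u_h\otimes\boldsymbol\lambda^{(p)}_{\bcolon,h}=0$, contraction against $\mathbf 1_{d_p}$ uses the normalization to give $\sum_h a_h\boldsymbol u_h=0$, while contraction against a vector $\mathbf v$ separating the baseline from every free column gives $\sum_h a_h r_h\boldsymbol u_h=0$, where $r_h$ equals a common value $r_0$ on $H_0$ and a value $\neq r_0$ on $H_1$. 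Such a $\mathbf v$ exists because one only needs to avoid the finitely many hyperplanes $\langle\mathbf v,\boldsymbol\lambda^{(p)}_{\bcolon,h}-\boldsymbol\beta_0\rangle=0$ inside $\mathbf 1_{d_p}^{\perp}$. Eliminating $r_0$ between the two identities leaves $\sum_{h\in H_1}a_h(r_h-r_0)\boldsymbol u_h=0$ with every $r_h-r_0\neq 0$; since any two classes in $H_1$ agree in row $p$, distinctness of their full $\mathbf S$-columns forces them to differ within rows $1,\ldots,p-1$, so $\mathbf S_{[p-1],H_1}$ again has distinct columns and the inductive hypothesis gives $a_h=0$ for $h\in H_1$. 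Feeding this back into $\sum_h a_h\boldsymbol u_h=0$ and repeating the same observation for $H_0$ forces $a_h=0$ for $h\in H_0$; the degenerate cases where $H_0$ or $H_1$ is empty are only easier.

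The step I expect to be the main obstacle is not any single contraction but the combinatorial bookkeeping that closes the induction. Each matrix $\boldsymbol\Lambda^{(j)}$ is individually very rank-deficient, because \eqref{eq-equa} repeats the baseline column across all $S_{j,\cdot}=0$ classes, so full rank of $\mathbf K$ must be extracted entirely from the distinctness of the binary patterns across the $p$ modes. The crux is therefore to verify that this distinctness is inherited by both sub-blocks $H_0$ and $H_1$ at every stage, and to arrange the two decoupling contractions—the all-ones normalization and the separating vector $\mathbf v$ supplied by \eqref{eq-neq}—so that the removed mode can be peeled off without destroying either property.
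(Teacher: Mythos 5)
Your proposal is correct. Part (b) coincides with the paper's own argument: both exhibit valid parameters in which the free tables of two classes sharing an $\mathbf{S}$-column are chosen equal, so that $\mathbf{K}$ has a repeated column. For part (a), however, you take a genuinely different route. The paper argues globally and constructively: after noting (via two auxiliary Khatri--Rao identities, proved as separate lemmas) that subtracting the baseline table from every column of every $\boldsymbol\Lambda^{(j)}$ amounts to left-multiplying $\mathbf{K}$ by an invertible matrix, it orders the columns of $\mathbf{S}$ lexicographically and writes down $k$ explicit response patterns (taking $y_j=d_j$ exactly where $S_{j,\ell}=0$) whose rows form a $k\times k$ submatrix that is lower triangular --- triangularity coming from the lexicographic order, the nonzero diagonal from \eqref{eq-neq} --- hence invertible. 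You instead induct on $p$, peeling off the last mode with two contractions: one against $\mathbf{1}_{d_p}$ (using that columns are probability vectors) and one against a separating vector supplied by \eqref{eq-neq}; the combinatorial heart of your argument --- that deleting row $p$ preserves distinctness of $\mathbf{S}$-columns within each of the blocks $H_0$ and $H_1$, because each block is constant on the deleted row --- is exactly what lets the inductive hypothesis close both blocks, and your treatment of the degenerate cases and of the existence of the separating vector (finitely many proper hyperplanes cannot cover the space) is sound. Both proofs run on the same two structural facts (normalization, and entrywise separation of free columns from the baseline), but yours is more self-contained, avoiding the lexicographic bookkeeping and the two auxiliary product lemmas; the paper's construction, on the other hand, produces an explicit invertible $k\times k$ minor, a concrete certificate of rank of the kind its generic-identifiability argument (Theorem \ref{thm-genid}) later manipulates as polynomials in the parameters.
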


\begin{remark}\label{rmk-prop1}
Proposition \ref{prop-kr-gen} implies that $\mathbf S$ having distinct columns is sufficient (in part (a)) and almost necessary (in part (b)) for the Khatri-Rao product $\mathbf K = \odot_{j=1}^p \bo\Lambda^{(j)}$ to be full rank.
To see the ``almost necessary'' part, consider a special case where besides constraint \eqref{eq-equa} that $\lambda^{(j)}_{1:d_j,\,h_1}=\lambda^{(j)}_{1:d_j,\,h_2}$ if $S_{j,h_1}=S_{j,h_2}=0$, the parameters also satisfy $\lambda^{(j)}_{1:d_j,\,h_1}=\lambda^{(j)}_{1:d_j,\,h_2}$ if $S_{j,h_1}=S_{j,h_2}=1$. 
In this case, our proof shows that whenever the binary matrix $\mathbf S$ contains identical column vectors in columns $h_1$ and $h_2$, the matrix $\mathbf K$ also contains identical column vectors in columns $h_1$ and $h_2$ and hence is surely rank-deficient. 
\end{remark}

In the Khatri-Rao product matrix $\mathbf K$ defined in Proposition \ref{prop-kr-gen}, each column characterizes the conditional distribution of vector $\bo y$ given a particular latent class.
Therefore, Proposition \ref{prop-kr-gen} reveals an  nontrivial algebraic property: whether $\mathbf S\in\{0,1\}^{p\times H}$ has distinct column vectors is linked to whether the $H$ conditional distributions of $\bo y$ given each latent class are linearly independent.
The matrix $\mathbf S$ does not need to have full column rank in order to have distinct column vectors. For example, a $3\times 3$ matrix $\mathbf S$ with three columns being $(1,0,0)^\top$, $(0,1,1)^\top$, and $(1,1,1)^\top$ is rank-deficient but has distinct column vectors. Indeed, it is not hard to see that a binary matrix $\mathbf S$ with $m$ rows can have as many as $2^m$ distinct column vectors.

Proposition \ref{prop-relation} reformulates a Bayesian Pyramid into a \textit{constrained LCM} with a constraint matrix $\mathbf S$, and then Proposition \ref{prop-kr-gen}  establishes a nontrivial algebraic property of such constrained LCMs. The Propositions \ref{prop-relation} and \ref{prop-kr-gen} together pave the way for the development of the identifiability theory of Bayesian Pyramids.
In particular, the sufficiency part of Proposition \ref{prop-kr-gen} uncovers a nontrivial inherent algebraic structure of the considered models. To prove Proposition \ref{prop-kr-gen}, we leveraged a novel proof technique, the marginal probability matrix, in order to find a sufficient and almost necessary condition for the Khatri-Rao product of conditional probability matrices to have full rank.
We anticipate that the conclusion in Proposition \ref{prop-kr-gen} can be useful in other graphical models with discrete latent variables, even beyond the Bayesian Pyramids considered in this paper. This is because graphical models involving discrete latent structure can often be formulated as a latent class model with equality constraints determined by the graph. Therefore the conclusion of Proposition \ref{prop-kr-gen} might be of independent interest.

Although the linear independence of $\mathbf K$'s columns itself does not lead to identifiability,
it provides a basis for investigating strict identifiability of our model.
We introduce the definition of strict identifiability under the current setup and then give the strict identifiability result.

\begin{definition}[Strict Identifiability]
	The constrained latent class model with \eqref{eq-equa} and \eqref{eq-neq} is said to be strictly identifiable if for any valid parameters $(\LLambda,\mathbf S,\nnu)$, the following equality holds if and only if $(\overline\LLambda, \overline{\mathbf S}, \overline\nnu)$ and $(\LLambda,\mathbf S,\nnu)$ are identical up to a latent class permutation:
	\begin{align}\label{eq-id}
		\mathbb P(\bo y = \bo c\mid \LLambda, \mathbf S, \nnu)
		=
		\mathbb P(\bo y = \bo c\mid \overline\LLambda, \overline{\mathbf S}, \overline\nnu),
		\quad
		\forall \bo c\in \times_{j=1}^p [d_j].
	\end{align}
\end{definition}

\begin{remark}
When the constraint matrix $\mb S$ is unknown and needs to be identified together with unknown continuous parameters $\bo\Lambda$, there is a trivial nonidentifiability issue that needs to be resolved.
To see this, continue to consider the special case mentioned in Remark \ref{rmk-prop1} where $\lambda^{(j)}_{1:d_j,\,h_1}=\lambda^{(j)}_{1:d_j,\,h_2}$ whenever $S_{j,h_1}=S_{j,h_2}$, then given a matrix $\mb S$ we can generally denote $\lambda^{(j)}_{1:d_j,\,h} =: \lambda^{(j)}_{1:d_j,+}$ if $S_{j,h} = 1$ and $\lambda^{(j)}_{1:d_j,\,h} =: \lambda^{(j)}_{1:d_j,-}$ if $S_{j,h} = 0$.
Then without further restrictions, the following alternative $(\tilde {\mb S}, \tilde{\bo\Lambda})$ will be indistinguishable from the true $({\mb S},\bo\Lambda)$, where 
$\tilde {\mb S} = \one_{p\times H} - {\mb S}$, $\tilde\lambda_{1:d_j,+} = \lambda^{(j)}_{1:d_j,-}$ and $\tilde\lambda_{1:d_j,-} = \lambda^{(j)}_{1:d_j,+}$.
One straightforward way to resolve such trivial nonidentifiability of $\mb S$ is to simply enforce that whenever $s_{j,h_1}>s_{j,h_2}$ the order of $\lambda_{j,c,h_1}$ and $\lambda_{j,c,h_2}$ is fixed for every possible category $c\in[d_j]$. In the following studies of identifiability, we always assume such orderings of $\bo\Lambda$ with respect to $\mb S$ has been fixed.
\end{remark}

For an arbitrary subset $\mathcal A\subseteq[p]$, denote by $\mathbf S_{\mathcal A,\bcolon}$ the submatrix of $\mathbf S$ that consists of those rows indexed by variables belonging to $\mathcal A$. The $\mathbf S_{\mathcal A,\bcolon}$ has size $|\mathcal A| \times H$.

\begin{theorem}[Strict Identifiability]\label{thm-suff}
	Consider the proposed constrained latent class model under \eqref{eq-equa} and \eqref{eq-neq} with true parameters $\bo\nu=\{\nu_h\}_{h\in[H]}$, $\bo\Lambda=\{\bo\Lambda^{(j)}\}_{j\in[p]}$, and $\mathbf S$. 
	Suppose there exists a partition of the $p$ variables $[p]=\mathcal A_1\cup \mathcal A_2\cup \mathcal A_3$ such that 
	\begin{itemize}
		\item[(a)] the submatrix $\mathbf S_{\mathcal A_i,\bcolon}$ has distinct column vectors for $i=1$ and $2$; and
		\item[(b)] for any $h_1\neq h_2\in[H]$, there is $\lambda^{(j)}_{c,h_1}\neq\lambda^{(j)}_{c,h_2}$ for some $j\in\mathcal A_3$ and some $c\in[d_j]$.
	\end{itemize}   
	Also suppose $\nu_h>0$ for each $h\in[H]$.
	Then $(\LLambda,\mathbf S,\nnu)$ are strictly identifiable up to a latent class permutation.
\end{theorem}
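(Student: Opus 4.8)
The plan is to recast the joint distribution as a three-way tensor and invoke Kruskal's uniqueness theorem, in the spirit of \cite{allman2009}, but exploiting the constraint matrix $\mathbf{S}$ through Proposition \ref{prop-kr-gen} so that the required rank conditions hold \emph{deterministically} rather than only generically. First I would use the partition $[p]=\mathcal{A}_1\cup\mathcal{A}_2\cup\mathcal{A}_3$ to collapse the $p$-way probability tensor $\bo\Pi$ into a three-way array $T=\sum_{h=1}^k \nu_h\,\bo a_h\otimes\bo b_h\otimes\bo c_h$, where $\bo a_h$, $\bo b_h$, $\bo c_h$ are the $h$th columns of the Khatri-Rao products $\mathbf{K}_1=\odot_{j\in\mathcal{A}_1}\bo\Lambda^{(j)}$, $\mathbf{K}_2=\odot_{j\in\mathcal{A}_2}\bo\Lambda^{(j)}$, and $\mathbf{K}_3=\odot_{j\in\mathcal{A}_3}\bo\Lambda^{(j)}$. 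Each such column is a Kronecker product of conditional probability vectors and hence has entries summing to one.

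Next I would verify Kruskal's inequality $k_1+k_2+k_3\geq 2k+2$, where $k_i$ is the Kruskal rank of $\mathbf{K}_i$. By hypothesis (a) the submatrices $\mathbf{S}_{\mathcal{A}_1,\bcolon}$ and $\mathbf{S}_{\mathcal{A}_2,\bcolon}$ have distinct columns, so Proposition \ref{prop-kr-gen}(a) gives that $\mathbf{K}_1$ and $\mathbf{K}_2$ have full column rank $k$, whence $k_1=k_2=k$. For the third block, hypothesis (b) guarantees that any two latent classes $h_1\neq h_2$ yield distinct columns $\bo c_{h_1}\neq\bo c_{h_2}$ of $\mathbf{K}_3$; since both sum to one they cannot be proportional, so every pair is linearly independent and $k_3\geq 2$. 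Therefore $k_1+k_2+k_3\geq k+k+2=2k+2$. Crucially, these rank statements hold for \emph{all} valid parameters satisfying \eqref{eq-equa}--\eqref{eq-neq}, not merely on a full-measure set, which is what upgrades generic to strict identifiability.

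With the inequality in hand, Kruskal's theorem implies that, after absorbing $\nu_h$ into the first factor, any decomposition of $T$ into at most $k$ rank-one terms --- in particular the one induced by a competing valid set $(\overline\LLambda,\overline{\mathbf S},\overline\nnu)$ with all positive weights --- coincides with ours up to a simultaneous permutation of the $k$ terms and a rescaling of the three factors. Because $\bo a_h,\bo b_h,\bo c_h$ and their barred counterparts all sum to one, the rescalings are forced to be trivial, and the weight $\nu_h$ is recovered as the sum of entries of the (unnormalized) first factor; this pins down $\nnu$ and the triples $(\bo a_h,\bo b_h,\bo c_h)$ up to permutation. Finally I would disentangle each Khatri-Rao column: since $\bo a_h=\otimes_{j\in\mathcal{A}_1}\bo\Lambda^{(j)}_{\bcolon,h}$ is a Kronecker product of probability vectors, each factor $\bo\Lambda^{(j)}_{\bcolon,h}$ is recovered uniquely by marginalization, and likewise over $\mathcal{A}_2$ and $\mathcal{A}_3$, determining all of $\bo\Lambda$. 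The matrix $\mathbf{S}$ is then read off from the recovered columns via \eqref{eq-equa}--\eqref{eq-neq}: within each row $j$, the entries with $S_{j,h}=0$ are exactly those whose column equals the common baseline, while \eqref{eq-neq} forces free columns to differ from it, so $\mathbf{S}$ is fixed up to the same permutation.

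I expect the main obstacle to be the second step --- establishing the rank conditions strictly rather than generically. The full column rank of the first two blocks is handed to us by Proposition \ref{prop-kr-gen}(a), but the claim that the third block has Kruskal rank at least two rests on combining the purely combinatorial distinctness of hypothesis (b) with the sum-to-one normalization of probability vectors; it is precisely this normalization that rules out proportional columns and secures pairwise linear independence. A secondary technical point is the careful bookkeeping of the permutation-and-scaling ambiguity in Kruskal's theorem and the marginalization that inverts the Khatri-Rao products, together with checking that the equality and inequality constraints indeed determine $\mathbf{S}$ uniquely.
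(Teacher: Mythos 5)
Your proposal follows essentially the same route as the paper's own proof: partition the variables into three blocks, use Proposition \ref{prop-kr-gen}(a) to get full column rank of the Khatri--Rao factors over $\mathcal{A}_1$ and $\mathcal{A}_2$, combine hypothesis (b) with the sum-to-one normalization of probability columns to get Kruskal rank at least two over $\mathcal{A}_3$, and then invoke Kruskal's theorem, with the normalization forcing the diagonal scalings to be trivial and marginalization recovering each $\boldsymbol{\Lambda}^{(j)}$ and $\boldsymbol{\nu}$. The only difference is cosmetic: you additionally spell out how $\mathbf{S}$ is read back from the recovered conditional probability tables, a step the paper's proof leaves implicit.
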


In the above theorem, the constraint matrix $\mathbf S$ is not assumed to be fixed and known. This implies that both the matrix 
$\mathbf S$ and the parameters can be uniquely identified from data.  
We next give a corollary of Theorem \ref{thm-suff}, which replaces the condition on parameters $\LLambda$ in part (b) with a slightly more stringent but also more transparent condition on the binary matrix $\mathbf S$.

\begin{corollary}\label{cor-three}
Consider the proposed constrained latent class model under \eqref{eq-equa} and \eqref{eq-neq}.
Suppose $\nu_h>0$ for each $h\in[H]$.
If there is a partition of the $p$ variables $[p]=\mathcal A_1\cup \mathcal A_2\cup \mathcal A_3$ such that each submatrix $\mathbf S_{\mathcal A_i,\bcolon}$ has distinct column vectors for $i=1,2,3$,
then parameters  $(\bo\nu,\bo\Lambda,\mathbf S)$ are strictly identifiable up to a latent class permutation.
\end{corollary}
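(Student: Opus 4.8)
The plan is to derive Corollary \ref{cor-three} as an immediate consequence of Theorem \ref{thm-suff}, by showing that the transparent hypothesis ``$\mathbf S_{\mathcal A_3,\bcolon}$ has distinct column vectors'' is already strong enough to force condition (b) of that theorem. Conditions (a) of the corollary for $i=1,2$ coincide verbatim with condition (a) of Theorem \ref{thm-suff}, and the assumption $\nu_h>0$ is shared, so the only work is to translate the combinatorial distinctness of the columns of $\mathbf S_{\mathcal A_3,\bcolon}$ into the parameter-separation statement demanded by part (b). The value of the corollary is precisely that it trades an opaque condition on $\bo\Lambda$ for a checkable condition on the binary matrix $\mathbf S$.

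First I would fix an arbitrary pair $h_1\neq h_2\in[k]$ and invoke the assumption that the columns of $\mathbf S_{\mathcal A_3,\bcolon}$ are pairwise distinct. Two distinct binary vectors must disagree in at least one coordinate, so there exists a variable $j\in\mathcal A_3$ with $S_{j,h_1}\neq S_{j,h_2}$. Here it is essential that distinctness is read at the level of the binary entries restricted to the rows indexed by $\mathcal A_3$, so that the disagreement index $j$ genuinely lies in $\mathcal A_3$; this is immediate from the definition of the submatrix $\mathbf S_{\mathcal A_3,\bcolon}$.

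Next I would apply the inequality constraint \eqref{eq-neq}, which is part of the model specification and hence holds for the true parameters. Whenever $S_{j,h_1}\neq S_{j,h_2}$, constraint \eqref{eq-neq} guarantees $\lambda^{(j)}_{c,h_1}\neq\lambda^{(j)}_{c,h_2}$ for every $c\in[d_j]$. Choosing any such $c$ then exhibits a variable $j\in\mathcal A_3$ and a response level $c$ at which the conditional probabilities for classes $h_1$ and $h_2$ differ, which is exactly condition (b) of Theorem \ref{thm-suff}. Since $h_1\neq h_2$ were arbitrary, condition (b) holds for all pairs, and all hypotheses of Theorem \ref{thm-suff} are verified; strict identifiability of $(\bo\nu,\bo\Lambda,\mathbf S)$ up to a latent class permutation follows.

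I do not anticipate a genuine obstacle, since the argument is a clean reduction rather than a fresh construction: the substantive content already resides in Theorem \ref{thm-suff}, and the corollary only repackages its hypothesis. The sole points requiring care are confirming that the disagreement coordinate lands in $\mathcal A_3$ (handled above) and that \eqref{eq-neq} delivers a strict inequality for the true parameters rather than merely generically, both of which are direct from the model definitions.
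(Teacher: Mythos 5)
Your proof is correct, and it reaches Corollary \ref{cor-three} by a somewhat different route than the paper. The paper's own proof re-enters the \emph{proof} of Theorem \ref{thm-suff}: it applies Proposition \ref{prop-kr-gen} to all three blocks to conclude that each $\KK^i=\KK(\{\bo\Lambda^{(j)}\}_{j\in\mathcal A_i})$, $i=1,2,3$, has full column rank, observes that a full-column-rank matrix with at least two columns has Kruskal rank at least two, and then notes that these are exactly the rank conditions under which the three-way Kruskal argument in that proof goes through. You instead treat Theorem \ref{thm-suff} as a black box and verify its stated hypothesis (b) directly: distinct columns of $\mathbf S_{\mathcal A_3,\bcolon}$ force a disagreeing entry $S_{j,h_1}\neq S_{j,h_2}$ for some $j\in\mathcal A_3$, and the inequality constraint \eqref{eq-neq} --- which is part of the model specification and hence holds for the true parameters --- upgrades this to $\lambda^{(j)}_{c,h_1}\neq\lambda^{(j)}_{c,h_2}$ for every $c\in[d_j]$, which is precisely condition (b). Your reduction is more elementary (purely combinatorial plus \eqref{eq-neq}, with no rank computation on the third block) and is robust in the sense that it depends only on the statement of Theorem \ref{thm-suff}, not on the internals of its proof; the paper's argument establishes the stronger fact that $\KK^3$ has full column rank, but that extra strength is never actually used. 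Both arguments are complete and yield the same conclusion.
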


The conditions of Corollary \ref{cor-three} are more easily checkable than those in Theorem \ref{thm-suff} because they only depend on the structure of the constraint matrix $\mathbf S$. 
It requires that after some column rearrangement, matrix $\mathbf S$ should vertically stack three submatrices, each of which has distinct column vectors.

The conclusions of Theorem \ref{thm-suff} and Corollary \ref{cor-three} both regard {strict identifiability}, which is the strongest possible conclusion on parameter identifiability up to label permutation.
If we consider the slightly weaker notion of {generic identifiability} as proposed in \cite{allman2009}, the conditions in Theorem \ref{thm-suff} and Corollary \ref{cor-three} can be relaxed.
Given a constraint matrix $\mathbf S$, denote the constrained parameter space for $(\nnu,\LLambda)$ by
\begin{align}\label{eq-ts}
\mathcal T^{\mathbf S} = \{(\nnu,\LLambda):\, \LLambda~\text{satisfies the constraints specified by}~\mathbf S\};	
\end{align}
and define the following subset of $\mathcal T_{\mathbf S}$ as
\begin{align}\label{eq-ns}
\mathcal N^{\mathbf S} = \{(\nnu,\LLambda)\in \mathcal T^{\mathbf S}:
&~ \exists~ (\overline\nnu,\overline\LLambda)~\text{satisfying the constraints specified by some}~\overline{\mathbf S}
\\ \notag
&~ \text{such that}~ \mathbb P(\yy\mid\nnu,\LLambda) = \mathbb P(\yy\mid\overline\nnu,\overline\LLambda)\}.	
\end{align}

With the above notation, the generic identifiability of the proposed constrained latent class model is defined as follows.

\begin{definition}[Generic Identifiability]\label{def-genid}
	Parameters $(\LLambda,\mathbf S, \nnu)$ are said to be generically identifiable if $\mathcal N_{\mathbf S}$ defined in \eqref{eq-ns} has measure zero with respect to the Lebesgue measure on $\mathcal T_{\mathbf S}$ defined in \eqref{eq-ts}.
\end{definition}

\begin{theorem}[Generic Identifiability]\label{thm-genid}
Consider the constrained latent class model under \eqref{eq-equa} and \eqref{eq-neq}.
Suppose for $i=1,2$, changing some entries of $S_{\mathcal A_i,\bcolon}$ from ``$\,1$'' to ``$\,0$'' yields an $\widetilde {\mathbf S}_{\mathcal A_i,\bcolon}$ having distinct columns. Also suppose for any $h_1\neq h_2\in[H]$, there is $\lambda^{(j)}_{h_1,c}\neq\lambda^{(j)}_{h_2,c}$ for some $j\in\mathcal A_3$ and some $c\in[d_j]$. 
Then $(\mathbf \Lambda, \mathbf S,\nnu)$ are generically identifiable.
\end{theorem}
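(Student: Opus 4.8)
The plan is to derive generic identifiability from the strict-identifiability machinery of Theorem \ref{thm-suff} by verifying that its two hypotheses hold for all $(\nnu,\LLambda)\in\mathcal T^{\mathbf S}$ outside a Lebesgue-null set. The role of condition (a) of Theorem \ref{thm-suff} enters only through Proposition \ref{prop-kr-gen}, where distinctness of the columns of $\mathbf S_{\mathcal A_i,\bcolon}$ guarantees that $\odot_{j\in\mathcal A_i}\bo\Lambda^{(j)}$ has full column rank $k$ for \emph{every} admissible parameter. For generic identifiability it therefore suffices to show that these Khatri--Rao products have full column rank for \emph{almost every} parameter, and that the $\mathcal A_3$-separation requirement (which is condition (b), identical to Theorem \ref{thm-suff}(b)) also holds almost everywhere; on the complement of the exceptional set the proof of Theorem \ref{thm-suff} then applies unchanged and yields strict identifiability up to permutation.

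The heart of the argument is a generic full-rank statement. Fix $i\in\{1,2\}$ and write $\mathbf K_{\mathcal A_i}=\odot_{j\in\mathcal A_i}\bo\Lambda^{(j)}$. The set of parameters at which $\mathbf K_{\mathcal A_i}$ fails to have column rank $k$ is the common zero locus of its $k\times k$ minors, each a polynomial in the free entries of $\LLambda$. Because $\mathcal T^{\mathbf S}$ is, in these free entries, an open subset of a product of probability simplices, any such minor that is not identically zero can vanish only on a Lebesgue-null set. To exhibit a nonvanishing minor I use the hypothesis: flipping some entries of $\mathbf S_{\mathcal A_i,\bcolon}$ from ``$1$'' to ``$0$'' produces $\widetilde{\mathbf S}_{\mathcal A_i,\bcolon}$ with distinct columns. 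Parameters respecting this more restrictive pattern, obtained by setting each flipped free column equal to its baseline, constitute a boundary point of $\mathcal T^{\mathbf S}$ at which Proposition \ref{prop-kr-gen}(a), applied with index set $\mathcal A_i$ in place of $[p]$, forces $\operatorname{rank}(\mathbf K_{\mathcal A_i})=k$ and hence some nonzero $k\times k$ minor. By continuity that minor stays nonzero on a neighborhood meeting $\mathcal T^{\mathbf S}$, so it is not identically zero there; consequently the rank-deficiency locus of $\mathbf K_{\mathcal A_i}$ is Lebesgue-null.

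It remains to absorb condition (b) and conclude. For each pair $h_1\neq h_2$, the requirement $\lambda^{(j)}_{c,h_1}\neq\lambda^{(j)}_{c,h_2}$ for some $j\in\mathcal A_3,\,c\in[d_j]$ fails only on the algebraic set where all these coordinates coincide, a proper subvariety (whenever $\mathbf S$ does not force equality across $\mathcal A_3$) and hence again Lebesgue-null. Taking the union of the two rank-deficiency loci for $i=1,2$ with the finitely many coincidence sets over pairs $h_1\neq h_2$ yields a null set $\mathcal B\subseteq\mathcal T^{\mathbf S}$. For every $(\nnu,\LLambda)\in\mathcal T^{\mathbf S}\setminus\mathcal B$, both full-rank conclusions and the separation condition (b) hold, so Theorem \ref{thm-suff} renders these parameters strictly identifiable; in particular no non-permutation competitor exists, so they lie outside $\mathcal N^{\mathbf S}$. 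Since Theorem \ref{thm-suff} already quantifies over all competing $(\overline\nnu,\overline\LLambda,\overline{\mathbf S})$, no further exceptional mass is introduced, and therefore $\mathcal N^{\mathbf S}\subseteq\mathcal B$ has measure zero, which is generic identifiability in the sense of Definition \ref{def-genid}.

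The main obstacle is the generic full-rank step, and two points there require care: confirming that the witness configuration built from $\widetilde{\mathbf S}_{\mathcal A_i,\bcolon}$ genuinely lies in the closure of $\mathcal T^{\mathbf S}$ and is approachable from its interior, so that the ``not identically zero implies measure zero'' dichotomy for the minor polynomials is legitimate; and handling the fact that the witness sits on the boundary, which I resolve through continuity of the minors rather than invoking irreducibility of an abstract variety. Everything else reduces mechanically to Proposition \ref{prop-kr-gen} and the already-established Theorem \ref{thm-suff}.
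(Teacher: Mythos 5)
Your proposal is correct and follows essentially the same route as the paper's own proof: both characterize failure of full column rank of $\odot_{j\in\mathcal A_i}\bo\Lambda^{(j)}$ as the common vanishing of its $k\times k$ minors (polynomials in the free parameters), invoke Proposition \ref{prop-kr-gen}(a) at a witness lying in the more constrained space $\Omega^{\widetilde{\mb S}}$ (a boundary configuration of $\mathcal T^{\mathbf S}$ obtained by collapsing the flipped columns to the baseline), transfer non-vanishing of a minor into $\mathcal T^{\mathbf S}$ by continuity, conclude that the exceptional set is a proper subvariety of Lebesgue measure zero, and finish by running the argument of Theorem \ref{thm-suff} off this null set. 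Your additional step of also absorbing failures of condition (b) into the null set is a minor refinement (the paper simply carries condition (b) as a standing hypothesis), not a different method.
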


\begin{remark}
	Note that altering some $S_{j,h}$ from one to zero corresponds to adding one more equality constraint that the distribution of the $j$-th variable given the $h$-th latent class is set to the baseline through \eqref{eq-equa}. 
	Therefore, Theorem \ref{thm-genid} intuitively implies that if enforcing more parameters in $\mc T$ to be equal can give rise to a strictly identifiable model, then the parameters that make the original model unidentifiable only occupy a negligible set in $\mc T$.
\end{remark}

Theorem \ref{thm-genid} relaxes the conditions on $\mathbf S$ for strict identifiability presented earlier. In particular, here the submatrices $\mathbf S_{\mc A_i,\bcolon}$ need not have distinct column vectors; rather, it would suffice if altering some entries of $\mathbf S_{\mc A_i,\bcolon}$ from one to zero yield distinct column vectors.
As pointed out by \cite{allman2009}, generic identifiability is often sufficient for real data analyses.

So far, we have focused on discussing model identifiability.
Next, we show that our identifiability results guarantee Bayesian posterior consistency under suitable priors.
Given a sample of size $n$, denote the observations by $\bo y_1,\ldots,\bo y_n$, which are $n$ vectors each of dimension $p$.
Recall that under \eqref{eq-lca}, the distribution of the vector $\bo y$ under the considered model can be denoted by a $p$-way probability tensor $\bo\Pi=(\pi_{c_1\cdots c_p})$.
When adopting a Bayesian approach, one can specify prior distributions for the parameters $\LLambda$, $\mb S$, and $\nnu$,
which induce a prior distribution for the probability tensor $\bo\Pi$. Within this context, we are now ready to state the following theorem.

\begin{theorem}[Posterior Consistency]
\label{thm-pos}
Denote the collection of model parameters by $\Theta=(\LLambda,\mb S,\nnu)$.
Suppose the prior distributions for the parameters $\LLambda$, $\mb S$, and $\nnu$ all have full support around the true values.
If the true latent structure $\mathbf S^0$ and model parameters $\LLambda^0$ satisfy the proposed strict identifiability conditions in Theorem \ref{thm-suff} or Corollary \ref{cor-three}, we have 
	\begin{align*}
		\mathbb P(\Theta\in  \mathcal N^c_{\epsilon}(\Theta^0) \mid \bo y_1,\ldots,\bo y_n)
		\to 0~\text{almost surely},
	\end{align*}
	where $\mathcal N^c_{\epsilon}(\Theta^0)$ is the complement of an $\epsilon$-neighborhood of the true parameters $\Theta^0$ in the parameter space.
\end{theorem}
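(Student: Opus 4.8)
The plan is to prove posterior consistency in two stages: first establish that the posterior for the observable distribution---the probability tensor $\bo\Pi$---concentrates around the truth $\bo\Pi^0 = \bo\Pi(\Theta^0)$, and then use strict identifiability to upgrade this to concentration of the posterior for $\Theta$ around $\Theta^0$ (modulo the label permutation). The reduction is natural here because the observations take values in the \emph{finite} set $\times_{j=1}^p[d_j]$, so the sampling distribution is completely encoded by $\bo\Pi$, the map $\Theta\mapsto\bo\Pi(\Theta)$ is a fixed multivariate polynomial (hence continuous), and on a finite outcome space the weak, total-variation, and Euclidean topologies on distributions coincide. Consistency in $\bo\Pi$ therefore has an unambiguous meaning and is exactly the object delivered by classical Bayesian asymptotics.

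For the first stage I would invoke Schwartz's theorem, whose two hypotheses are (i) prior positivity of Kullback--Leibler neighborhoods of $\bo\Pi^0$, and (ii) existence of uniformly exponentially consistent tests separating $\bo\Pi^0$ from the complement of each neighborhood. For (i): since $\bo\Pi(\cdot)$ is continuous and the prior on $\Theta=(\LLambda,\mb S,\nnu)$ has full support around $\Theta^0$ by assumption, the induced prior on $\bo\Pi$ charges every Euclidean neighborhood of $\bo\Pi^0$; for neighborhoods small enough that every $\bo\Pi$ in them is positive on the support of $\bo\Pi^0$, the divergence $\mathrm{KL}(\bo\Pi^0\,\|\,\bo\Pi)$ is finite and small, so KL neighborhoods receive positive prior mass. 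Condition (ii) is automatic on a finite outcome space: the empirical frequency vector $\widehat{\bo\Pi}_n$ is a bounded statistic, and tests that reject when $\|\widehat{\bo\Pi}_n-\bo\Pi^0\|$ exceeds a threshold are uniformly exponentially consistent by Hoeffding's inequality. Schwartz's theorem then yields, for every $\delta>0$,
\[
  \mathbb P\big(\|\bo\Pi-\bo\Pi^0\|>\delta \,\big|\, \bo y_1,\ldots,\bo y_n\big)\to 0\quad\text{almost surely.}
\]

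The second stage is where identifiability enters and is the main obstacle. I need a reverse modulus of continuity: for every $\epsilon>0$ there exists $\delta>0$ such that $\|\bo\Pi(\Theta)-\bo\Pi^0\|<\delta$ forces $\Theta\in\mathcal N_\epsilon(\Theta^0)$ after a suitable latent-class relabeling. Strict identifiability (Theorem~\ref{thm-suff} or Corollary~\ref{cor-three}) gives only that $\bo\Pi(\Theta)=\bo\Pi^0$ implies $\Theta=\Theta^0$ up to permutation; turning this injectivity into quantitative inverse continuity I would obtain by a compactness argument. The continuous parameters $(\LLambda,\nnu)$ range over a product of closed probability simplices---a compact set---while $\mb S$ ranges over the \emph{finite} set $\{0,1\}^{p\times k}$. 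Suppose the modulus claim failed; then there would exist $\epsilon>0$ and a sequence $\Theta_m$ with $\|\bo\Pi(\Theta_m)-\bo\Pi^0\|\to0$ yet $\Theta_m\notin\mathcal N_\epsilon(\Theta^0)$ under every permutation. Passing to a subsequence along which $\mb S$ is constant (possible by finiteness) and the continuous part converges (possible by compactness), the limit $\Theta^\ast$ satisfies $\bo\Pi(\Theta^\ast)=\bo\Pi^0$ by continuity, hence equals $\Theta^0$ up to permutation by strict identifiability---contradicting that every $\Theta_m$ stays $\epsilon$-far from all relabelings of $\Theta^0$. This supplies the required $\delta$, and combining with the first-stage display gives $\mathbb P(\Theta\in\mathcal N^c_\epsilon(\Theta^0)\mid \bo y_1,\ldots,\bo y_n)\le \mathbb P(\|\bo\Pi-\bo\Pi^0\|\ge\delta\mid\bo y_1,\ldots,\bo y_n)\to0$ almost surely.

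I expect the genuine difficulty to lie entirely in this transfer step, for two reasons. First, the permutation quotient must be handled carefully: $\mathcal N_\epsilon(\Theta^0)$ should be read as the $\epsilon$-neighborhood of the finite orbit of $\Theta^0$ under relabeling, and one must check that the compactness argument respects this equivalence rather than fixing a single labeling. Second, the joint identification of the discrete structure $\mb S$ together with $(\LLambda,\nnu)$ means the inverse map is continuous only on the stratum of parameters producing the full-rank algebraic configuration guaranteed by Proposition~\ref{prop-kr-gen}; the finiteness of $\{0,1\}^{p\times k}$ is what lets me treat $\mb S$ by passing to a constant-$\mb S$ subsequence, so that the delicate continuity analysis only ever runs on the continuous parameters. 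The remaining pieces---verifying the KL-support and testing conditions and assembling Schwartz's theorem---are routine in the finite-outcome setting.
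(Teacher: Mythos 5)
Your proposal follows essentially the same two-stage route as the paper's own proof: first, posterior concentration for the probability tensor $\bo\Pi$ on the finite outcome space (the paper obtains this from full prior support on the tensor via Theorem 2 of \cite{dunson2009}, where you invoke Schwartz's theorem explicitly), and second, a compactness-plus-strict-identifiability argument showing that every $\Theta$ outside $\mathcal N_{\epsilon}(\Theta^0)$ maps to a tensor at distance at least some $\delta>0$ from $\bo\Pi_{\Theta^0}$, so that tensor concentration forces parameter concentration. Your sequential subsequence argument (with the constant-$\mb S$ extraction) is just the sequential phrasing of the paper's claim that $\delta=\inf_{\Theta\in\mathcal N^c_{\epsilon}(\Theta^0)}\|\bo\Pi_{\Theta^0}-\bo\Pi_{\Theta}\|_1$ is strictly positive on the compact complement, so the two proofs coincide in substance.
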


Theorem \ref{thm-pos} implies that under an identifiable model and with appropriately specified priors, the posterior distribution places increasing probability in arbitrarily small neighborhoods of the true parameters of the constrained latent class model as sample size increases.  These parameters include the mixture proportions and the class specific conditional probabilities.


\subsection{Identifiability of Multilayer Bayesian Pyramids}\label{subsec-dpid}
According to Proposition \ref{prop-relation}, for a multilayer Bayesian Pyramid with a $p\times K_1$ binary graphical matrix $\GG^{(1)}$ between the two bottom layers, one can follow \eqref{eq-relation} to construct a $p\times 2^{K_1}$ constraint matrix $\mathbf S^{(1)}$ as illustrated in Example \ref{exp-toy}. 
We next provide transparent identifiability conditions that directly depend on the binary graphical matrices $\GG^{(m)}$s.
With the next theorem, one only needs to examine the structure of the between layer connecting graphs to establish identifiability.

\begin{theorem}\label{thm-stack}
	Consider the multilayer latent variable model specified in \eqref{eq-model}--\eqref{eq-modelfac}. Suppose the numbers $K_1,\ldots,K_{D-1}$ are known.
	Suppose each binary graphical matrix $\GG^{(m)}$ of size $K_{m-1} \times K_{m}$ (size $p\times K_1$ if $m=1$) takes the following form after some row permutation,
	\begin{align}\label{eq-ggm}
		\GG^{(m)} = 
		\begin{pmatrix}
			\mb I_{K_{m}}\\
			\mb I_{K_{m}}\\
			\mb I_{K_{m}}\\
			\GG^{(m),\star}
		\end{pmatrix},\quad m=1,\ldots,D-1,
	\end{align}
	where $\GG^{(m),\star}$ generally denotes a submatrix of $\GG^{(m)}$ that can take an arbitrary form.
	Further suppose that the conditional distributions of  variables satisfy the inequality constraint in \eqref{eq-neq}.
	Then the following parameters are identifiable up to a latent variable permutation within each layer: the probability distribution tensor $\bo\tau$ for the deepest latent variable $z^{(D)}$, the {conditional probability table} of each variable (including observed and latent) given its parents, and also the binary graphical matrices $\{\GG^{(m)};\, m=1,\ldots,D-1\}$. 
\end{theorem}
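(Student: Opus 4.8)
The plan is to establish identifiability by peeling the pyramid one layer at a time, from the observed bottom layer upward, invoking at each layer the constrained latent class identifiability already proved in Corollary \ref{cor-three}. By Proposition \ref{prop-relation}(a), marginalizing out every latent variable above $\aaa^{(1)}$ turns the law of $\bo y$ into a constrained latent class model whose $p\times 2^{K_1}$ constraint matrix $\mathbf S^{(1)}$ is read off from $\GG^{(1)}$ through \eqref{eq-relation}. So the first task is to verify that the stacking hypothesis \eqref{eq-ggm} on $\GG^{(1)}$ forces $\mathbf S^{(1)}$ to satisfy the hypotheses of Corollary \ref{cor-three}.

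The central combinatorial observation is that a single copy of $\mb I_{K_1}$ among the rows of $\GG^{(1)}$ already produces $K_1$ rows of $\mathbf S^{(1)}$ whose columns are pairwise distinct. Indeed, the row of $\GG^{(1)}$ equal to the standard basis vector $\bo e_k$ gives, via \eqref{eq-relation}, $S^{(1)}_{j,\aaa}=1-\mathbb I(\aaa\succeq \bo e_k)=1-\alpha_k$; that is, this row of $\mathbf S^{(1)}$ simply records the complement of the $k$th coordinate of the class label $\aaa$. Stacking these $K_1$ rows, the column indexed by $\aaa$ equals $\mathbf 1-\aaa$, so distinct configurations give distinct columns. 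Hence the three stacked identity blocks in \eqref{eq-ggm} yield three disjoint row sets $\mathcal A_1,\mathcal A_2,\mathcal A_3$ --- absorbing the rows of $\GG^{(1),\star}$ into $\mathcal A_3$, which only appends rows and cannot merge columns --- each of which induces distinct columns of $\mathbf S^{(1)}$. Together with the assumed inequality constraint \eqref{eq-neq} and the positivity of every induced class probability (a mild non-degeneracy condition), Corollary \ref{cor-three} applies and strictly identifies, up to a permutation of the $2^{K_1}$ classes, the mixing weights $\nnu^{(1)}=\mathbb P(\aaa^{(1)})$, the conditional tables $\mathbb P(y_j\mid\aaa^{(1)}_{\pa(j)})$, and the matrix $\mathbf S^{(1)}$ itself.

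Next I would upgrade ``up to a permutation of the $2^{K_1}$ classes'' to ``up to a permutation of the $K_1$ latent variables'' and recover $\GG^{(1)}$. The indicator rows supplied by any one identity block are the lever: since such a row reads off a single coordinate, the partial order on the class labels can be recovered intrinsically from the inclusion order of the sets of rows on which the columns of $\mathbf S^{(1)}$ vanish, and the zero-set of each row is then the principal up-set generated by $\GG^{(1)}_{j,\bcolon}$, whose unique minimal element returns that row of $\GG^{(1)}$. Consequently any class permutation carrying one valid, identity-stacked, graph-derived constraint matrix to another must preserve this order structure and its rank-one atoms, hence be induced by a permutation of the $K_1$ coordinates; this simultaneously pins down $\GG^{(1)}$ up to relabeling of the first latent layer.

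With the full marginal law of $\aaa^{(1)}$ in hand, I would then induct upward: treat $\aaa^{(1)}$ as an ``observed'' $K_1$-vector with now-known distribution and apply Proposition \ref{prop-relation}(b), which presents the law of $\aaa^{(m)}$ (marginalizing deeper latents except $\aaa^{(m+1)}$) as a constrained latent class model with constraint matrix $\mathbf S^{(m+1)}$ determined by $\GG^{(m+1)}$, the requisite conditional independence of the components of $\aaa^{(m)}$ given $\aaa^{(m+1)}$ being guaranteed by \eqref{eq-modelfac}. The stacking hypothesis on $\GG^{(m+1)}$ reproduces, verbatim, the three distinct-column row sets, so Corollary \ref{cor-three} identifies $\mathbb P(\alpha^{(m)}_k\mid\aaa^{(m+1)})$, the marginal $\mathbb P(\aaa^{(m+1)})$, and $\GG^{(m+1)}$ up to a within-layer permutation; because permuting the already-identified coordinates of $\aaa^{(m)}$ merely relabels the observed variables of this sub-model, the layer-wise permutations compose consistently. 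Iterating up to $\aaa^{(D-1)}$ and using its identified marginal, the top layer is the finite mixture of the conditionally independent binary variables $\aaa^{(D-1)}$ across the $k_D\geq 2$ categories of $z^{(D)}$, whose weights $\bo\tau$ and conditional tables follow from the distinctness of the category-specific product distributions and the width of $\aaa^{(D-1)}$. I expect the main obstacle to be precisely the second step --- converting the class-level permutation freedom of Corollary \ref{cor-three} into the claimed latent-variable-level permutation and recovering each $\GG^{(m)}$ --- since this is where the monotone combinatorics of \eqref{eq-relation} and the rigidity of graph-derived constraint matrices must be argued with care; the top-layer mixture, whose constraint matrix is trivial, may also warrant a short separate identifiability argument.
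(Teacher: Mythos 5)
Your proposal follows essentially the same route as the paper's proof: use Proposition \ref{prop-relation} to reduce each layer to a constrained latent class model, note that each stacked identity block in $\GG^{(m)}$ makes the corresponding rows of $\mathbf S^{(m)}$ read off the (complemented) coordinates of $\aaa^{(m)}$ so that the three blocks give three row sets with distinct columns, invoke Corollary \ref{cor-three}, recover $\GG^{(m)}$ from $\mathbf S^{(m)}$, and recurse bottom-up. Your added care---explicitly absorbing $\GG^{(m),\star}$ into $\mathcal A_3$, the order-theoretic upgrade from class-level to coordinate-level permutation, and flagging that the top-layer mixture $(\bo\tau,\ \mathbb P(\aaa^{(D-1)}\mid z^{(D)}))$ needs a short separate argument (the paper's own proof stops at the identified marginal of $\aaa^{(D-1)}$)---only sharpens the same argument rather than changing it.
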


\begin{remark}
	The proof of Theorem \ref{thm-stack} provides a nice layer-wise argument on identifiability, that is, one can examine the structure of the Bayesian Pyramid in the bottom-up direction. As long as for some $\ell$ there are $\GG^{(1)},\ldots,\GG^{(\ell)}$ taking the form of \eqref{eq-ggm}, then the parameters associated with the conditional distribution of $\bo y$ and $\aaa^{(1)},\ldots,\aaa^{(\ell-1)}$ are identifiable and the marginal distributions of $\aaa^{(\ell)}$ are also identifiable.
\end{remark}

Theorem \ref{thm-stack} implies a requirement that $p\geq 3K_1$ and that $K_{m-1}\geq 3K_{m}$ for every $m=2,\ldots,D-1$, through the form of $\GG^{(m)}$ in \eqref{eq-ggm}. 
That is, the number of latent variables per layer  decreases as the layer goes deeper up the pyramid.
The condition \eqref{eq-ggm} in Theorem \ref{thm-stack} requires that each latent variable $\alpha^{(m)}_{k}$ in the $m$-th latent layer has at least three children in the $(m+1)$-th layer that do not have any other parents.
Our identifiability conditions hold regardless of
the specific models chosen for the  conditional distributions, $y_j\mid \alpha^{(1)}_k$ and each $\alpha^{(m)}_{k} \mid \alpha^{(m+1)}_{k'}$, as long as the graphical structure is enforced and these component models are not over-parameterized in a naive manner.
We next give two concrete examples which differently model the distribution of $\aaa^{(m)}\mid \aaa^{(m+1)}$ but both respect the graph given by $\GG^{(m+1)}$.

\begin{example}\label{exp-maineff}
We first consider modeling the effects of the parent variables of each $\alpha^{(m)}_{k}$ as
\begin{align}\label{eq-maineff}
    \mathbb P(\alpha^{(m)}_{k} = 1 \mid \aaa^{(m+1)}, \bo\beta^{(m+1)}, \GG^{(m+1)})
    &= f \Big( \beta^{(m+1)}_{k,0} + \sum_{k':\, g^{(m+1)}_{k,k'}  = 1} \beta^{(m+1)}_{k,k'} \alpha^{(m+1)}_{k'} \Big),
\end{align}
where $f:~\mathbb R\to (0,1)$ is a link function. The number of $\beta$-parameters 
in \eqref{eq-maineff}
equals $\sum_{k'=1}^{K_{m+1}} g^{(m+1)}_{k,k'}$, which is the number of edges pointing to  $\alpha^{(m)}_{k}$.
Choosing $f(x)=1/(1+\exp(-x))$ leads to a model similar to sparse deep belief networks
 \citep{hinton2006dbn,lee2007sdpn}.
\end{example}

\begin{example}\label{exp-or}
To obtain a more parsimonious alternative to 
\eqref{eq-maineff}, let  
\begin{align}
    \label{eq-or}
\lefteqn{
    \mathbb P(\alpha^{(m)}_{k} = 1 \mid \aaa^{(m+1)}, \bo\theta^{(m+1)}, \GG^{(m+1)})} \\ \notag
    &  
    &=\begin{cases}
        \theta^{(m+1)}_{k,1}, 
        & \text{\normalfont{if}} ~~ \mathbbm{1}(\alpha^{(m+1)}_{k'} = g^{(m+1)}_{k,k'} = 1 ~\text{for at least one}~k'\in[K_{m+1}]);\\[3mm]
        \theta^{(m+1)}_{k,0}, 
        & \text{\normalfont{otherwise}}.\\
    \end{cases}
\end{align}
Model \eqref{eq-or} satisfies the conditional independence encoded by $\GG^{(m+1)}$, since  $I(\alpha^{(m+1)}_{k'} = g^{(m+1)}_{k,k'} = 1 ~\text{for at least one}~k'\in[K_{m+1}]) \equiv \prod_{k':\, g^{(m+1)}_{k,k'} = 1} (1 -  \alpha^{(m+1)}_{k'} )$,
implying that the distribution of $\alpha^{(m)}_k$ only depends on its parents in the $(m+1)$th latent layer.
This model provides a 
probabilistic version of Boolean matrix factorization \citep{miettinen2014bmf}.
The binary indicator  equals the Boolean product of two binary vectors $\GG^{(m+1)}_{k,\bcolon}$ and $\aaa^{(m+1)}$. The $1-\theta^{(m+1)}_{k,1}$ and $\theta^{(m+1)}_{k,0}$ quantify the two probabilities that the entry $\alpha^{(m)}_k$ does not equal the Boolean product.
\end{example}

Since Examples \ref{exp-maineff} and \ref{exp-or} satisfy the conditional independence constraints encoded by graphical matrices $\GG^{(m+1)}$s, they satisfy the equality constraint in \eqref{eq-equa} with the constraint matrix $\mb S^{(m+1)}$.
Therefore, our identifiability conclusion in Theorem \ref{thm-stack} applies to both examples with appropriate inequality constraints on the $\beta$--parameters or the $\theta$-parameters; for example, see Proposition \ref{prop-2layer} in the next Section \ref{sec-method}.

Besides Examples \ref{exp-maineff} and \ref{exp-or}, there are many other models that respect the graphical structure. For example, \eqref{eq-maineff} can be extended to include interaction effects of the parents of $\alpha^{(m)}_k$ as follows, 
\begin{align}\label{eq-alleff}
    &~\mathbb P(\alpha^{(m)}_{k} = 1 \mid \aaa^{(m+1)}, \bo\theta^{(m+1)}, \GG^{(m+1)})
    = f \Big( \beta^{(m+1)}_{k,0} 
    + \sum_{k':\, g^{(m+1)}_{k,k'}  = 1} \beta^{(m+1)}_{k,\,k'} \alpha^{(m+1)}_{k'} 
    \\ \notag
   &~\qquad + \sum_{k_1\neq k_2:\atop g^{(m+1)}_{k,k_1} = g^{(m+1)}_{k,k_2} = 1} 
       \beta^{(m+1)}_{k,\,k_1 k_2} \alpha^{(m+1)}_{k_1} \alpha^{(m+1)}_{k_2}
    + \cdots 
    + \beta^{(m+1)}_{k,\,\text{all}} \prod_{\ell:\, g^{(m+1)}_{k,\ell}=1} \alpha^{(m+1)}_{\ell} 
    \Big).
\end{align}
In \eqref{eq-alleff} if $\alpha^{(m)}_k$ has $M := \sum_{k'=1}^{K_{m+1}} g^{(m+1)}_{k,k'}$ parents, then the number of $\beta$-parameters equals $2^{M}$.

\cite{anandkumar2013dag} considered the identifiability of linear Bayesian networks. Although both \cite{anandkumar2013dag} and this work address identifiability issues of Bayesian networks, their  results are not applicable to our highly nonlinear models. 
The nonlinearity requires techniques that look into the inherent tensor decomposition structures (in particular, a \textit{constrained} CP decomposition) caused by the discrete latent variables and graphical constraints imposed on the discrete latent distribution.
Such inherent constrained tensor structures are specific to discrete and graphical latent structures and are not present in the settings considered in \cite{anandkumar2013dag}.

\section{Bayesian Inference for Two-Layer Bayesian Pyramids}\label{sec-method}
As discussed earlier, the proposed multilayer Bayesian Pyramid in Section \ref{sec-layers} has universal identifiability arguments for many different model structures.
In this section, as an important special case, we focus on a two-latent-layer model with depth $D=2$ and use a Bayesian approach to infer the latent structure and model parameters.
Recall our two-latent-layer Bayesian Pyramid specified earlier in \eqref{eq-2layer} takes the form
\begin{align}\notag
\mathbb P(y_{i,j}=c\mid \aaa^{(1)}_i = \aaa) 
    &=\frac{ \exp\left(\beta_{j,c,0} + \sum_{k=1}^{K_1} \beta_{j,c,k} g^{(1)}_{j,k}\alpha_k \right)}{ \sum_{m=1}^{d_j} \exp\left(\beta_{j,m,0} + \sum_{k=1}^{K_1} \beta_{j,m,k} g^{(1)}_{j,k}\alpha_k \right)},
	\quad j\in[p],~ c \in [d_j];
	\\ \notag 
    \mathbb P(\aaa^{(1)}_i = \aaa) &= \sum_{b=1}^B \tau_b \prod_{k=1}^{K_1} \eta_{k,b}^{\alpha_k} (1-\eta_{k,b})^{1-\alpha_k},\quad 
    \aaa\in\{0,1\}^{K_1}.
\end{align}
Here we assume $\beta_{j,d_j,0} \equiv \beta_{j,d_j,k} \equiv  0$ for all $k\in[K_1]$, as conventionally done in logistic models.
The parameter $\beta_{j,c,k}$ can be viewed as the weight associated with the potential directed edge from latent variable $\alpha^{(1)}_k$ to observed variable $y_j$, for response category $c$. 
The $\beta_{j,c,k}$ only impacts the likelihood if there is an edge from $\alpha^{(1)}_k$ to $y_j$ with $g^{(1)}_{j,k}=1$.
Denote the collection of $\beta$ parameters as $\bo\beta$. 
\eqref{eq-2layer} specifies a usual latent class model in the second latent layer with $B$ latent classes for the $K_1$-dimensional latent vector $\aaa^{(1)}$.
This layer of the model has latent class proportion parameters $\bo\tau = (\tau_1,\ldots,\tau_B)^\top$ and conditional probability parameters $\bo\eta = (\eta_{k,b})_{K_1\times B}$.  The full set of 
model parameters across layers is $(\GG^{(1)}, \bo\beta, \bo\tau, \bo\eta)$, and 
the model structure is shown in Fig.~\ref{fig-2layer}.
We can denote the conditional probability $\mathbb P(y_j=c\mid \aaa^{(1)}=\aaa)$ in \eqref{eq-2layer} by $\lambda_{j,c,\aaa}$.

\subsection{Identifiability Theory adapted to Two-layer Bayesian Pyramids defined in \eqref{eq-2layer}}
\medskip

For the two-latent-layer model in \eqref{eq-2layer}, the following proposition presents strict identifiability conditions in terms of explicit inequality constraints for the $\bo\beta$ parameters.

\begin{proposition}\label{prop-2layer}
	Consider model \eqref{eq-2layer} with true parameters $(\GG^{(1)}, \bo\beta, \bo\tau, \bo\eta)$.
	\begin{itemize}
	\item[(a)] Suppose  $\GG^{(1)} = (\mb I_{K_1};~\mb I_{K_1};~\mb I_{K_1};~(\GG^{\star})^\top)^\top$,  $\beta_{j,\,j,\,c} \neq 0$, $\beta_{j,\,j+K_1,\,c} \neq 0$, and $\beta_{j,\,j+2K_1,\,c} \neq 0$ for $j\in[K_1]$, $c\in[d_j-1]$.
	Then $\GG^{(1)}$, $\bo \beta$, and  probability tensor $\nnu^{(1)}$ of $\aaa\in\{0,1\}^K$ are strictly identifiable.
	
	\item[(b)] Under the conditions of part (a), if further there is $K_1 \geq 2\ceil{\log_2 B}+1$, then the  parameters $\bo\tau$ and $\bo\eta$ are generically identifiable from $\nnu^{(1)}$.
	\end{itemize}
\end{proposition}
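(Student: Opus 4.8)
The plan is to dispatch the two parts by reduction to machinery already in the paper: part (a) to the strict-identifiability Theorem~\ref{thm-suff} through the layer-wise dictionary of Proposition~\ref{prop-relation}, and part (b) to the classical generic-identifiability theory of \citet{kruskal1977} as packaged by \citet{allman2009}. For part (a) I would first marginalize out the root $z$ and invoke Proposition~\ref{prop-relation}(a): the marginal law of $\bo y$ is a constrained latent class model with $2^{K_1}$ classes indexed by configurations $\aaa\in\{0,1\}^{K_1}$, with weights $\nu^{(1)}_{\aaa}=\MP(\aaa^{(1)}=\aaa)$, conditional tables $\lambda_{j,c,\aaa}$, and constraint matrix $\mb S^{(1)}$ given by \eqref{eq-relation}. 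The decisive observation is structural: because $\GG^{(1)}$ stacks three copies of $\mb I_{K_1}$, the row of $\GG^{(1)}$ attached to any anchor variable $j$ is a standard basis vector $\mb e_k$, so \eqref{eq-relation} gives $S^{(1)}_{j,\aaa}=1-\alpha_k$; hence the $K_1\times 2^{K_1}$ submatrix of $\mb S^{(1)}$ cut out by one identity block has column $\aaa$ equal to $\mathbf 1-\aaa$, and these columns are pairwise distinct. Taking $\mathcal A_1,\mathcal A_2,\mathcal A_3$ to be the three identity blocks, condition (a) of Theorem~\ref{thm-suff} (equivalently all three hypotheses of Corollary~\ref{cor-three}) is met, while the hypotheses $\beta_{j,\,j,\,c},\beta_{j,\,j+K_1,\,c},\beta_{j,\,j+2K_1,\,c}\neq 0$ force each anchor's conditional distribution to depend nontrivially on its single parent, which both supplies condition (b) on $\mathcal A_3$ and certifies the inequality constraint \eqref{eq-neq}. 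Positivity $\nu^{(1)}_{\aaa}>0$, required by the theorem, follows from the interior conditions $\eta_{k,b}\in(0,1)$, $\tau_b>0$. Theorem~\ref{thm-suff} then yields strict identifiability of $(\nnu^{(1)},\LLambda,\mb S^{(1)})$ up to relabeling of the $2^{K_1}$ classes.

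It remains to pass from $(\LLambda,\mb S^{(1)})$ back to $(\bo\beta,\GG^{(1)})$. I would recover each row $\GG^{(1)}_{j,\bcolon}$ as the coordinatewise least element of $\{\aaa:S^{(1)}_{j,\aaa}=0\}$, which is well defined because that set is the up-set generated by $\GG^{(1)}_{j,\bcolon}$, and recover $\bo\beta$ by inverting the main-effects multinomial-logit link: for each $j$ the log-odds $\log(\lambda_{j,c,\aaa}/\lambda_{j,d_j,\aaa})$ are affine in $\aaa$ restricted to $j$'s parents, and evaluating them at the configurations $\mathbf 0$ and $\mb e_k$ (all realized, by positivity) pins down every $\beta_{j,c,k}$. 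The rigid three-identity pattern forces the residual class permutation permitted by Theorem~\ref{thm-suff} to be the one induced by a permutation of the $K_1$ latent variables, so identifiability holds up to latent-variable relabeling.

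For part (b) I would note that the tensor $\nnu^{(1)}$ identified above is exactly the joint law of the $K_1$ binary variables $\alpha_1,\dots,\alpha_{K_1}$ under the ordinary, unconstrained $B$-class latent class model in the second line of \eqref{eq-2layer}. Using $K_1\ge 2\ceil{\log_2 B}+1$, I would tripartition $[K_1]$ into groups $\mathcal C_1,\mathcal C_2,\mathcal C_3$ of sizes $\ge\ceil{\log_2 B}$, $\ge\ceil{\log_2 B}$, and $\ge 1$, and merge each group into one categorical variable whose class-conditional matrix $M_i$ is the Khatri--Rao product over $k\in\mathcal C_i$ of the $2\times B$ matrices with columns $(1-\eta_{k,b},\eta_{k,b})^\top$. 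A generic-position argument in the spirit of Proposition~\ref{prop-kr-gen} shows that off a Lebesgue-null set each $M_i$ attains Kruskal rank $\min(2^{|\mathcal C_i|},B)$, which equals $B$ for the first two groups and is at least $2$ for the third; their sum is at least $2B+2$, so \citet{kruskal1977}'s theorem as applied in \citet{allman2009} makes the three-way CP decomposition unique, identifying $\bo\tau$ and each $M_i$ up to a common permutation of the $B$ classes, and marginalizing $M_i$ to single coordinates recovers every $\eta_{k,b}$.

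The hardest step will be the generic Kruskal-rank claim underlying part (b): I must verify that the structured, Khatri--Rao-form matrices $M_i$ reach the maximal Kruskal rank $\min(2^{|\mathcal C_i|},B)$ outside a measure-zero set, and confirm that the budget $K_1\ge 2\ceil{\log_2 B}+1$ is precisely what lets two groups saturate at $B$ while the third contributes at least $2$, pushing the Kruskal-rank sum over the $2B+2$ threshold. In part (a) the only comparably delicate point is the permutation bookkeeping---confirming that the three-identity constraint collapses the class permutation of Theorem~\ref{thm-suff} to a latent-variable permutation---together with the routine verification that the stated $\beta$ conditions indeed enforce \eqref{eq-neq}.
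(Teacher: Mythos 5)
Your part (b) is essentially the paper's argument: once $\nnu^{(1)}$ is identified, the second layer is an ordinary unconstrained $B$-class latent class model for $K_1$ binary variables, and the paper simply cites Corollary 5 of \cite{allman2009} under $K_1\geq 2\lceil\log_2 B\rceil+1$; your tripartition/Kruskal-rank plan is a re-derivation of that corollary, so it is valid but unnecessary work. The genuine gap is in part (a), at exactly the step you label a ``routine verification'': the claim that $\beta_{j,c,k}\neq 0$ (for all non-baseline $c$) certifies the inequality constraint \eqref{eq-neq}. It does not. Constraint \eqref{eq-neq} requires $\lambda^{(j)}_{c,h_1}\neq\lambda^{(j)}_{c,h_2}$ for \emph{every} $c\in[d_j]$, including the baseline category $c=d_j$. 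Under the logit model \eqref{eq-2layer}, for an anchor variable $j$ with single parent $k$ the two baseline probabilities are $1/\sum_{m}\exp(\beta_{j,m,0}+\beta_{j,m,k})$ and $1/\sum_{m}\exp(\beta_{j,m,0})$, and these coincide precisely when $\sum_{m<d_j}e^{\beta_{j,m,0}}\bigl(e^{\beta_{j,m,k}}-1\bigr)=0$, which is achievable with all $\beta_{j,m,k}\neq 0$ once $d_j\geq 3$ (choose main effects of mixed signs). So the hypotheses of the proposition do not imply \eqref{eq-neq}, and Theorem~\ref{thm-suff} --- whose statement, and whose reliance on Proposition~\ref{prop-kr-gen}, presupposes \eqref{eq-neq} --- cannot be invoked as a black box. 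This is not a pedantic point: it is exactly why the paper's proof avoids \eqref{eq-neq} altogether.

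What the paper does instead is prove full column rank of $\odot_{j}\bo\Lambda^{(j)}$ over each identity block directly: restricting to the $2^{K_1}$ response patterns with $y_j\in\{1,d_j\}$ produces a $2^{K_1}\times 2^{K_1}$ submatrix that factors as a Kronecker product $\otimes_{j=1}^{K_1}\mb B^{(j)}$ of $2\times 2$ matrices (Eq.~\eqref{eq-k1sub}), and $\det\mb B^{(j)}\neq 0$ exactly when the category-$1$ main effect is nonzero; full rank of all three blocks then feeds into the Kruskal machinery of Theorem~\ref{thm-suff}/Corollary~\ref{cor-three}. Your gap is patchable without this trick, because the proof of Proposition~\ref{prop-kr-gen} only ever uses \eqref{eq-neq} at a non-baseline category chosen per pair $(j,h)$, and under your hypotheses the two conditional tables of each anchor must differ at some $c\in[d_j-1]$ --- but that patch requires reopening the proofs of Proposition~\ref{prop-kr-gen} and Theorem~\ref{thm-suff} rather than citing their statements, so as written your reduction fails. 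The remainder of your part (a) --- distinct columns of the $\mb S^{(1)}$ blocks via $S^{(1)}_{j,\aaa}=1-\alpha_j$ from \eqref{eq-relation}, recovery of $\GG^{(1)}_{j,\bcolon}$ as the least element of $\{\aaa:S^{(1)}_{j,\aaa}=0\}$ and of $\bo\beta$ by inverting the logit link, and the permutation bookkeeping --- is sound, and indeed mirrors the paper's own treatment of Theorem~\ref{thm-stack}, where \eqref{eq-neq} is explicitly assumed rather than derived.
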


As mentioned in the last paragraph of Section \ref{sec-layers}, the identifiability results in that section apply to general distributional assumptions for variables organized in a sparse multilayer Bayesian Pyramid.
When considering specific models, properties of the model can be leveraged to weaken the identifiability conditions. 
The next proposition illustrates this, establishing generic identifiability for the model in \eqref{eq-2layer}.
Before stating the identifiability result, we formally define the allowable constrained parameter space for $\bo\beta$ under a graphical matrix $\GG^{(1)}$ as
\begin{equation}\label{eq-omegab}
    \Omega(\bo\beta;\, \GG^{(1)}) = \{\beta_{1:p,\, 1:K_1,\, 1:(d_j-1)};\, \beta_{j,c,k} \neq 0~\text{if}~g^{(1)}_{j,k}=1;~\text{and}~\beta_{j,c,k}=0~\text{if}~g^{(1)}_{j,k}=0\}.
\end{equation}

\begin{proposition}
	\label{prop-gendiag}
Consider model \eqref{eq-2layer} with $\bo\beta$ belonging to $\Omega(\bo\beta;\, \GG^{(1)})$ in \eqref{eq-omegab}.
{Suppose the graphical matrix $\GG^{(1)}$ 
can be rewritten as
$\GG^{(1)}=(\GG_1, \GG_2, \GG_3, (\GG^\star)^\top)^\top$, where each $\GG_m$ has size $K_1\times K_1$ and
\begin{align*}
    \GG_m = \begin{pmatrix}
    1 & * & \cdots & * \\
    * & 1 & \cdots & * \\
    \vdots & \vdots & \ddots & \vdots\\
    * & * & \cdots & 1
    \end{pmatrix}, \quad m=1,2,3;
\end{align*}
that is, each of $\GG_1,\GG_2,\GG_3$ has all the diagonal entries equal to one while any off-diagonal entry is free to be either one or zero. }
Also suppose $K_1 \geq 2\ceil{\log_2 B}+1$.
Then $(\GG^{(1)}, \bo\beta, \bo\tau, \bo\eta)$ are generically identifiable.
\end{proposition}

The two different identifiability conditions on the binary graphical matrix $\GG^{(1)}$ stated in Proposition \ref{prop-2layer} and in Proposition \ref{prop-gendiag} correspond to different identifiability notions -- strict and generic identifiability, respectively.
The \textit{generic} identifiability notion is slightly less stringent than  \textit{strict} identifiability, by allowing a Lebesgue-measure-zero subset $\mathcal N$ of the parameter space $\mathcal T$ where identifiability does not hold.
Our sufficient generic identifiability conditions in Proposition \ref{prop-gendiag} are much less stringent than conditions in Proposition \ref{prop-2layer}.

\subsection{Bayesian Inference for the Latent Sparse Graph and Number of Binary Latent Traits}
\medskip

We propose a Bayesian inference procedure for two-latent-layer Bayesian Pyramids.
We apply a Gibbs sampler by employing the Polya-Gamma data augmentation in \cite{polson2013pg} together with the auxiliary variable method for multinomial logit models in \cite{holmes2006}. Such Gibbs sampling steps can handle general multivariate categorical data.

\medskip
\noindent
\textbf{Inference with a Fixed $K_1$.}
First consider the case where the true number of binary latent variables $K_1$ in the middle layer is fixed. 
Inferring the latent sparse graph $\GG^{(1)}$ is equivalent to inferring the sparsity structure of the continuous parameters $\beta_{jkc}$'s.
Let the prior for $\beta_{j,k,0}$ be $N(\mu_0,\sigma_0^2)$, 
where hyperparameters $\mu_0,\sigma_0^2$ can be set to give weakly informative priors for the logistic regression coefficients \citep{gelman2008weakly}.
Specify priors for other parameters as
\begin{flalign}\notag
\text{(i) for fixed $K_1$, 
$k=1,\ldots,K_1$
:}
&&
\beta_{jck}
    \mid(\sigma_{ck}^2,~ g_{j,k}=1) \sim N(0, \sigma_{ck}^2); \qquad  \qquad  \qquad&&
    \\ \notag
%
&&\beta_{jck}
    \mid(\sigma_{ck}^2,~ g_{j,k}=0) \sim N(0, ~\sigma_0^2);    \qquad  \qquad  \qquad&&
    \\ \notag
&&\sigma_{ck}^2  \sim \text{InvGa}(b_{1\sigma}, b_{2\sigma});\qquad  \qquad  \qquad && \\ \label{eq-kknown}
&&\mathbb P(g_{j,k}=1) = 1- \mathbb P(g_{j,k}=0) = \gamma;  \qquad  \qquad  \qquad &&  
\end{flalign}
Here $\sigma_0$ is a small positive number specifying the ``pseudo''-variance, and we take $\sigma_0=0.1$ in the numerical studies. 
Our adoption of a prior variance for $\beta_{j,c,k}$ when $g_{j,k}=0$ follows a similar spirit as the ``pseudo-prior'' approach in the Bayesian variable selection literature.
In Bayesian variable selection for regression analysis, \cite{dellaportas2002bayesian} first proposed using a pseudo-prior for the variance when one variable is not included in the model to facilitate convenient Gibbs sampling steps. 
Specifically, a binary variable $\rho_j$ encodes whether or not the $j$th predictor is included in the regression, and the regression coefficient is $\beta_j\rho_j$, with $\beta_j\sim \pi_j N(0,\sigma^2) + (1-\pi_j) N(0, \sigma_0^2)$ and $\mathbb P(\rho_j=1) = \pi_j$.
The pseudo-prior variance $\sigma_0^2$ does not affect the posterior but may influence mixing of the MCMC algorithm.

The $\gamma$ in \eqref{eq-kknown} is further given a noninformative prior $\gamma\sim \text{Beta}(1,1)$.
The hyperparameters for the Inverse-Gamma distribution can be set to $b_{1\sigma}=b_{2\sigma}=2$. 
In the data augmentation part, for each subject $i\in[n]$, each observed variable $j\in[p]$, and each non-baseline category $c=1,\ldots,d_j-1$, we introduce Polya-Gamma random variable $w_{ijc} \sim \text{PG} (1, 0)$.
We use the auxiliary variable approach in \cite{holmes2006} for multinomial regression to derive the conditional distribution of each $\beta_{jck}$.
Given data $y_{ij}\in[d_j]$, introduce binary indicators $y_{ijc} = \mathbbm{1}(y_{ij} = c)$. The posteriors of $\bo\beta$ satisfy that 
\begin{align*}
    p({\bo\beta_{j,:,:}} \mid \bo y_{1:n})
    \propto
    {p(\bo\beta_{j,:,:})}
    \prod_{i=1}^n \prod_{c=1}^{d_j-1}
    \frac{\left[\exp\left(\beta_{jc0} + \sum_{k=1}^{K_1} \beta_{jck} g_{j,k} \alpha_{i,k} \right)\right]^{y_{ijc}}}{\sum_{c'=1}^{d_j} \exp\left(\beta_{jc'0} + \sum_{k=1}^{K_1} \beta_{jc'k} g_{j,k} \alpha_{i,k} \right)},
\end{align*}
and we introduce notation
\begin{align}\label{eq-defphi-main}
    \phi_{ijc} & =
    \beta_{jc0} + \sum_{k=1}^{K_1} \beta_{jck} g_{j,k} \alpha_{i,k} - C_{ij(c)};
    \\ \notag
    C_{ij(c)} &= \log\left\{ \sum_{ 1\leq \ell\leq d_j,\,\ell\neq c} \exp\left( \beta_{j\ell 0} + \sum_{k=1}^{K_1} \beta_{j\ell k} g_{j,k} \alpha_{i,k} \right) \right\}.
\end{align}
Next by the property of the Polya-Gamma random variables \citep{polson2013pg}, we have 
\begin{align*}
    \frac{\exp(\phi_{ijc})^{y_{ijc}}}{1 + \exp(\phi_{ijc})}
    =2 \exp\{(y_{ijc}-1/2)\phi_{ijc}\} \int_{0}^{\infty} \exp\{-w_{ijc} \phi_{ijc}^2/2\} p^{\text{PG}}(w_{ijc}\mid 1, 0) \text{d}w_{ijc},
\end{align*}
where $p^{\text{PG}}(w_{ijc}\mid 1, 0)$ denotes the density function of PG$(1,0)$.
Based on the above identity, the conditional posterior of the $\beta_{jc0}$'s and $\beta_{jck}$'s are still Gaussian, and the conditional posterior of each $w_{ijc}$ is still Polya-Gamma with $(w_{ijc}\mid -)
    \sim
    \text{PG}\left(1,~ \beta_{jc0} + \sum_{k=1}^{K_1} \beta_{jck} g_{j,k} \alpha_{i,k} - C_{ij(c)}\right)$; these full conditional distributions are easy to sample from. 
As for the binary entries $g_{j,k}$'s indicating the presence or absence of edges in the Bayesian Pyramid, we sample each $g_{j,k}$ individually from its posterior Bernoulli distribution.
The detailed steps of such a Gibbs sampler with known $K_1$ are presented in the Supplementary Material.

\medskip
\noindent
\textbf{Inferring an Unknown $K_1$.}
On top of the sampling algorithm described above for fixed $K_1$, we propose a method for simultaneously inferring  $K_1$ and other parameters.
In the context of mixture models,
\cite{rousseau2011overfit} defined over-fitted mixtures with more than enough latent components and used shrinkage priors to effectively delete the unnecessary ones. 
In a similar spirit but motivated by Gaussian linear factor models, \cite{legramanti2020} proposed the cumulative shrinkage process (CSP) prior, which has a spike and slab structure.
We use a CSP prior 
on the variances $\{\sigma^2_{ck}\}$ of $\{\beta_{jck}\}$ to infer the number of latent binary variables $K_1$ in a two-layer Bayesian Pyramid.
{The rationale for using such an \textit{increasing shrinkage prior} for latent dimension selection is that, it is natural to expect additional latent dimensions to play a progressively less important role in characterizing the data, so the associated parameters should have a stochastically decreasing effect. Specifically, under the CSP prior, the $k$th latent dimension is controlled by a scalar $\theta_k$ that follows a spike-and-slab distribution. Redundant dimensions will be essentially deleted by progressively shrinking the sequence $\{\theta_1,\theta_2,\ldots\}$ towards an appropriate value $\theta_{\infty}$ (the spike). In particular, \cite{legramanti2020} considered the \textit{continuous} factor model where $\theta_k>0$ denotes the variance of the factor loadings for the $k$th factor and $\theta_\infty$ is a small positive number indicating the variance of redundant latent factors.}

We next describe in detail the prior specifications with an unknown number of binary latent variables $K_1$. 
{Consider an upper bound $K_{\upper}$ for $K_1$, with $K_1 < K_{\upper}$.
Based on the identifiability conditions in Theorem \ref{thm-stack} about the shape of $\mathbf G^{(1)}_{p\times K_1}$, $K_1$ is naturally constrained to be at most $p/3$, therefore $K_{\upper}$ can be set to $\ceil{p/3}$ or smaller in practice.
We adopt a prior that detects redundant binary latent variables by increasingly shrinking the variance of $\beta_{jck}$'s as $k$ grows from 1 to $K_{\upper}$.}
Specifically, letting $\beta_{jck} \sim N(0,\sigma_{ck}^2)$, we put a CSP prior on variances $\{\sigma^2_{c1}, \sigma^2_{c2},\ldots, \sigma^2_{cK_{\upper}}\}$ for each category $c\in[d-1]$, where $\sigma^2_{\infty}$ is a prespecified small positive number indicating the spike variance for redundant binary latent variables:
\begin{align}\notag
\text{(ii) for unknown } & K_1 < K_{\upper},~ k = 1,\ldots, K_{\upper}:
\\[2mm] \label{eq-csp}
&
\sigma^2_{ck}\mid \pi_k \sim
    (1-\pi_k) \text{InvGa}(b_{1\sigma},b_{2\sigma}) + \pi_k \delta_{\sigma^2_{\infty}};
\\
\label{eq-stick}
&
\pi_k = \sum_{\ell=1}^k v_\ell\prod_{m=1}^{\ell-1}(1-v_m),
\end{align}
where $\text{InvGa}(b_{1\sigma}, b_{2\sigma})$ refers to the inverse gamma distribution with shape $b_{1\sigma}$ and scale $b_{2\sigma}$, and $\pi_k$ has a stick-breaking representation as in \eqref{eq-stick} with $v_1,v_2,\ldots, v_{K_{\upper}-1}$ independently following the Beta distribution $\text{Beta}(1,\alpha_0)$. We set $v_{K_{\upper}}\equiv 1$ to truncate the stick-breaking representation at $K_{\upper}$, similarly to \cite{legramanti2020}.
{The $\delta_{\sigma^2_{\infty}}$ represents the Dirac spike distribution with $\sigma^2_\infty$ serving as the variance of redundant latent variables, while $\text{InvGa}(a_{\sigma},b_{\sigma})$ represents the more diffuse slab distribution for the variances corresponding to active latent variables. The ``increasing shrinkage'' comes from the fact that as the latent variable index $k$ increases, the probability of $\alpha_k$ belonging to the spike, $\pi_k$, stochastically increases, because
$$
\mathbb E[\pi_k] 
= \sum_{\ell=1}^k \mathbb E[v_\ell] \prod_{m=1}^{\ell-1}\mathbb E[1- v_m]
= 1 - \frac{1}{(1/\alpha_0 + 1)^k}
$$ 
increases as the index $k$ increases. Therefore, the CSP prior features an increasing amount of shrinkage for larger $k$. We introduce auxiliary variables $\{h_k;\; k=1,\ldots,K_{\upper}\}$ with $h_k\in[K_{\upper}]$ to help with understanding and facilitate posterior computation. 
Specifically, the prior in \eqref{eq-csp} can be obtained by marginalizing out a discrete auxiliary variable $h_k$ with $\mathbb P(h_k = \ell) = v_\ell\prod_{m=1}^{\ell-1}(1-v_m)$, so \eqref{eq-csp}--\eqref{eq-stick} can be reformulated in terms of $h_k$ as
\begin{align*}
    \left(\sigma_{ck}^2\mid h_k\right) &\sim \mathbbm{1}(h_k > k)\cdot \text{InvGa}(b_{1\sigma}, b_{2\sigma}) + \mathbbm{1}(h_k\leq k)\cdot \delta_{\sigma^2_{\infty}};
    \\
    \mathbb P(h_k \leq k) &= \sum_{\ell=1}^k v_\ell\prod_{m=1}^{\ell-1}(1-v_m) = \pi_k,\quad k=1,\ldots,K_{\upper}.
\end{align*}
Therefore, the auxiliary variables $h_k$ determine whether the $k$th latent dimension is in the spike and hence corresponds to a redundant latent variable $\alpha_k$; specifically, if $h_k > k$ then $\sigma^2_{ck}$ follows the slab distribution $\text{InvGa}(b_{1\sigma}, b_{2\sigma})$ and $\alpha_k$ is \textit{active}, otherwise $\sigma^2_{ck}=\sigma^2_\infty$ is in the spike and $\alpha_k$ is \textit{redundant}.
Given \eqref{eq-csp}, the largest possible number of active latent variables is $K_{\upper}-1$, because $\pi_{K_{\upper}} \equiv 1 = \mathbb P(h_{K_{\upper}} \leq K_{\upper})$ and the last latent variable is always redundant.
Since the event $\mathbbm{1}(h_k > k)$ indicates that the $k$th component is in the slab and hence active, we can write the total number of active latent dimensions $K^\star$ as
\begin{equation}\label{eq-kstar}
    K^\star
    = \sum_{k=1}^{K_{\upper}}  \mathbbm{1}(h_k > k).
\end{equation}
}
Tracking the posterior samples of all the $h_k$ can give a posterior estimate of $K^\star$.
The above data augmentation leads to Gibbs updating steps. We present the details of our Gibbs sampler in the Supplementary Material.

\begin{remark}\label{rmk-deep}
It is methodologically straightforward to extend our Gibbs sampler to deep Bayesian Pyramids with more than two latent layers. 
To see this, note that in deep Bayesian Pyramids with $m\geq 2$, for each $m$, the conditional distribution of $\aaa^{(m)}$ given $\aaa^{(m+1)}$ in Example 2 is a special case of the conditional distribution of $\bo y$ given $\aaa^{(1)}$. Indeed, both of these conditionals follow generalized linear models with the (multinomial) logit link, with the parent variables serving as predictors for the child.
Under such a formulation, introducing additional Polya-Gamma auxiliary variables for the $\aaa^{(1)}$-layer similar to those for the $\bo y$-layer would allow Gibbs updates in a three-latent-layer Bayesian Pyramid.
In this work, we focus on two-latent-layer models for computational efficiency.
\end{remark}

\begin{remark}
We also remark that it is not hard to derive and implement a Gibbs sampler for constrained latent class models (CLCMs) mentioned in Section \ref{sec-sub-id}.
Performing Gibbs sampling for CLCMs is a relatively straightforward extension to the current Gibbs sampler for Bayesian Pyramids. 
The reason is that one can similarly use the data augmentation strategies in \cite{holmes2006} and \cite{polson2013pg} to deal with $y_j\in[d_j]$; and one can also adopt similar priors for the binary constraint matrix $\mathbf S\in\{0,1\}^{p\times H}$ as those adopted for the graphical matrix $\mathbf G\in\{0,1\}^{p\times K_1}$ in a Bayesian Pyramid. 
Our preliminary simulations for such a Gibbs sampler under CLCMs showed that the recovery of parameters and the constraint matrix are not as stable and accurate as that for Bayesian Pyramids (shown in the later Figures \ref{fig-betapos}--\ref{fig-rmse}). 
One explanation is that compared to multilayer Bayesian Pyramids, the parametrization of a CLCM is less parsimonious and requires many more mixture proportion parameters in order to describe the same joint distribution of the observed variables. Therefore, we have chosen to focus on the method for Bayesian Pyramids in this work, and treat CLCMs mainly as intermediate tools to help establish identifiability of Bayesian Pyramids.
\end{remark}

\section{Simulation Studies}\label{sec-simu}
We conducted replicated simulation studies to assess the Bayesian  procedure proposed in Section \ref{sec-method} and examine whether the model parameters are indeed estimable as implied by Theorem \ref{thm-pos}.
Consider a two-latent-layer Bayesian Pyramid with $p=20$ observed variables, $d=4$ response categories for each observed variable, and $K_1=4$ binary latent variables in the middle layer and one binary latent variable in the deep layer. 
Let the true $p\times K_1$ binary graphical matrix be $\GG^{(1)} = (\mb I_4;\, \mb I_4;\, \mb I_4;\, 1100;\, 0110;\, 0011;\, 1001;\, 1010;\, 1001;\, 0101;\, 1110;\, 0111)$.
Such a  $\GG^{(1)}$ satisfies the  conditions for strict identifiability in Theorem \ref{thm-stack}, since it contains three copies of the identity matrix $\mb I_4$ as submatrices.
Let the true intercept parameters for categories $c=1,2,3$ be $(\beta_{j,1,0}, \, \beta_{j,2,0}, \, \beta_{j,3,0}) = (-3, -2, -1)$ for each  $j$; and for any $g^{(1)}_{j,k} = 1$, let the corresponding true main-effect parameters of the binary latent variables be $\beta_{j,c,k} = 3$ if variable $y_j$ has a single parent and $\beta_{j,c,k} = 2$ if $y_j$ has multiple parents.
See Fig.~\ref{fig-betapos}(c) for a heatmap of the sparse matrix of the main-effect parameters $(\beta_{j1k};\, j\in[p],\, k\in[K_1])$ for category $c=1$.

We use the method developed in Section \ref{sec-method} with the CSP prior for posterior inference under an unknown $K_1$. We {specify an upper bound for $K_1$ as $K_{\upper} = 7$, because $\ceil{p/20}=7$ is a natural upper bound here based on the identifiability considerations mentioned before}.
As for the hyperparameters in the CSP prior, we mainly {follow the default setting and suggestion in \cite{legramanti2020}. Specifically, we set $\alpha_0$ to the same value in \cite{legramanti2020}, $\alpha_0 = 5$; we also follow the suggestion in \cite{legramanti2020} that the spike variance $\sigma^2_{\infty}$ should be a small positive number but should avoid to take excessively low values by setting $\sigma^2_{\infty}$ = $0.07$. 
Our simulations adopting these choices turn out to produce accurate estimation results under different settings (see the later Figures \ref{fig-betapos}--\ref{fig-rmse}). In our preliminary simulations, we also varied these hyperparameters around the above values, and did not observe the algorithmic performance to be sensitive to them.}
Throughout the Gibbs sampling iterations, we enforce an identifiability constraint on $\bo\beta$ that $\beta_{jck} > 0$ as long as $g_{j,k} = 1$.
For each sample size $n$ we conduct 50 independent simulation replications. 
Since the model is identifiable up to a permutation of the latent variables in each layer, we post-process the posterior samples to find a column permutation of $\GG^{(1)}$ to best match the simulation truth; then the columns of other parameter matrices are permuted accordingly.

We have used the Gelman-Rubin convergence diagnostic \citep{gelman1992inference, gelman2013bayesian} to assess the convergence of the MCMC output from multiple random initializations. 
In particular, for the simulation setting corresponding to $n=1000$, we randomly initialize the parameters from their prior distributions $J=5$ times and then run five MCMC chains for each simulated dataset. 
For each MCMC chain, we ran the chain for 15000 iterations, discarding the first 10000 iterations as burn-in, and retaining every fifth sample post burn-in to thin the chain.
After collecting the posterior samples, we calculate the \textit{potential scale reduction factor} $R^2$ (i.e., the Gelman-Rubin statistic) of the model parameters. The median Gelman-Rubin statistics for the deep conditional probabilities $\bo\eta=(\eta_{kb})_{K\times B}$ and that for the deep latent class proportions $\bo\tau=(\tau_b)_{B\times 1}$ across the 5 chains are as follows,
$$
\text{median GR}(\bo\eta_{K\times B})
=\begin{pmatrix}
    1.0009  &  1.0013\\
    1.0016  &  1.0020\\
    1.0018  &  1.0015\\
    1.0016  &  1.0014
\end{pmatrix},
\quad
\text{median GR}(\bo\tau_{B\times 1})
=\begin{pmatrix}
1.0021\\
1.0021
\end{pmatrix}.
$$
The Gelman-Rubin statistics for other parameters are similarly well controlled and are omitted.
We also inspected the traceplots of the MCMC outputs and observed fast mixing of the MCMC chain after convergence. These observations justify running MCMC for 15000 iterations and discarding the first 10000 as burn-in, therefore we adopt these settings in all the numerical experiments. When applying the proposed method to other datasets, we also recommend calculating the Gelman-Rubin statistics and inspecting the traceplots to determine the appropriate number of overall MCMC iterations and burn-in iterations.

{Our Bayesian modeling of $\GG$ adopts rather uninformative priors with each entry $g_{j,k}\sim \text{Bernoulli}(\gamma)$ and 
we do not force $\GG$ to take a specific form (such as to include any identity submatrix as described in Proposition \ref{prop-2layer} for strict identifiability) but rather let the sampler freely explore the space of all binary matrices to estimate $\GG$. Our theory on identifiability and posterior consistency implies that the posteriors of both $\GG$ and other parameters concentrate around their true values as sample size grows. This is empirically verified in our simulation studies; Fig. \ref{fig-betapos}, \ref{fig-Q}, \ref{fig-rmse} show that as sample size $n$ grows, both the discrete $\GG$ and the continuous parameters are estimated consistently. We next elaborate on these findings from simulations.}

Fig.~\ref{fig-betapos} presents posterior means of the main-effect parameters $(\beta_{j1k};\, j\in[p],\, k\in[K_1])$ for category $c=1$, averaged across the 50 independent replications. The leftmost plot of Fig.~\ref{fig-betapos} shows that for a relatively small sample size $n=500$, the posterior means of $(\beta_{j1k})$ already exhibit similar structure as the ground truth in the rightmost plot.
Also, the true number of $K_1=4$ binary latent variables in the middle latent layer are revealed in  Fig.~\ref{fig-betapos}(a)-(b).
For the sample size $n=2000$, Fig.~\ref{fig-betapos}(b) shows the posterior means of $(\beta_{j1k})$ are very close to the ground truth.
The posterior means for other categories $c=2,3,4$ show similar patterns to those for $c=1$.
The estimated $(\beta_{j1k})$ are slightly biased toward zero for the smaller sample size
($n=500$) with bias less for the larger sample size ($n=2000$). 
Our results suggest that the binary graphical matrix that underlies $\{\beta_{jck}\}$ (that is, the sparsity structure of the continuous parameters) is easier to estimate than the nonzero $\{\beta_{jck}\}$ themselves.
That is, with finite samples, the existence of a link between each observed variable $y_j$ and each binary latent variable $\alpha_k$ is easier to estimate than the strength of such a link (if it exists).

\begin{figure}[h!]

	\centering
	\resizebox{.98\textwidth}{!}{%
    \includegraphics[height=3cm]{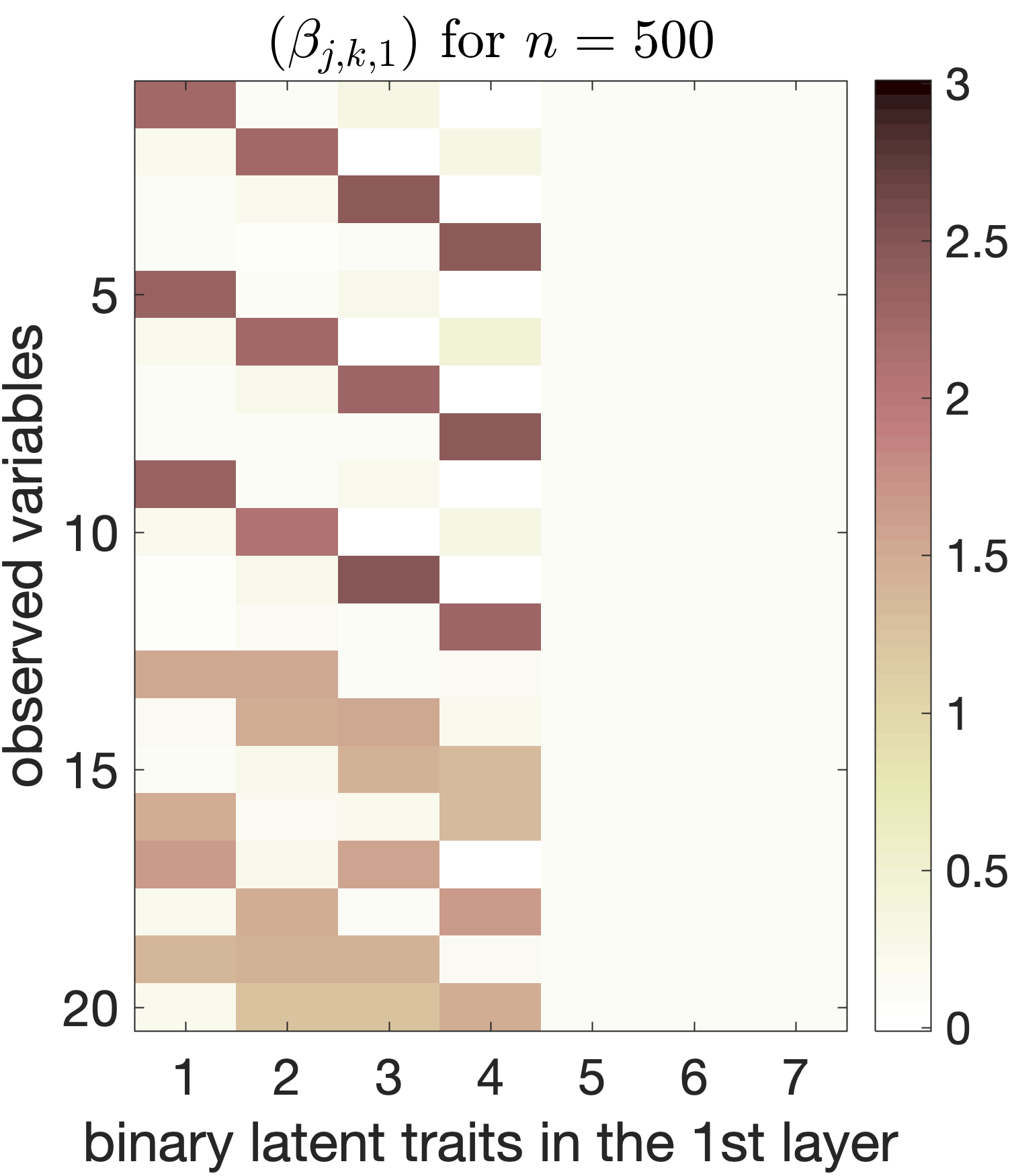}%
    \quad
    \includegraphics[height=3cm]{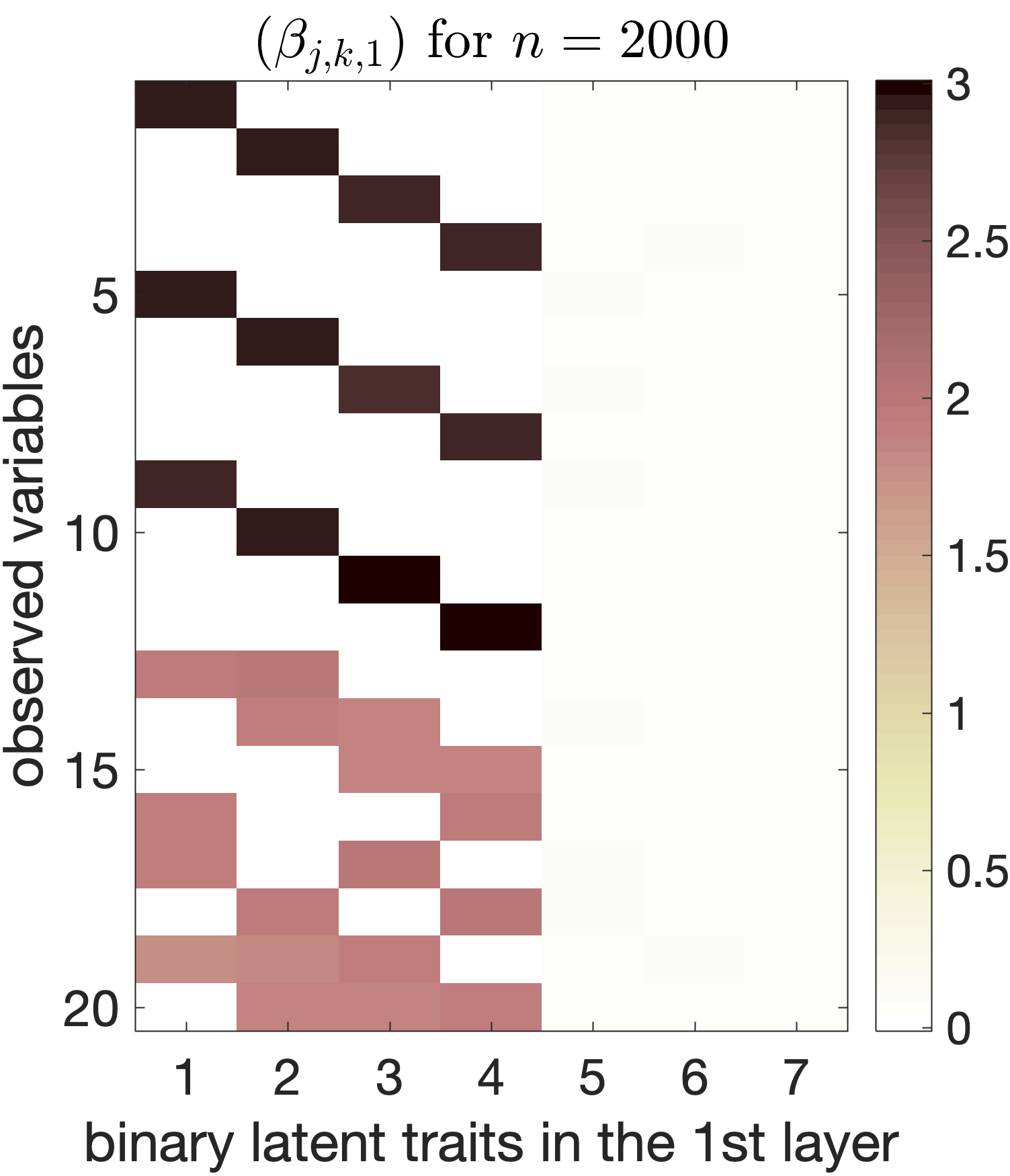}%
    \quad
    \includegraphics[height=3cm]{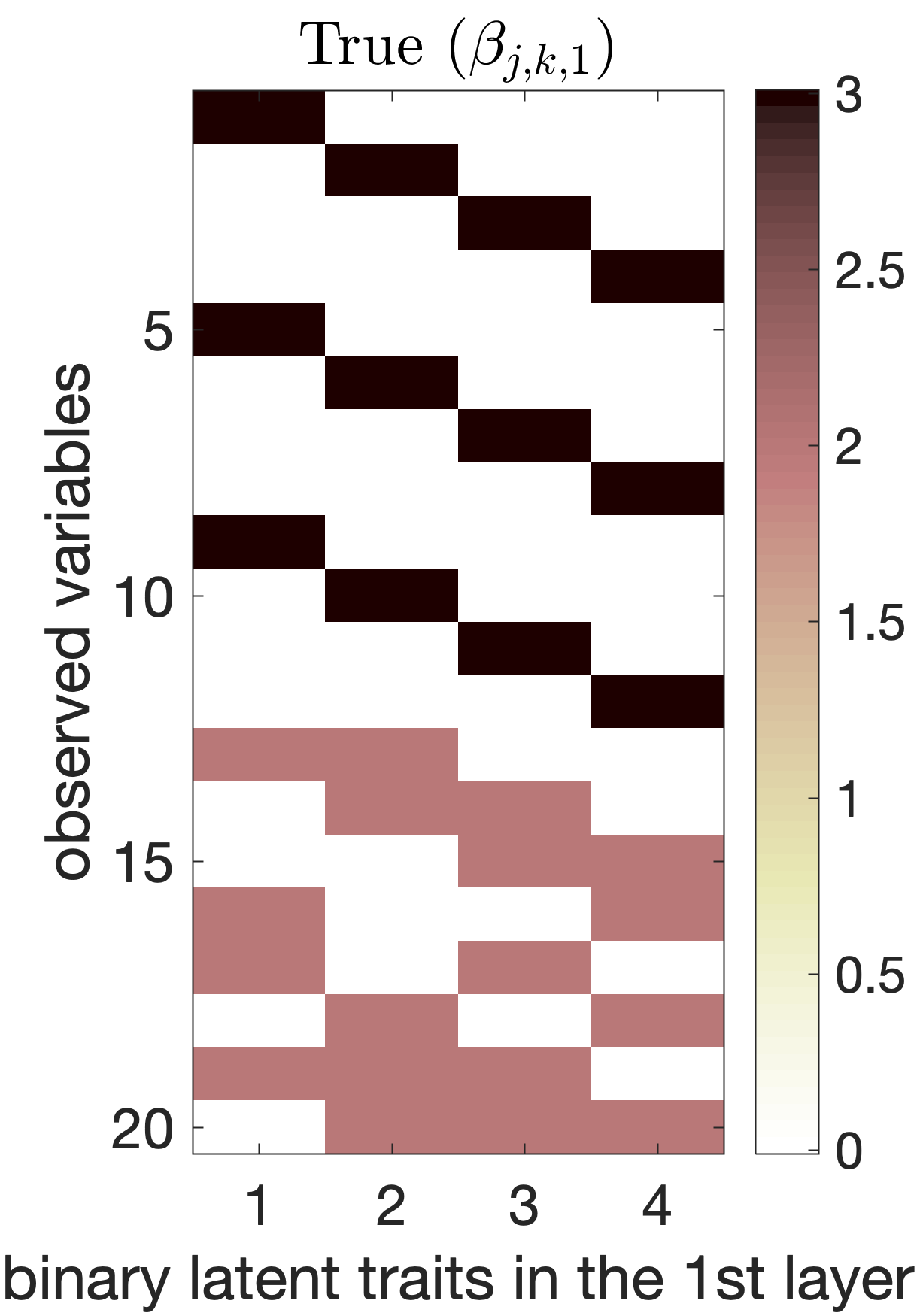}%
    }

	\caption{Posterior means of the main-effect parameters $\{\beta_{j1k}\}$ for response category $c=1$ averaged across 50 independent replications, for $n=500$ or $2000$. The $K_{\upper} = 7$ is taken in the CSP prior for posterior inference, while the true $K_1$ is 4 as shown in the rightmost plot.}
	\label{fig-betapos}
\end{figure}

To assess how the approach performs with an increasing sample size, we  consider eight sample sizes $n=250\cdot i$ for $i=1,2,\ldots,8$ under the same settings as above.  To compare the estimated structures to the simulation truth with $K_1=4$, we retain exactly four binary latent variables $k\in[K_{\upper}]$; choosing those having the 
largest average posterior variance $1/d \sum_{c=1}^d \sigma^2_{ck}$.
For the discrete structure of the binary graphical matrix $\GG^{(1)}$, we present mean estimation errors in Fig.~\ref{fig-Q}. Specifically, Fig.~\ref{fig-Q}(a) plots the errors of recovering the entire matrix $\GG^{(1)}$, Fig.~\ref{fig-Q}(b) plots the errors of recovering the row vectors of $\GG^{(1)}$, and Fig.~\ref{fig-Q}(b) plots the errors  of recovering the individual entries of $\GG^{(1)}$.
For sample size as small as $n=750$,  estimation of $\GG^{(1)}$ is very accurate across the simulation replications.
Notably, $n=750$ is much smaller than the total number of cells $d^p = 4^{20} \approx 1.1\times 10^{12}$ in the contingency table for the observed variables $\bo y$.
For continuous parameters $\bo\beta_0 = \{\beta_{jc0}\}$, $\bo\beta=\{\beta_{jck}\}$, and $\bo\eta = \{\eta_{kb}\}$ respectively, in Fig.~\ref{fig-rmse} we plot average root-mean-square errors (RMSE) of the posterior means versus sample size $n$. For each  $\bo\beta_0 = \{\beta_{jc0}\}$, $\bo\beta=\{\beta_{jck}\}$, and $\bo\eta = \{\eta_{kb}\}$, Fig.~\ref{fig-rmse} shows RMSE decreases with sample size, which is as expected given our posterior consistency result in Theorem \ref{thm-pos}.

{In simulations (and also the later real data analysis), we have checked the posterior samples collected from the MCMC algorithm and obtained the traceplots of various parameters in the model. 
By examining these, we have not observed label-switching issues in our numerical studies. 
In general, we would suggest one uses the traceplots of those mixture-component-specific parameters to examine whether there is a label switching issue. If there exists such issues, we recommend using the \texttt{R} package \texttt{label.switching}  \citep{papastamoulis2016label} for MCMC outputs to address the issue.}

\begin{figure}[h!]
    \centering
	\includegraphics[width=0.325\textwidth]{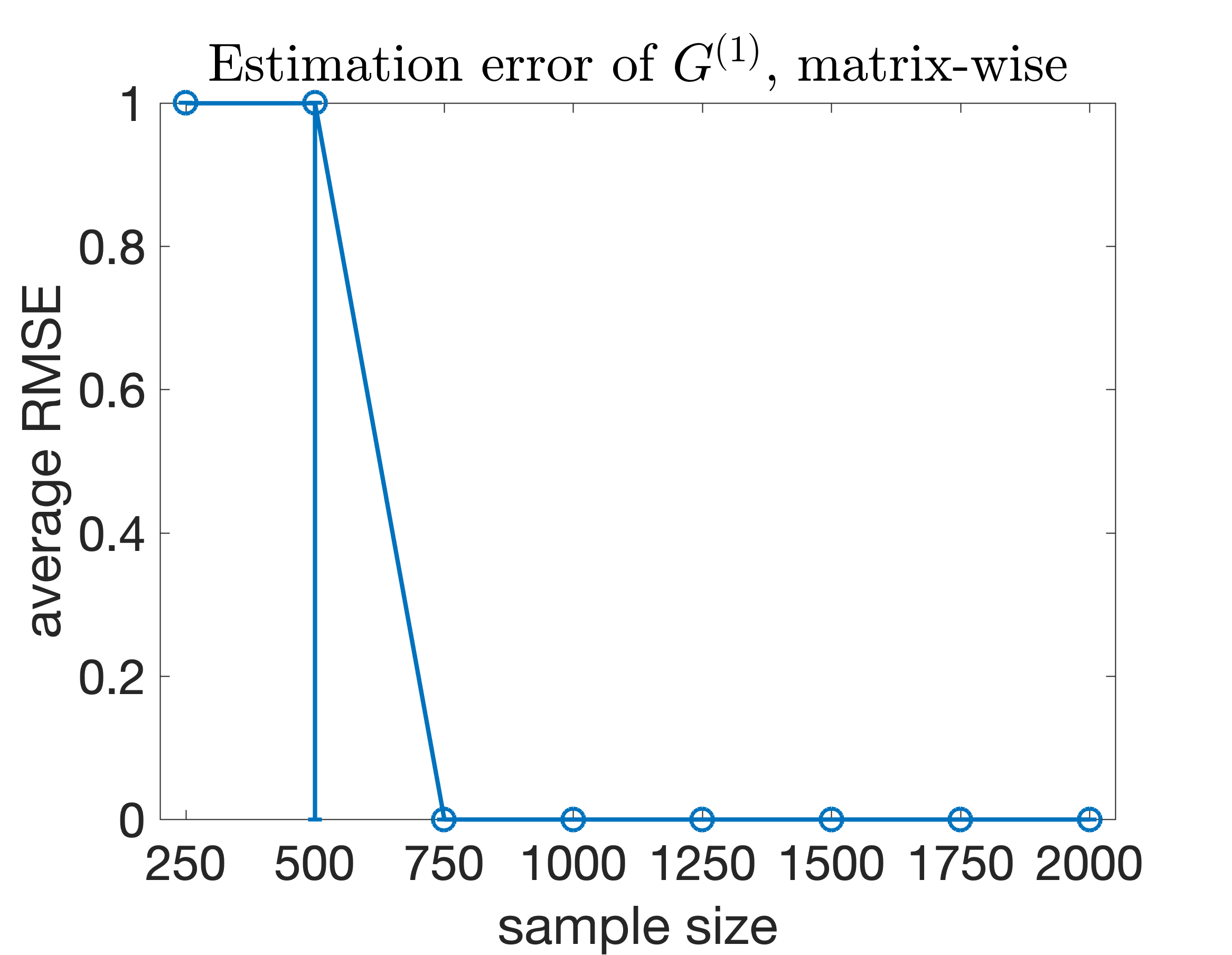}
	\includegraphics[width=0.325\textwidth]{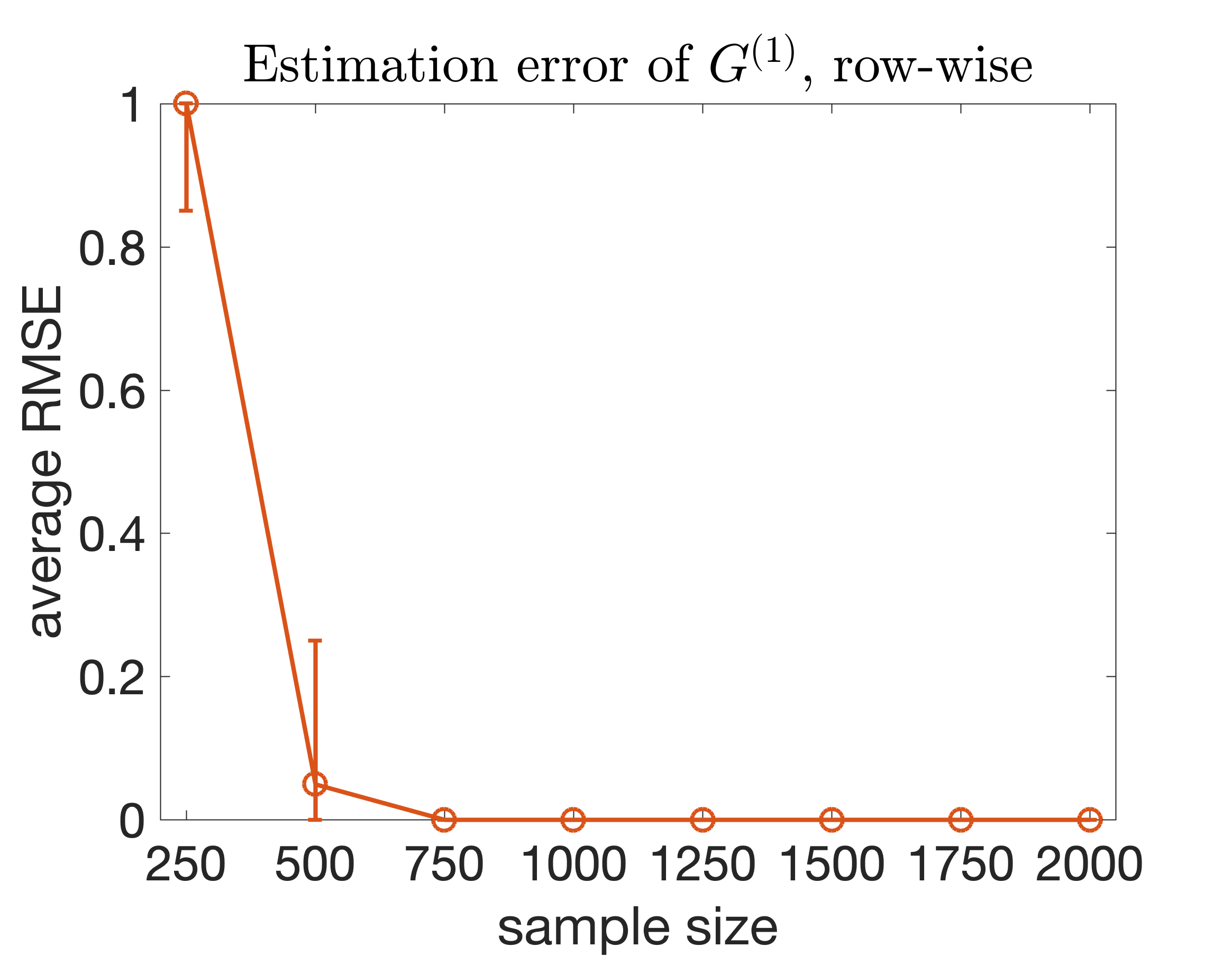}
	\includegraphics[width=0.325\textwidth]{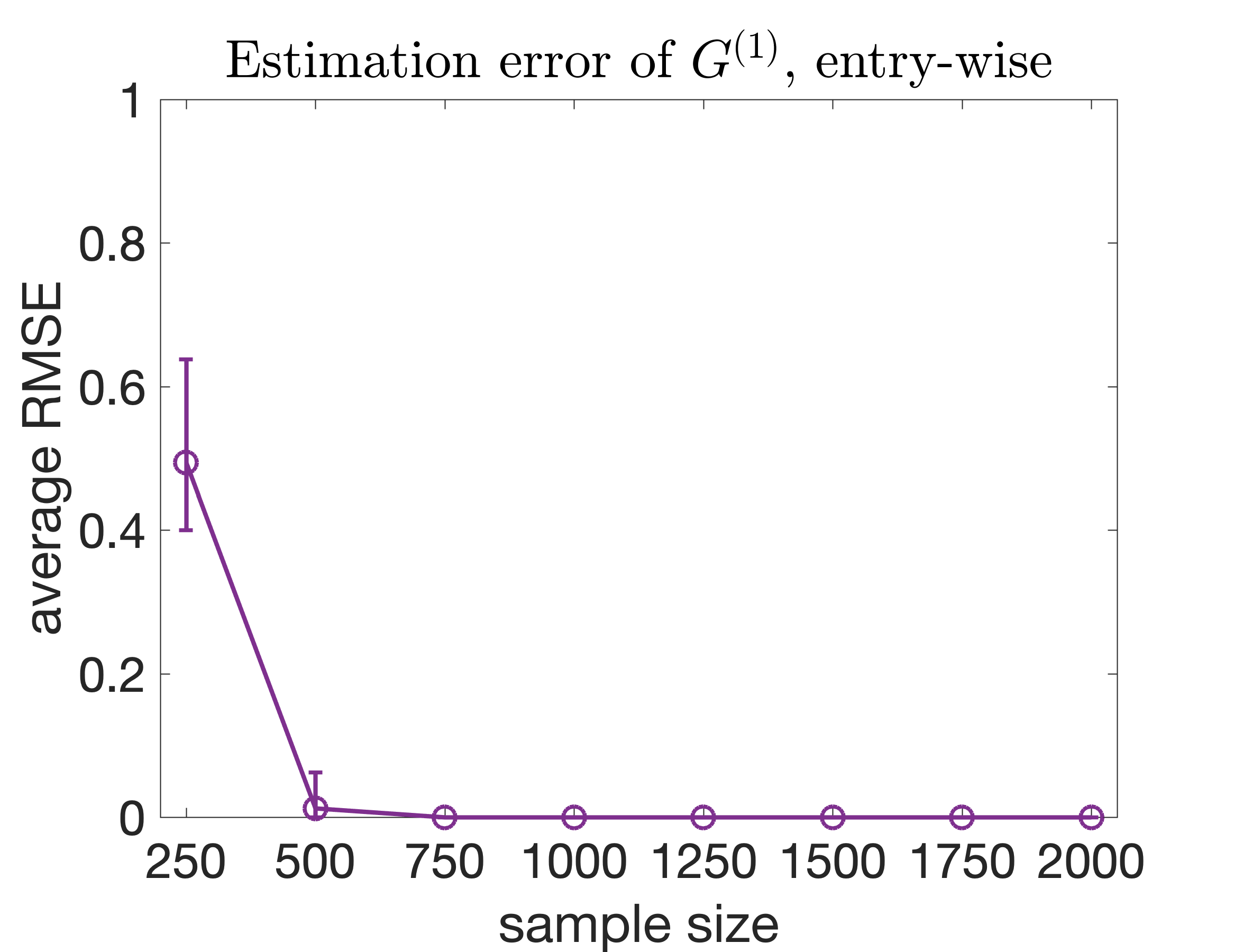}
	\caption{Mean estimation errors of the binary graphical matrix $\GG^{(1)}$ at the matrix level (in (a)), row level (in (b)), and entry level (in (c)). The median, 25\% quantile, and 75\% quantile based on the 50 independent simulation replications are shown by the circle, lower bar, and upper bar, respectively.}
	\label{fig-Q}
\end{figure}

\begin{figure}[h!]
    \centering
	\includegraphics[width=0.325\textwidth]{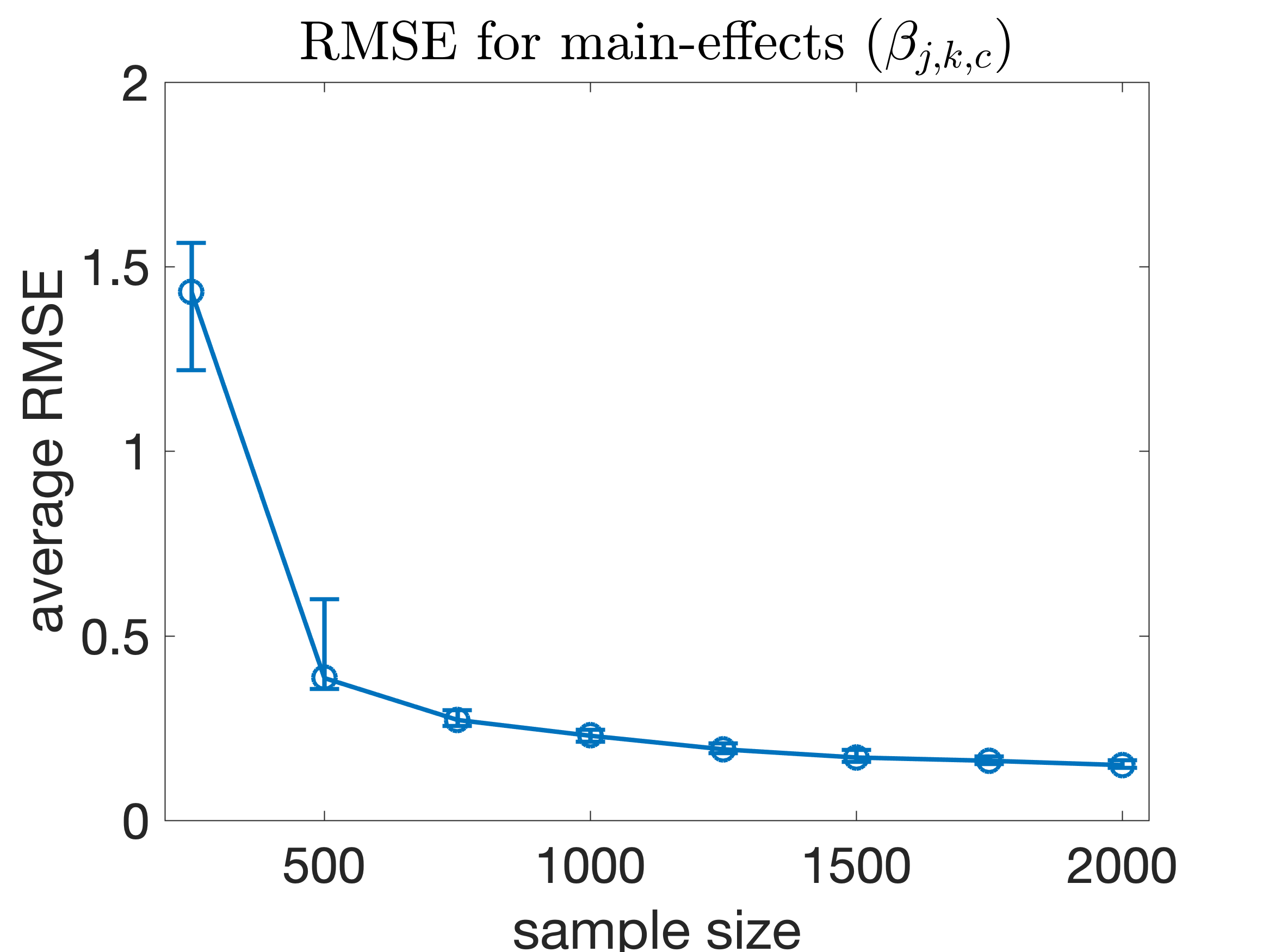}
	\includegraphics[width=0.325\textwidth]{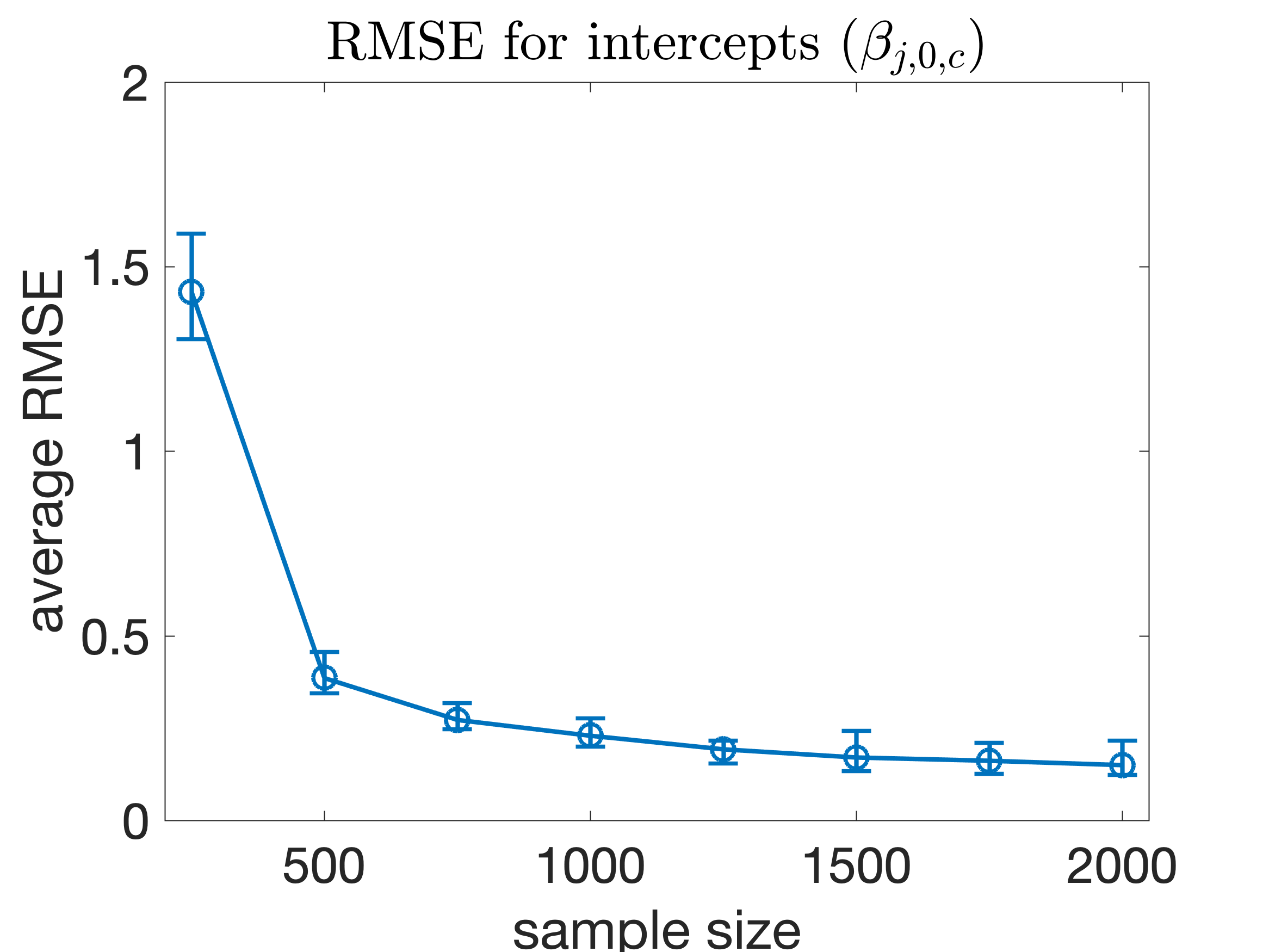}
	\includegraphics[width=0.325\textwidth]{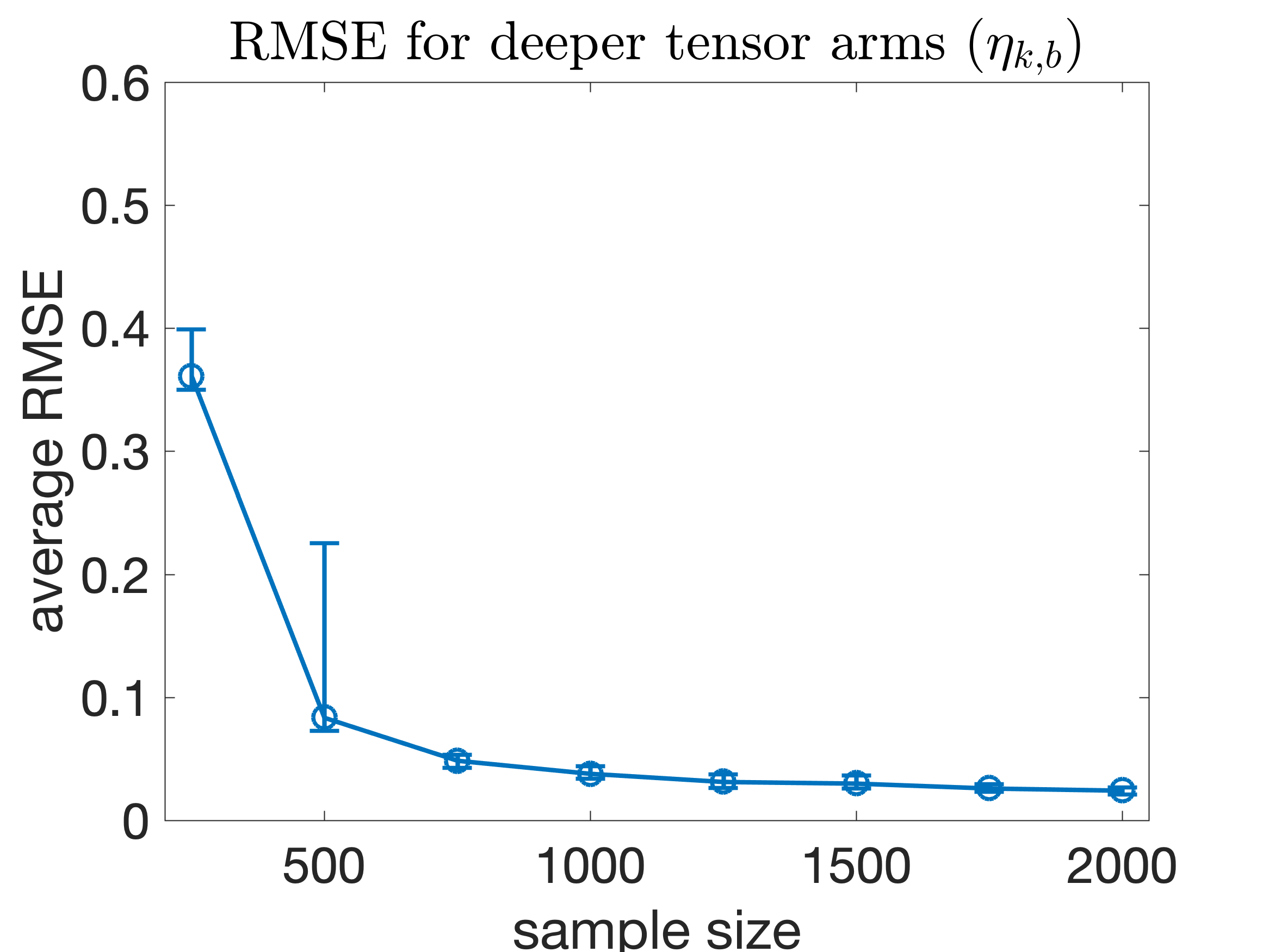}
    \caption{Average root-mean-square errors (RMSE) of the posterior means of the main-effects $\bo\beta$, intercepts $\bo\beta_0$, and deeper tensor arms $\bo\eta$ versus sample size $n=250\cdot i$ for $i\in\{1,2,\ldots,8\}$. The median, 25\% quantile, and 75\% quantile based on the 50 independent simulation replications are shown by the circle, lower bar, and upper bar, respectively.}
    \label{fig-rmse}
\end{figure}

\section{Application to DNA Nucleotide Sequence Data}\label{sec-data}
We apply our proposed two-latent-layer Bayesian Pyramid with the CSP prior to the Splice Junction dataset of DNA nucleotide sequences; the data are available from the UCI machine learning repository.
We also analyze another dataset of nucleotide sequences, the Promoter data, and present the results in the Supplementary Material.
The Splice Junction data consist of A, C, G, T nucleotides ($d=4$) at $p = 60$ positions for $n = 3175$ DNA sequences.
There are two types of genomic regions called  introns and exons; junctions between these two are called splice junctions \citep{nguyen2016dna}.
Splice junctions can be further categorized as 
(a) exon-intron junction; and (b) intron-exon junction.
The $n = 3175$ samples in the Splice dataset each belong to one of three types:  Exon-Intron junction (``EI'', 762 samples);  Intron-Exon junction (``IE'', 765 samples); and  Neither EI or IE (``N'', 1648 samples).
Previous studies have used supervised learning methods for predicting sequence type
\citep[e.g.,][]{li2003dna, nguyen2016dna}.
Here we fit the proposed two-latent-layer Bayesian Pyramid to the data in a completely unsupervised manner, with the sequence type information held out. 
We use the nucleotide sequences to learn discrete latent representations of each sequence, and then investigate whether the learned lower dimensional discrete latent features are interpretable and predictive of the sequence type.

We let the variable $z$ in the deepest latent layer have $B=3$ categories, inspired by the fact that there are three types of sequences: EI, IE, and N; but we do not use any information of which sequence belongs to which type when fitting the model to data.
{As mentioned earlier, the upper bound for the binary latent variables $K_{\upper}$ can be set to $\ceil{p/3}$ or smaller in order to yield an identifiable Bayesian Pyramid model. In practice, when $p$ is large, we recommend starting with a relatively small $K_{\upper}$, inspecting the estimated active/redundant latent dimensions, and only increasing $K_{\upper}$ if all the latent dimensions are estimated to be active {\em a posteriori} (that is, if the posterior model of $K^\star$ defined in \eqref{eq-kstar} equals $K_{\upper}-1$). For this splice junction dataset with $p=60$, we start with $K_{\upper}=7$ for better computational efficiency; this $K_{\upper}$ is the same as that used in the simulations and already allows for $2^{K_{\upper}-1} = 64$ distinct latent binary profiles. We find that the posteriors select only $K^\star=5$ active latent dimensions; this suggests that it is not necessary to increase $K_{\upper}$ to a larger number for this dataset.}

We still run the Gibbs sampler for 15000 iterations, discarding the first 10000 as burn-in, and
retaining every fifth sample post burn-in to thin the chain.
Based on examining traceplots, the sampler has good mixing behavior.
As mentioned in the last paragraph, our method selects $K^\star = 5$ binary latent variables.
We index the saved posterior samples 
by $r\in\{1,\ldots,R=2000\}$, for each $r$ denote the samples of $\GG$ by $(g^{(r)}_{j,k};\, j\in[60], k\in[5])$. 
Similarly for each $r$, denote the posterior samples of the nucleotides sequences' latent binary traits by $(a^{(r)}_{i,k};\, i\in[3175], k\in[5])$, and denote those of the nucleotide sequences' deep latent category by $(z_i^{(r)};\, i\in[3175])$ where $z_i^{(r)}\in\{1,2,3\}$.
Define our final estimators 
$\widehat\GG = \left(\widehat g_{j,k}\right)$,  $\widehat{\mb A} = \left(\widehat \alpha_{i,k}\right)$, and $\widehat{\mb Z} = (\widehat z_{i,b})$ to be
\begin{align*}
    &\widehat g_{j,k} = \mathbbm{1}\left(\frac{1}{R} \sum_{r=1}^R g^{(r)}_{j,k} > \frac{1}{2}\right),\quad
    \widehat \alpha_{i,k} = \mathbbm{1}\left(\frac{1}{R} \sum_{r=1}^R a^{(r)}_{i,k} > \frac{1}{2}\right),\\[3mm]
    &
    \widehat z_{i,b} =
    \begin{dcases}
        1, & \text{if}~ b = \argmax_{b\in[B]} \frac{1}{R} \sum_{r=1}^R \mathbbm{1}\left(z^{(r)}_{i}=b\right);\\
        0, & \text{otherwise}.
    \end{dcases}
\end{align*}
The $\widehat g_{j,k}$, $ \widehat \alpha_{i,k}$, and $\widehat z_{i,b}$ summarize information of the element-wise posterior modes of the discrete latent  structures in our model.
The $60\times 5$  matrix $\widehat\GG$ depicts how the loci load on the binary latent traits, the $3175\times 5$ matrix $\widehat{\mb A}$ depicts the presence or absence of each binary latent trait in each  nucleotide sequence, and the $3175\times 3$ matrix $\widehat{\mb Z}$ depicts which deep latent group each nucleotide sequence belongs to. 
The $\widehat{\mb G}$, $\widehat{\mb A}$, and $\widehat{\mb Z}$ are all binary matrices, but the first two are binary feature matrices while the last one $\widehat{\mb Z}$ has each row having exactly one entry of ``1'' indicating group membership.
In Fig.~\ref{fig-splice}, the first three plots display the three estimated matrices $\widehat\GG$, $\widehat{\mb A}$, and $\widehat{\mb Z}$, respectively; and the last plot shows the held-out gene type information for reference.
As for the estimated loci loading matrix $\widehat\GG$, Fig.~\ref{fig-splice}(a) provides information on how the $p=60$ loci depend on the five binary latent traits. Specifically, we found that the first 27 loci show somewhat similar loading patterns and mainly load on the first four binary traits. Also, the middle 10 loci (from locus 28 to locus 37) are similar in loading on all five traits, and the last 23 loci (from locus 38 to locus 60) are similar in exhibiting sparser loading structures. 
Fig.~\ref{fig-splice}(b)--(d) show that the two matrices $\widehat{\mb A}$ and $\widehat{\mb Z}$ corresponding to the $n=3175$ nucleotide sequences exhibit a clear pattern of clustering, which aligns well with the known but held-out junction types EI, IE, and N.

\begin{figure}[h!]
	\centering
	\bigskip
	\resizebox{\textwidth}{!}{%
    \includegraphics[height=3cm]{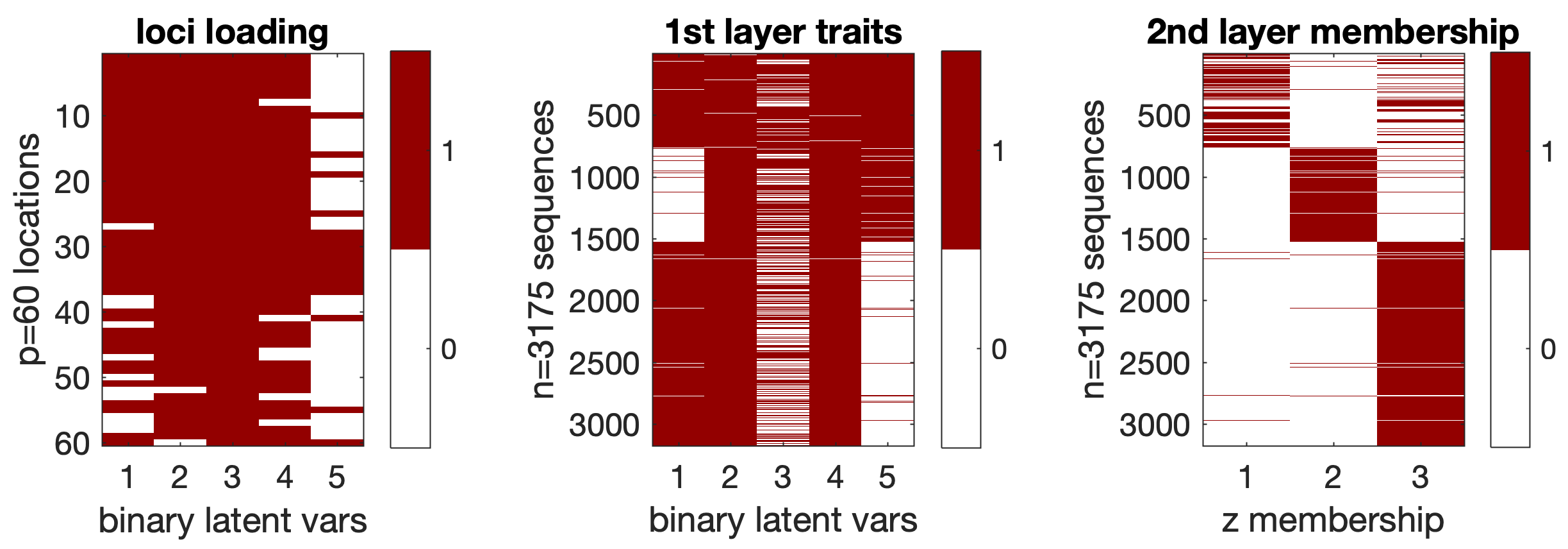}%
    \quad
    \includegraphics[height=3cm]{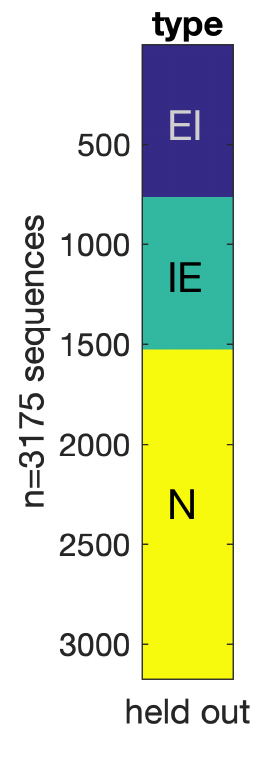}%
    }
    
    \begin{minipage}[c]{0.26\textwidth}\centering
    (a) $\widehat\GG_{60\times 5}$
    \end{minipage}
    ~
    \begin{minipage}[c]{0.26\textwidth}\centering
    (b) $\widehat{\mb A}_{3175\times 5}$
    \end{minipage}
    ~
    \begin{minipage}[c]{0.26\textwidth}\centering
    (c) $\widehat{\mb Z}_{3175 \times 3}$
    \end{minipage}
    ~
    \begin{minipage}[c]{0.15\textwidth}\centering
    (d) held-out
    \end{minipage}
    
	\caption{Splice junction data analysis under the CSP prior with $K_{\upper} = 7$. 
	Plots are presented with the $K^\star = 5$ binary latent traits selected by our proposed method a posteriori. 
	After applying the rule-lists approach to deterministically match the latent features to the gene types as in \eqref{eq-rulelist}, the accuracies for predicting the gene types EI, IE, N are all above 95\%.}
	\label{fig-splice}
\end{figure}

To formally assess how the latent discrete features learned by the proposed method perform in downstream prediction, we apply the ``rule lists'' classification approach in \cite{angelino2017rule} to the estimated latent features in $\widehat{\mb A}$ and $\widehat{\mb Z}$ for $n=3175$ nucleotide sequences.
The rule-lists approach is an interpretable classification method based on a categorical feature space, and it finds simple and deterministic {rules} of the categorical features in predicting a (binary) class label. 
For each instance $i\in\{1,\ldots,n=3175\}$, we define the categorical features to be the $8$-dimensional  vector $ \widehat \xx_i = (\widehat z_{i,1},\,\widehat z_{i,2},\,\widehat z_{i,3},\,\widehat a_{i,1},\,\ldots,\widehat a_{i,5})$.
The $\widehat \xx_i$ is concatenated from $\widehat{\bo z}_i$ and $\widehat {\bo a}_i$ and is a feature vector of binary entries.
Denote the ground-truth nucleotide sequence types by $\bo t=(t_i;\; 1\leq i \leq 3175)$, then $t_i = \text{EI}$ for $i=1,\ldots,762$, 
$t_i = \text{IE}$ for $i=763,\ldots,1527$, and $t_i = \text{N}$ for $i=1528,\ldots,3175$.
Recall that the information of $t_i$'s are not used in fitting our Bayesian Pyramid to obtain the latent features $\widehat{\bo x}_i$'s.
We use the Python package \verb|CORELS| for the rule-lists method with $\bo x_i$'s and $t_i$'s as input, and find rules that match $\widehat\xx_i$'s to $t_i$'s. 
Specifically, these deterministic rules given by \verb|CORELS| are:
\begin{eqnarray}\label{eq-rulelist}
    \widehat t_i^{\;\text{EI}}
    &=&
    \mathbbm{1}(\widehat a_{i,1} = 1 ~\text{and}~ \widehat a_{i,5} = 1);\\ \notag
    \widehat t_i^{\;\text{IE}}
    &=&
    \mathbbm{1}(\widehat z_{i,2} = 1);\\ \notag
    \widehat t_i^{\;\text{N}}
    &=& 
    \mathbbm{1}(\widehat z_{i,2} = 0 ~\text{and}~ \widehat a_{i,5} = 0)
\end{eqnarray}
Eq.~\eqref{eq-rulelist} gives a very simple and explicit rule depending on $\widehat {z}_{i,2}$, $\widehat a_{i,1}$, and $\widehat a_{i,5}$ for each nucleotide sequence $i$. 
This simple rule can be empirically verified by comparing the middle two plots to the rightmost plot in Fig.~\ref{fig-splice}. 
In \eqref{eq-rulelist}, the rules for EI and IE are not mutually exclusive, but those for EI and N are mutually exclusive and the same holds for IE and N.
Therefore, to obtain mutually exclusive rules for the three class labels EI, IE, and N based on \eqref{eq-rulelist}, we can simply define the following two types of labels $\widehat {\bo t}^{\;\star} = (\widehat t_i^{\;\star};\; i\in[n])$ and $\widehat {\bo t}^{\;\dagger} = (\widehat t_i^{\;\dagger};\; i\in[n])$,
\begin{align}\label{eq-stardagger}
    \widehat t_i^{\;\star} = 
    \begin{dcases}
    \text{EI}, & \text{if}~
    \widehat t_i^{\;\text{EI}}=1~\text{and}~
    \widehat t_i^{\;\text{IE}}=0;\\
    \text{EI}, & \text{if}~
    \widehat t_i^{\;\text{IE}}=1;\\
    \text{N}, & \text{if}~
    \widehat t_i^{\;\text{N}}=1;
    \end{dcases}
    \quad\text{or}\quad
    \widehat t^{\;\dagger}_i = 
    \begin{dcases}
    \text{EI}, & \text{if}~
    \widehat t_i^{\;\text{EI}}=1;\\
    \text{EI}, & \text{if}~
    \widehat t_i^{\;\text{IE}}=1~\text{and}~
    \widehat t_i^{\;\text{EI}}=0;\\
    \text{N}, & \text{if}~
    \widehat t_i^{\;\text{N}}=1.
    \end{dcases}
\end{align}
\begin{figure}[h!]
    \centering
	\includegraphics[width=0.37\textwidth]{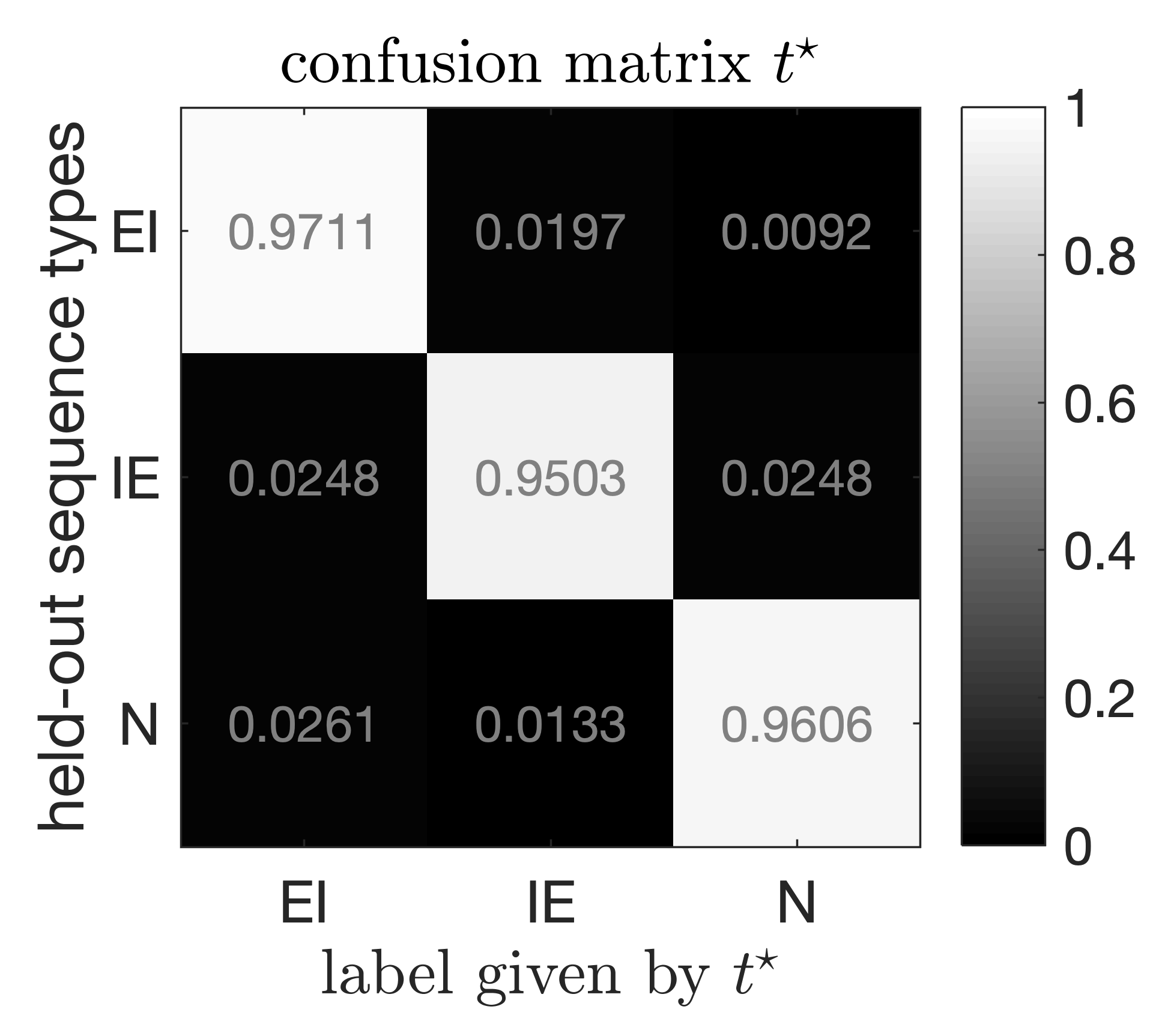}
	\qquad
	\includegraphics[width=0.37\textwidth]{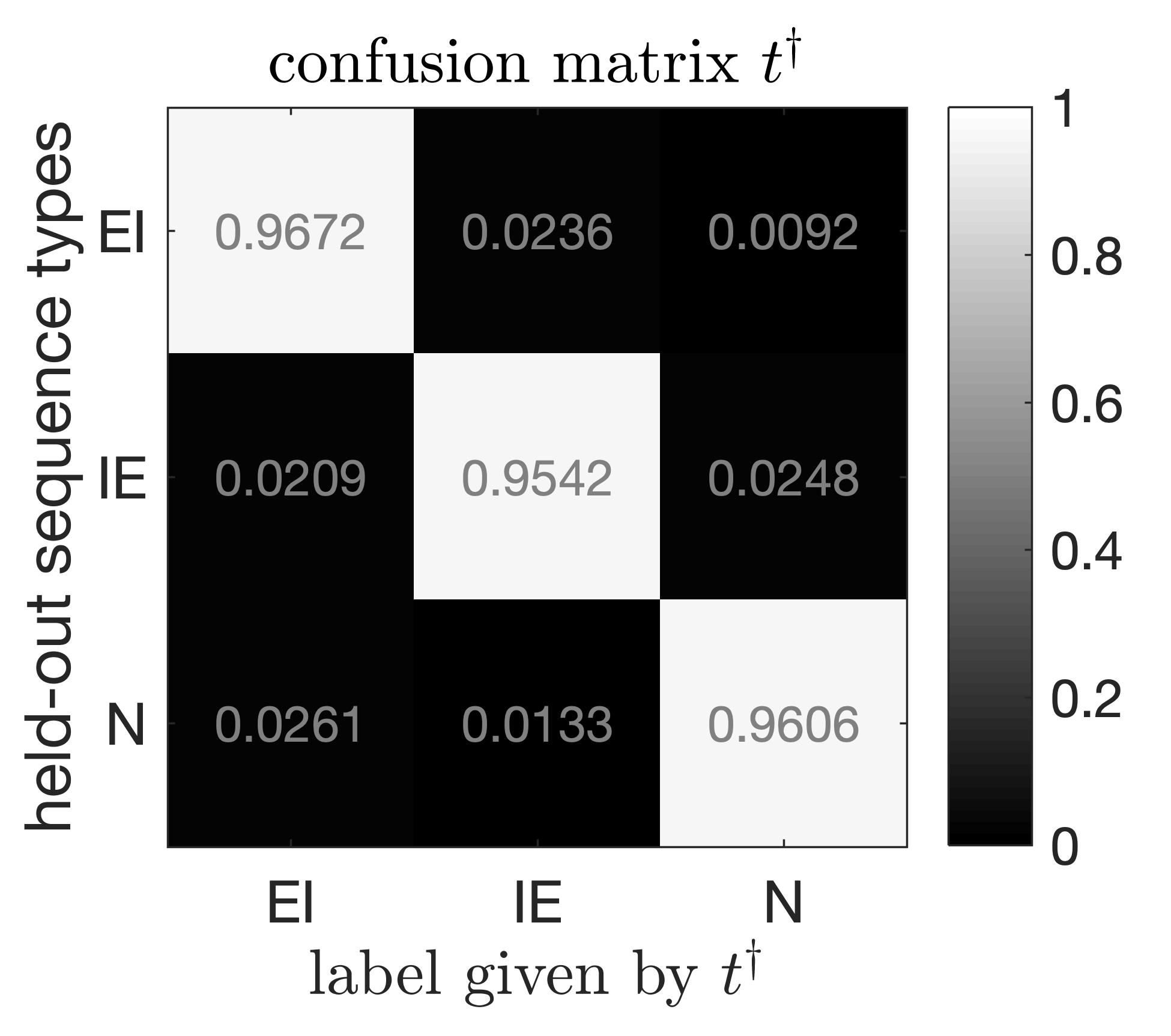}
	\caption{Downstream interpretable classification results for splice junction data following fitting the two-latent-layer Bayesian Pyramid.
	Prediction performance of the learned lower-dimensional discrete latent features are summarized in two confusion matrices, corresponding to $\widehat {\bo t}^{\;\star}$ and $\widehat {\bo t}^{\;\dagger}$ defined in \eqref{eq-stardagger}.
	}
	\label{fig-confmat}
\end{figure}

For the two types of labels $\widehat {\bo t}^{\;\star}$ and $\widehat {\bo t}^{\;\dagger}$ in \eqref{eq-stardagger}, we provide their corresponding normalized confusion matrices in Fig.~\ref{fig-confmat} along with their heatmaps. Fig.~\ref{fig-confmat} shows that both $\widehat {\bo t}^{\;\star}$ and $\widehat {\bo t}^{\;\dagger}$ have very high prediction accuracies for the sequence types $\bo t$, with all the diagonal entries above $95\%$.
The superb prediction performance as shown in Fig.~\ref{fig-confmat} implies that the learned discrete latent features of nucleotide sequences are interpretable and very useful in the downstream task of classification.
For the Splice Juntion dataset, such accuracies are even comparable to the state-of-the-art prediction performance given by fine-tuned convolutional neural networks given in \cite{nguyen2016dna}, which are also between 95\% and 97\%.

In the above data analysis, we have fixed $B=3$ inspired by the fact that there are three types of splice junctions. 
Alternatively, we can use the \textit{overfitted mixture framework} of \cite{rousseau2011overfit} to accommodate an unknown $B$.
Overfitted mixtures are finite mixture models with an unknown number of mixture components $B$, where only an upper bound $B_{\text{upper}}$ of $B$ is known. 
In such scenarios, \cite{rousseau2011overfit} proposed to use shrinkage priors for the mixture proportion parameters, such as Dirichlet priors with small Dirichlet hyperparameters.
More specifically, denote the mixture proportion parameters corresponding to the $B_{\text{upper}}$ ``overfitted'' mixture components by $\bo\pi := (\pi_1,\ldots,\pi_{B_{\text{upper}}})$, then $\bo\pi$ lives on the $(B_{\text{upper}}-1)$-dimensional probability simplex. The overfitted mixture framework in \cite{rousseau2011overfit} guarantees that with the prior $\bo\pi\sim \text{Dirichlet}(a_1, \ldots,  a_{B_{\text{upper}}})$ where the Dirichlet parameters are small enough with respect to the dimension of the observations, the redundant mixture components will be ``emptied out'' in the posterior asymptotically. Hence, the $B$ true mixture components underlying the data will be identified from the posterior distribution. In the Supplementary Material, we reanalyze the real data specifying $B=5$ and simulation studies with an overfitted $B$, and achieve favorable results.

For comparison, we also applied/implemented two alternative discrete latent structure models, including: 
(a) the latent class model, with a single univariate discrete latent variable behind the observed data layer; and
(b) a single-latent-layer model, with one layer of \textit{independent} binary latent variables behind the data layer.
For model (a), we applied the nonparametric Bayesian method in \cite{dunson2009} to the splice junction dataset and extracted $k_{\text{DX}}=10$ latent classes (default in \cite{dunson2009}; increasing $k_{\text{DX}}$ beyond 10 is not considered because in the current 10 latent classes there are already empty classes not occupied by any nucleotide sequence).
For model (b), we implemented a new Gibbs sampler  (see description in S.9.3 in the Supplementary Material), still employing the cumulative shrinkage process prior as used for Bayesian Pyramids.
Then based on the learned lower-dimensional latent representations, we still apply the ``rule-lists'' classification approach in \cite{angelino2017rule} (via the Python package \texttt{CORELS}) for each of the three labels EI, IE, and N via binary classification. 
{In addition to the above two competitors (abbreviated as ``DX2009'' and ``IndepBinary'', respectively), we also consider a benchmark approach of directly applying the rule-lists classifier to raw DNA nucleotide sequences. Since the rule-lists classifier in the \texttt{CORELS} package  only applies to binary predictors, we first convert the DNA nucleotide sequences of A, G, C, T into longer sequences containing binary indicators of whether each loci is ``A'', or ``G'', or ``C''; after this, each original $60$-dimensional nucleotide sequence is converted to a binary vector of length $180$. We then apply the rule-lists classifier using the same setting of learning at most three ``rules'' (i.e., \texttt{max\_card}$\;=3$ in the \texttt{CORELS} package) as all the other three latent variable methods: DX2009, IndepBinary, and BayesPyramid.
We obtain the classification accuracy in Table \ref{tab-acc}.}

{The left three columns in Table \ref{tab-acc} list accuracies on the training set containing 80\% of the samples, and the right three columns on the test set containing the remaining 20\% of the samples.
The test accuracies are very close to the training ones, indicating satisfactory generalizability.
The reason is that the maximum number of rules learned for each model is limited to be at most three, i.e., \texttt{max\_card}$\;=3$ in the \texttt{CORELS} package.
Such a small number of rules in a classifier do not overfit the training data and hence generalize well to the test data. 
We remark that when increasing \texttt{max\_card} beyond three, the downstream classification accuracy of BayesPyramid remains the same as those in Table \ref{tab-acc}; however, for the raw nucleotide sequences, 
the \texttt{CORELS} package cannot complete the execution of the classifier with \texttt{max\_card}$\;=4$, likely due to the high-dimensionality of the search space of rules. Therefore, we present the results of \texttt{CORELS} classification with \texttt{max\_card}$\;=3$ in Table \ref{tab-acc}.}

{Remarkably, Table \ref{tab-acc} shows that all the three unsupervised learning methods learn latent features that are more predictive of the sequence type than the raw sequences themselves, among which the Bayesian Pyramid is the apparent top performer. This observation indicates that the latent variable approaches, especially our multilayer Bayesian Pyramids, can ``denoise'' the raw sequence data and return more useful features for downstream tasks.}
The model in \cite{dunson2009} (``DX2009'') and the single-layer-independent-latent model (``IndepBinary'') are special cases of Bayesian Pyramids, and both have excellent performance on the splice data.  However, Bayesian Pyramids significantly decreased the test set missclassification rate of IndepBinary by 31--60\%, of DX2009 by 51--68\%, {and of raw nucleotide sequences by 81--87\%.}

\begin{table}
\caption{\label{tab-acc}Downstream classification accuracy for Splice data.
}
\centering
\fbox{%
\resizebox{0.95\textwidth}{!}{
\begin{tabular}{lcccccc}
 & EI (train) & IE (train) & N (train) & EI (test) & IE (test) & N (test) \\
\hline
Raw nucleotide nucleotides & 0.909 & 0.854 & 0.840 &  0.916 & 0.866 & 0.839 \\
DX2009 & 0.946 & 0.955 & 0.918 & 0.950 &  0.965 & 0.929  \\
IndepBinary & 0.964 & 0.955 & 0.956 & 0.965 & 0.957 & 0.965  \\
BayesPyramid & 0.971 & 0.975 & 0.970 &  0.984 & 0.983 & 0.976  \\
\end{tabular}}
}
\end{table}

\section{Discussion}\label{sec-disc}
In this article, we have proposed Bayesian Pyramids, a general family of discrete multilayer latent variable models for multivariate categorical data. Bayesian Pyramids cover multiple existing statistical and machine learning models as special cases, while allowing for various different assumptions on the conditional distributions. Our identifiability theory is key in providing reassurance that one can reliably and reproducibly learn the latent structure, guarantees that are lacking from almost all of the related literature.
The proposed Bayesian approach has excellent performance in the simulations and data analyses, and shows good computational performance and a surprising ability to accurately infer meaningful and useful latent structure.  
There are immediate applications of the proposed Bayesian Pyramid approach in many other disciplines.
For example, in ecology and biodiversity research, joint species distribution modeling is a very active area \citep[see a recent book][]{ovaskainen2020eco}. The data consist of high-dimensional binary indicators of presence or absence of different species at each sampling location.  In this contest, Bayesian Pyramids provide a useful approach for inferring latent community structure of biologic and ecological interest.

This work focuses on the {unsupervised} setting and uses {latent variable approaches}.
In unsupervised cases, there is not a specific outcome/response associated with each data point, and the general goal is to discover ``interesting'' patterns in data \citep{murphy2012machine}.
Such unsupervised learning problems are generally considered more challenging and less well studied than supervised ones.  Latent variables can capture semantically interpretable constructs, and marginalizing out latents induces great flexibility in the resulting marginal distribution of observables.
Bayesian Pyramids provide unsupervised feature discovery tools for learning latent architectures underlying multivariate data, yielding insight into data generating processes, performing nonlinear dimension reduction, and extracting useful latent representations.

{In order to realize the great potential of latent variable approaches to unsupervised learning, identifiability issues must be addressed so that reliable latent constructs can be reproducibly extracted from data.
Especially, if one wishes to {interpret the latent} representations and parameters estimated from a latent variable model and use them in downstream analyses, then identifiability is a prerequisite for making such interpretation meaningful and reproducible.
In this sense, identifiability is necessary for interpretability of the latent structures.
We remark that such interpretability considerations differ from the goals of learning interpretable treatment rules for some outcome in complex supervised learning settings \citep[see, e.g.][]{kitagawa2018ewm, semenova2021debiased}.
In modern machine learning, although powerful deep latent variable models have been proposed, identifiability issues have rarely been considered and the design of the latent architecture is often guided by heuristics.
In this work, our identifiability theory directly inspires how to specify the deep latent architecture in a Bayesian Pyramid: the identifiability conditions on matrices $\GG^{(m)}$ (in Theorem 4)  inspire us to specify a ``pyramid'' structure featuring fewer and fewer latent variables deeper up the hierarchy ($J\geq 3K_1;\; K_1\geq 3 K_2,\; \ldots$) and sparse graphical connections between layers.
}

{In the future, it is worth further advancing and refining scalable algorithms for Bayesian Pyramids beyond the two-latent-layer case.
Methodologically, our current Gibbs sampling procedure can be readily extended to multiple latent layers, because non-adjacent layers in a Bayesian Pyramid are conditionally independent and the current Gibbs updates for shallower layers can be similarly applied to deeper ones.
In terms of the computational performance, however, efficiently sampling discrete latent variables via MCMC is generally more challenging than sampling continuous ones. In a Bayesian Pyramid, binary entries of the graphical matrices $\mathbf G^{(1)}, \mathbf G^{(2)}, \ldots$ act similarly as covariate selection indicators in (logistic) regression problems, yet are more difficult to estimate because the ``covariates'' $\aaa^{(1)},\aaa^{(2)},\ldots$ themselves are latent and discrete.
This fact can cause unsatisfactory mixing behavior of naive extensions of the Gibbs sampler to deeper models. 
For example, sampling to update an entry in a deeper graphical matrix $\mathbf G^{(d)}$ where $d\geq 2$ can cause a substantial shift of the posterior space for the associated continuous parameters in downward layers.
Future research is warranted to advance the MCMC methodology or explore complementary methods for estimating the discrete latent structures in deeper models.
}

\section*{Acknowledgements}
The authors thank the Editor, Associate Editor, and three referees for helpful and constructive comments.

\section*{Funding Statement}
This work was partially supported by National Science Foundation grant DMS-2210796.
This work was also partially supported by grants R01ES027498 and R01ES028804 from National Institutes of Environmental Health Sciences of the National Institutes of Health, and received funding from European Research Council under the European Union's Horizon 2020 research and innovation program (grant agreement No 856506).

\section*{Data Availability Statement}
Matlab code implementing the proposed method and data are available at this GitHub repository: \texttt{https://github.com/yuqigu/BayesianPyramids}.

\section*{Conflict of Interest Statement}
The authors declare no potential conflicts of interest.

\singlespacing
\bibliographystyle{apalike}
\bibliography{ref.bib}
\spacingset{1.45}

\newpage

\begin{center}
 {\LARGE{Supplement to ``Bayesian Pyramids: Identifiable Multilayer Discrete Latent Structure Models for Discrete Data''}}
\end{center}

\bigskip
\bigskip
\bigskip

\renewcommand{\thesection}{S.\arabic{section}}  
\renewcommand{\thetable}{S.\arabic{table}}  
\renewcommand{\thefigure}{S.\arabic{figure}}
\renewcommand{\theequation}{S.\arabic{equation}}

\setcounter{section}{0}
\setcounter{equation}{0}
\setcounter{figure}{0}

In this Supplementary Material, Section \ref{sec-pf} presents  technical proofs of all the theoretical results included in the main text, Section \ref{sec-poscomp} provides the Gibbs sampling details of the two-latent-layer Bayesian Pyramids and those of an alternative model, and Section \ref{sec-nume} presents additional numerical analyses, including overfitted mixture model simulations and real data analysis, and the analysis of the promoter dataset of nucleotide sequences.


\section{Technical Proofs of Theoretical Results}
\label{sec-pf}
\subsection{Proof of Proposition \ref{prop-relation}}
We first prove part (a). 
Recall the model \eqref{eq-model} gives the following joint distribution of the observed $\bo y$ and all the latent variables, 
\begin{align*}
	\MP(\bo y,\aaa^{(1)},\ldots,\aaa^{(D-1)},z^{(D)}) 
	= \MP(\bo y\mid \aaa^{(1)})\cdot
	\prod_{m=1}^{D-2} \MP(\aaa^{(m)}\mid \aaa^{(m+1)})\cdot
	\MP(\aaa^{(D-1)}\mid z^{(D)})
	\cdot \mathbb P(z^{(D)}).
\end{align*}
In the above display, if marginalizing out all the latent variables except the $\aaa^{(1)}$ in the first latent layer, then we obtain
\begin{align*}
	\MP(\bo y,\aaa^{(1)}) 
	&= \sum_{\aaa^{(2)}}\cdots \sum_{\aaa^{(D-1)}}\sum_{z^{(D)}}
	\MP(\bo y,\aaa^{(1)},\ldots,\aaa^{(D-1)},z^{(D)})\\
	&= \MP(\bo y\mid \aaa^{(1)}) \sum_{\aaa^{(2)}}\cdots \sum_{\aaa^{(D-1)}}\sum_{z^{(D)}} \mathbb P(\aaa^{(1)}\mid\aaa^{(2)}) \cdots \mathbb P(\aaa^{(D-1)}\mid z^{(D)}) \mathbb P(z^{(D)})
	\\
	&= \MP(\bo y\mid \aaa^{(1)}) \cdot \mathbb P(\aaa^{(1)}),
\end{align*}
where the $\mathbb P(\aaa^{(1)})$ can be understood as the probability mass function for the $K_1$-dimensional binary vector $\aaa^{(1)}$.
We introduce a notation $\nnu^{(1)} = \left(\nu^{(1)}_{\bo a};~\bo a \in \{0,1\}^{K_1}\right)$ to denote the parameters of this categorical distribution for $\aaa^{(1)}$. Then $\nnu^{(1)}$ lives in the $(2^{K_1}-1)$-dimensional simplex.
Then based solely on $\aaa^{(1)} \in \{0,1\}^{K_1}$, the vector $\bo y$ has the following probability mass function for each $\bo c\in\times_{j=1}^p [d_j]$,
\begin{align}\label{eq-ya1}
	\MP(\bo y = \bo c\mid \nnu^{(1)}, \, \mathfrak L^{(1)},\,\GG^{(1)})
	= \sum_{\bo a \in \{0,1\}^{K_1} } \nnu^{(1)}_{\bo a} \prod_{j=1}^p \mathbb P(y_j = c_j \mid \aaa^{(1)} = \bo a,\,\mathfrak L^{(1)},\,\GG^{(1)}),
\end{align}
where the $\mathfrak L^{(1)} = \{\Lambda^{(j)};~j\in[p]\}$ denotes the collection of conditional probability tables of $y_j$ given $\aaa^{(1)}$.
First, note that the above display gives a latent class model for $\bo y$ with $2^{K_1}$ latent classes.
We next show that the above constrained latent class model satisfies the equality constraint \eqref{eq-equa} under the constraint matrix $\mb S^{(1)}$ constructed in \eqref{eq-relation}.
To prove this conclusion, consider two latent classes characterized by $\bo a\neq\bo a^\star\in\{0,1\}^{K_1}$ with $\bo a=(a_1,\ldots,a_{K_1})$ and $\aaa^\star=(a_1^\star,\ldots,a_{K_1}^\star)$. 
For some variable $j\in[p]$, if $S^{(1)}_{j,\bo a} = S^{(1)}_{j,\bo a^\star} = 0$, then by construction \eqref{eq-relation} we have $\mathbbm{1}(\bo a \succeq \GG^{(1)}_{j,\bcolon}) = \mathbbm{1}(\bo a^\star \succeq \GG^{(1)}_{j,\bcolon}) = 1$. This means there are $\bo a \succeq \GG^{(1)}_{j,\bcolon}$ and
	$\bo a^\star \succeq \GG^{(1)}_{j,\bcolon}$.
Then by the definition of the graphical matrix $\GG^{(1)}$, the binary vector $\GG^{(1)}_{j,\bcolon}$ specifies the set of parent variables of variable $y_j$. According to the factorization of the distribution of $\bo y$, the conditional distribution of $y_j$ given $\bo a^{(1)}$ only depends on the parent variables of variable $y_j$ in the first latent layer. In other words, the conditional probability distribution vector $\lambda^{(j)}_{1:d_j,\bo a}$ only depends on those $k\in[K_1]$ such that $g^{(1)}_{j,k}=1$.
Note that for all such $k\in[K_1]$ where $g^{(1)}_{j,k}=1$, we have $a_k=a^\star_k = 1$.
This implies $\bo a$ and $\bo a^\star$ must have the same conditional probability table for this variable $y_j$, which means $\lambda^{(j)}_{1:d_j,\bo a} = \lambda^{(j)}_{1:d_j,\bo a^\star}$. This is exactly the equality constraint \eqref{eq-equa} and thus proves the conclusion of part (a).

We next prove the conclusion of part (b) of the proposition. For an arbitrary $m=1,\ldots,D-2$, the Bayesian network structure implies that the joint distribution of $\aaa^{(m)}$ and $\aaa^{(m+1)}$ is
\begin{align*}
	&\MP(\aaa^{(m)}, \aaa^{(m+1)}) \\
	= &~ \sum_{\aaa^{(m+2)}}\cdots \sum_{\aaa^{(D-1)}}\sum_{z^{(D)}}
	\MP(\aaa^{(m)} \mid \aaa^{(m+1)}) \mathbb P(\aaa^{(m+1)}\mid\aaa^{(m+2)}) \cdots \mathbb P(\aaa^{(D-1)}\mid z^{(D)}) \mathbb P(z^{(D)}) \\
	= &~ \MP(\aaa^{(m)} \mid \aaa^{(m+1)}) \sum_{\aaa^{(m+2)}}\cdots \sum_{\aaa^{(D-1)}}\sum_{z^{(D)}} \mathbb P(\aaa^{(m+1)}\mid\aaa^{(m+2)}) \cdots \mathbb P(\aaa^{(D-1)}\mid z^{(D)}) \mathbb P(z^{(D)})
	\\
	= &~ \MP(\aaa^{(m)} \mid \aaa^{(m+1)})\cdot \mathbb P(\aaa^{(m+1)}),
\end{align*}
where $\mathbb P(\aaa^{(m+1)})$ is the probability mass function for the $K_{m+1}$-dimensional binary vector $\aaa^{(m+1)}$.
Similar to the proof of part (a), we use $\nnu^{(m+1)} = \left(\nu^{(m+1)}_{\aaa};~\aaa \in \{0,1\}^{K_{m+1}}\right)$ to denote the parameters of this categorical distribution for $\aaa^{(m+1)}$. Then $\nnu^{(m+1)}$ lives in the $(2^{K_{m+1}}-1)$-dimensional simplex.
The marginal distribution of $\aaa^{(m)}$ can be written as follows,
\begin{align}\label{eq-a1a2}
	&~\mathbb P(\aaa^{(m)} = \bo a^{(m)}\mid \nnu^{(m+1)}, \,\LLambda^{(m+1)},\,\GG^{(m+1)})\\ \notag
	=
	&~
	\sum_{\bo a \in \{0,1\}^{K_{m+1}} } \nnu^{(m+1)}_{\bo a} \prod_{k=1}^{K_m} \mathbb P(\alpha^{(m)}_{k} = a^{(m)}_k \mid \aaa^{(m+1)} = \bo a,\,\mathfrak L^{(m+1)},\,\GG^{(m+1)}),
\end{align}
where $\bo a^{(m)} = (a^{(m)}_1,\ldots,a^{(m)}_{K_m})$ is an arbitrary binary vector of length $K_m$.
The above distribution of $\aaa^{(m)}$ takes a similar form as that in \eqref{eq-ya1}. 
Therefore, we can use a similar argument as that in the proof of part (a) of the proposition to proceed with the proof. 
Specifically, \eqref{eq-a1a2} shows that after marginalizing out the distribution of all the latent variables deeper than $\aaa^{(m)}$ other than $\aaa^{(m+1)}$, the distribution of $\aaa^{(m)}$ becomes a constrained latent class model with $2^{K_{m+1}}$ latent classes.
Next it suffices to show such a constrained latent class model satisfies the constraint \eqref{eq-equa} with $\mb S^{(m+1)}$ defined in \eqref{eq-relation2}.
Consider two latent classes $\bo a\neq\bo a^\star\in\{0,1\}^{K_{m+1}}$ with $\bo a=(a_1,\ldots,a_{K_{m+1}})$ and $\aaa^\star=(a_1^\star,\ldots,a_{K_{m+1}}^\star)$. 
For some $k\in[K_m]$, if $S^{(m+1)}_{k,\bo a} = S^{(m+1)}_{k,\bo a^\star} = 0$, then by construction \eqref{eq-relation2} we have 
\begin{equation}\label{eq-gm1}
	\bo a \succeq \GG^{(m+1)}_{k,\bcolon},\quad
	\bo a^\star \succeq \GG^{(m+1)}_{k,\bcolon}
\end{equation}
Then by definition of $\GG^{(m+1)}$, the vector $\GG^{(m+1)}_{k,\bcolon}$ specifies the set of parent variables of variable $\alpha^{(m)}_k$ in the $(m+1)$-the latent layer. According to the factorization of distribution of $\aaa^{(m)}$, the conditional distribution of $\alpha^{(m)}_k$ given $\aaa^{(m+1)}$ only depends on its parent variables. Equivalently, the conditional probability of $\mathbb P(\aaa^{(m)}_k = 1 \mid \aaa^{(m+1)})$ only depends on those $\aaa^{(m+1)}_{k'}$ for which $g^{(m+1)}_{k,k'}=1$.
Note that for all such $k'\in[K_{m+1}]$ where $g^{(m+1)}_{k,k'}=1$, we have $a_k=a^\star_k = 1$ by \eqref{eq-gm1}.
This implies $\bo a$ and $\bo a^\star$ must have the same conditional probability of $\mathbb P(\aaa^{(m)}_k = 1 \mid \aaa^{(m+1)}=\bo a) = \mathbb P(\aaa^{(m)}_k = 1 \mid \aaa^{(m+1)}=\bo a^\star)$. Therefore there also is $\mathbb P(\aaa^{(m)}_k = 0 \mid \aaa^{(m+1)}=\bo a) = \mathbb P(\aaa^{(m)}_k = 0 \mid \aaa^{(m+1)}=\bo a^\star)$.
The above two equalities are exactly the equality constraint \eqref{eq-equa} and this proves the conclusion of part (b).
The proof of Proposition \ref{prop-relation} is now complete.


\subsection{Proof of Proposition \ref{prop-kr-gen}}
We first prove part (a) of the proposition.
Denote the 	Khatri-Rao product of the $p$ matrices $\{\bo\Lambda^{(j)},\,j\in[p]\}$ by $\KK^{0} = \odot_{j=1}^p \bo\Lambda^{(j)}$, then $\KK^{0}$ is of size $\prod_{j=1}^p d_j \times H$ with rows indexed by the variable pattern $\yy\in \times_{j=1}^p [d_j]$ and columns indexed by the latent category $h\in[H]$. 
We first introduce a useful lemma about the Khatri-Rao product. 

\begin{lemma}\label{lem-krprod}
	For matrices $\mathbf A_j$, $\mathbf B_j$ for $j\in\{1,\ldots,p\}$, there is 
	\begin{align}\label{eq-krprod}
		\odot_{j=1}^p \left\{
		\mathbf A_j  \mathbf B_j
		\right\}
		=\left\{\otimes_{j=1}^p \mathbf A_j \right\} 
		\bcdot 
		\left\{\odot_{j=1}^p \mathbf B_j\right\},
	\end{align}
	where $\mathbf A_j$, $\mathbf B_j$ have compatible dimensions such that the left hand side is well defined.
\end{lemma}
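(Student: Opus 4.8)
The plan is to reduce the $p$-factor identity to a two-factor identity and then induct on $p$, relying only on the mixed-product property of the Kronecker product together with the associativity of both $\otimes$ and $\odot$. First I would establish the base case: for matrices $\mathbf A$, $\mathbf B$, $\mathbf C$, $\mathbf D$ of compatible sizes, where $\mathbf C$ and $\mathbf D$ share the same number $k$ of columns, I claim
\[
(\mathbf A\mathbf C)\odot(\mathbf B\mathbf D) = (\mathbf A\otimes\mathbf B)(\mathbf C\odot\mathbf D).
\]
The verification proceeds column by column. Fix $\ell\in[k]$. By the column-wise definition of the Khatri-Rao product, the $\ell$-th column of the left-hand side is $(\mathbf A\mathbf C)_{\bcolon,\ell}\otimes(\mathbf B\mathbf D)_{\bcolon,\ell} = (\mathbf A\,\mathbf C_{\bcolon,\ell})\otimes(\mathbf B\,\mathbf D_{\bcolon,\ell})$. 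Applying the mixed-product property $(\mathbf A\mathbf X)\otimes(\mathbf B\mathbf Y)=(\mathbf A\otimes\mathbf B)(\mathbf X\otimes\mathbf Y)$ with the vectors $\mathbf X=\mathbf C_{\bcolon,\ell}$ and $\mathbf Y=\mathbf D_{\bcolon,\ell}$ gives $(\mathbf A\otimes\mathbf B)(\mathbf C_{\bcolon,\ell}\otimes\mathbf D_{\bcolon,\ell})$. Since $\mathbf C_{\bcolon,\ell}\otimes\mathbf D_{\bcolon,\ell}$ is precisely the $\ell$-th column of $\mathbf C\odot\mathbf D$, this is exactly the $\ell$-th column of $(\mathbf A\otimes\mathbf B)(\mathbf C\odot\mathbf D)$. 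As $\ell$ was arbitrary, the two sides agree columnwise and hence are equal.

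Next I would extend to general $p$ by induction. The case $p=1$ is trivial, and $p=2$ is the base identity above with $\mathbf A=\mathbf A_1$, $\mathbf B=\mathbf A_2$, $\mathbf C=\mathbf B_1$, $\mathbf D=\mathbf B_2$. Assuming the claim for $p-1$ factors, I would write
\[
\odot_{j=1}^p\{\mathbf A_j\mathbf B_j\}
= \Bigl\{\odot_{j=1}^{p-1}(\mathbf A_j\mathbf B_j)\Bigr\}\odot(\mathbf A_p\mathbf B_p),
\]
using associativity of $\odot$. The inductive hypothesis rewrites the first factor as $\bigl(\otimes_{j=1}^{p-1}\mathbf A_j\bigr)\bigl(\odot_{j=1}^{p-1}\mathbf B_j\bigr)$. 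Applying the two-factor identity once more, now with $\mathbf A=\otimes_{j=1}^{p-1}\mathbf A_j$, $\mathbf C=\odot_{j=1}^{p-1}\mathbf B_j$, $\mathbf B=\mathbf A_p$, and $\mathbf D=\mathbf B_p$, yields
\[
\Bigl[\bigl(\otimes_{j=1}^{p-1}\mathbf A_j\bigr)\otimes\mathbf A_p\Bigr]
\Bigl[\bigl(\odot_{j=1}^{p-1}\mathbf B_j\bigr)\odot\mathbf B_p\Bigr]
= \bigl(\otimes_{j=1}^{p}\mathbf A_j\bigr)\bigl(\odot_{j=1}^{p}\mathbf B_j\bigr),
\]
where the last equality again invokes associativity, this time of both $\otimes$ and $\odot$. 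This closes the induction and proves \eqref{eq-krprod}.

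The algebra here is genuinely light: the single nontrivial ingredient is the Kronecker mixed-product property applied at the level of individual columns, and everything else is bookkeeping. Accordingly, the only point requiring care is dimensional compatibility—one must check at each step that the number of columns of $\mathbf A_j$ matches the number of rows of $\mathbf B_j$, and that the $\mathbf B_j$ all share a common number of columns $k$ so that the Khatri-Rao products are defined—so that every product appearing above is well posed. I do not anticipate any real obstacle beyond tracking these dimensions, which is exactly why the statement is isolated as a lemma and its proof deferred to the Supplementary Material.
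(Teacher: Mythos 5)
Your proof is correct, but it takes a genuinely different route from the paper's. The paper proves the lemma by a single direct entry-wise computation: indexing rows of both sides by a multi-index $\bo c=(c_1,\ldots,c_p)$ and columns by $h$, it expands $\prod_{j=1}^p \{\mathbf A_j\mathbf B_j\}_{c_j,h}$ as an iterated sum $\sum_{m_1}\cdots\sum_{m_p}\prod_j a^j_{c_j,m_j}b^j_{m_j,h}$ and observes that the same expression arises from the $(\bo c,h)$ entry of $\{\otimes_j \mathbf A_j\}\bcdot\{\odot_j \mathbf B_j\}$. Your argument instead isolates the two-factor identity $(\mathbf A\mathbf C)\odot(\mathbf B\mathbf D)=(\mathbf A\otimes\mathbf B)(\mathbf C\odot\mathbf D)$, proves it columnwise via the Kronecker mixed-product property, and extends to general $p$ by induction using associativity of $\otimes$ and $\odot$. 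The paper's computation is fully self-contained (no external identities are invoked, only the definitions), at the cost of multi-index bookkeeping; your version is more modular and makes the structural reason for the identity transparent — it is just the mixed-product property iterated — but it imports two auxiliary facts, the mixed-product property and the associativity of the Khatri-Rao product, the latter of which you should at least note follows from associativity of $\otimes$ applied to the columns. Both are standard, so this is a matter of taste rather than rigor, and your inductive step (applying the two-factor identity with $\mathbf A=\otimes_{j=1}^{p-1}\mathbf A_j$, $\mathbf C=\odot_{j=1}^{p-1}\mathbf B_j$, $\mathbf B=\mathbf A_p$, $\mathbf D=\mathbf B_p$) is dimensionally sound and closes the argument correctly.
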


By Lemma \ref{lem-krprod}, 
 $\KK^0$ can be written as 
\begin{align*}
	\KK^0
	&=\bigotimes_{j=1}^p  \left\{\begin{pmatrix}
		1 & 0 & \cdots & 0 & 0 \\
		0 & 1 & \cdots & 0 & 0 \\
		\vdots & \vdots & \ddots & \vdots & \vdots \\
		0 & 0 & \cdots & 1 & 0\\
		-1 & -1 & \cdots & -1 & 1
	\end{pmatrix}
	\boldsymbol{\cdot}
	\begin{pmatrix}
			\lambda^{(j)}_{1,1} & \lambda^{(j)}_{2,1} & \cdots & \lambda^{(j)}_{H,1} \\
			\lambda^{(j)}_{1,2} & \lambda^{(j)}_{2,2} & \cdots & \lambda^{(j)}_{H,2} \\
			\vdots & \vdots & \vdots & \vdots \\
			\lambda^{(j)}_{1,d_j-1} & \lambda^{(j)}_{2,d_j-1} & \cdots & \lambda^{(j)}_{H,d_j-1} \\
			1 & 1 & \cdots & 1 \\
		\end{pmatrix}\right\} \\
	&=\left\{\bigotimes_{j=1}^p  \begin{pmatrix}
		1 & 0 & \cdots & 0 & 0 \\
		0 & 1 & \cdots & 0 & 0 \\
		\vdots & \vdots & \ddots & \vdots & \vdots \\
		0 & 0 & \cdots & 1 & 0\\
		-1 & -1 & \cdots & -1 & 1
	\end{pmatrix}
	\right\}
	\boldsymbol{\cdot}
	\left\{
	\bigodot_{j=1}^p 
	\begin{pmatrix}
			\lambda^{(j)}_{1,1} & \lambda^{(j)}_{2,1} & \cdots & \lambda^{(j)}_{H,1} \\
			\lambda^{(j)}_{1,2} & \lambda^{(j)}_{2,2} & \cdots & \lambda^{(j)}_{H,2} \\
			\vdots & \vdots & \vdots & \vdots \\
			\lambda^{(j)}_{1,d_j-1} & \lambda^{(j)}_{2,d_j-1} & \cdots & \lambda^{(j)}_{H,d_j-1} \\
			1 & 1 & \cdots & 1 \\
		\end{pmatrix}\right\} \\
	&=: \bigotimes_{j=1}^p \mathbf B^j \bcdot\KK=:
	\mathbf B^0 \bcdot \KK,
\end{align*}
where $\mathbf B^0$ is the Kronecker product of $p$ lower-triangular matrices $\mathbf B^0_j$'s whose diagonal entries all equal to one. So by the property of the Kronecker product, $\rank(\mathbf B^0) = \prod_{j=1}^p \rank(\mathbf B^0_j)= \prod_{j=1}^p d_j$.
Therefore $\mathbf B^0$ is invertible, and $\KK^0$ has full column rank if and only if $\KK$ has full column rank.
We next constructively find $k$ rows of this $\prod_{j=1}^p d_j \times H$ matrix to form a $H\times H$ invertible submatrix, which will establish the full-rankness of matrix $\KK$ and hence the full-rankness of $\KK^0$.

We next introduce the \textit{lexicographic order} between binary vectors of the same length to facilitate the proof. For two binary vectors $\boldsymbol{x}=(x_1,\ldots,x_p),\boldsymbol{y}=(y_1,\ldots,y_p)$ both of length $p$, we say $\boldsymbol{x}$ is of greater lexicographic order than $\boldsymbol{y}$ and denote by $\boldsymbol{x}\succ_{\text{lex}}\boldsymbol{y}$ if either $x_1>y_1$, or $x_\ell > y_\ell$ for some $\ell\in\{2,\ldots,p\}$ and $x_m=y_m$ for all $m=1,\ldots,\ell-1$.
Under the assumption of the lemma, the $k$ columns of the constraint matrix $\mathbf S$ are distinct. An important consequence of this is that the $k$ columns of $\mathbf S$ can be arranged in a decreasing lexicographic order. Without loss of generality, we assume
\begin{equation}\label{eq-lexk}
S_{\bcolon,1}\succ_{\text{lex}} S_{\bcolon,2}\succ_{\text{lex}}\ldots \succ_{\text{lex}} S_{\bcolon,k}. 
\end{equation}

Next we construct the $H\times H$ invertible submatrix $\mathbf E$ of $\KK$ as follows. Since the rows of $\KK$ are indexed by $\yy\in[d_j]^p$, we focus on constructing $k$ different patterns $\yy^1,\ldots,\yy^k\in[d_j]^p$ to form $\mathbf E$.
Denote by $\ee_j$ a $J$-dimensional standard basis vector with the $j$th entry being one and the rest being zeros.
We define 
\begin{align}\notag
\yy^1 &= \sum_{j:\, S_{j,1}=1} y^1_j\ee_j +
        \sum_{j:\, S_{j,1}=0} d_j\ee_j;\\
\notag
\yy^2 &= \sum_{j:\, S_{j,2}=1} y^2_j\ee_j +
        \sum_{j:\, S_{j,2}=0} d_j\ee_j;\\ \notag 
 & \vdots \\ \label{eq-y1k}
\yy^k &= \sum_{j:\, S_{j,k}=1} y^k_j\ee_j +
        \sum_{j:\, S_{j,k}=0} d_j\ee_j,   
\end{align}
where $y^h_j\in\{1,\ldots,d_j-1\}$ for each $h\in[H]$ can be arbitrary; as long as $y^h_j \neq d_j$.
We claim that $\EE:= \KK_{(\yy^1,\yy^2,\ldots,\yy^k),\bcolon}$ is invertible.
We next prove this claim. 
First, there is 
\begin{align}\label{eq-elh}
	\EE_{\ell,h}=
	\KK_{\yy^\ell,h} = 
	\prod_{j:\,S_{j,\ell}=1}
	  \lambda^{(j)}_{y^{\ell}_j, h}
	\prod_{j:\, S_{j,\ell}=0}
	  1
	  =\prod_{j:\,S_{j,\ell}=1}
	  \lambda^{(j)}_{y^{\ell}_j, h},\quad
	\ell\in[H],~h\in[H].
\end{align}

The matrix $\KK$ can be viewed as a map taking the $k$ matrices 
$\LLambda=\{\LLambda^{(j)}:\,j\in[p]\}
$ as the input. 
Consider an arbitrary collection of $p$ vectors $\bo\Delta := \{\bo\delta^{(j)}:\,j\in[p]\}$, where $\bo\delta^{(j)}=\{\delta^{(j)}_{c}:\, c\in[d_j]~\text{and}~\delta^{(j)}_{d_j}=0\}$ is $d_j$-dimensional vector with the last element equal to zero for each $j$.
Then $\Lambda^{(j)}$ and $\bo\delta^{(j)}\bcdot\one^\top_k$ have the same size $d_j\times H$.
We introduce the following useful lemma.
\begin{lemma}\label{lem-mult}
There exists a $\prod_{j=1}^p d_j \times \prod_{j=1}^p d_j$ invertible square matrix $\mathbf B:=\mathbf B(\{\bo\delta^{(j)}\}_{j\in[p]})$ depending only on $\{\bo\delta^{(j)}\}_{j\in[p]}$ such that 
\begin{align}\label{eq-algebra}
	\KK(\{\bo\Lambda^{(j)}-\bo\delta^{(j)}\bcdot\one^\top_k \}_{j\in[p]}) 
	&= \mathbf B(\{\bo\delta^{(j)}\}_{j\in[p]}) \bcdot \KK(\{\bo\Lambda^{(j)}\}_{j\in[p]}).
\end{align}
\end{lemma}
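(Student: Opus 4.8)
The plan is to construct $\mathbf B$ as a Kronecker product $\otimes_{j=1}^p \mathbf B_j$ of $p$ factor-wise invertible matrices, thereby reducing the identity \eqref{eq-algebra} to a single-factor statement that is then lifted by the mixed-product property of the Khatri--Rao product established in Lemma \ref{lem-krprod}. Recall from the displayed factorization of $\KK^0$ above that the $j$-th factor of $\KK(\{\bo\Lambda^{(j)}\})$ is the $d_j\times k$ matrix $\widehat{\bo\Lambda}^{(j)}$ whose first $d_j-1$ rows coincide with those of $\bo\Lambda^{(j)}$ and whose last row is identically $\one_k^\top$; thus its $h$-th column equals $(\lambda^{(j)}_{1,h},\ldots,\lambda^{(j)}_{d_j-1,h},1)^\top$.

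First I would treat a single factor. Subtracting the fixed vector $\bo\delta^{(j)}$ from every column is an affine, not linear, operation; the key observation is that it can nonetheless be realized linearly by exploiting the constant last coordinate equal to one. Concretely, setting the shear matrix $\mathbf B_j := \mathbf I_{d_j} - \bo\delta^{(j)}\ee_{d_j}^\top$, subtracting $\delta^{(j)}_c$ times the (always unit) last coordinate from the $c$-th coordinate gives $\mathbf B_j\,\widehat{\bo\Lambda}^{(j)}_{\bcolon,h} = (\lambda^{(j)}_{1,h}-\delta^{(j)}_1,\ldots,\lambda^{(j)}_{d_j-1,h}-\delta^{(j)}_{d_j-1},1)^\top$, which is exactly the $h$-th column of the $j$-th factor of $\KK(\{\bo\Lambda^{(j)}-\bo\delta^{(j)}\one_k^\top\})$; here the constraint $\delta^{(j)}_{d_j}=0$ guarantees the unit last coordinate is preserved. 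Since $\bo\delta^{(j)}\ee_{d_j}^\top$ is supported only in its last column, $\mathbf B_j$ is upper triangular with unit diagonal, hence invertible with $\det\mathbf B_j=1$, and it depends only on $\bo\delta^{(j)}$.

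Next I would assemble the factors. Applying Lemma \ref{lem-krprod} with first matrices $\mathbf A_j = \mathbf B_j$ and second matrices equal to the factors $\widehat{\bo\Lambda}^{(j)}$ of $\KK$ yields $\odot_{j=1}^p(\mathbf B_j\,\widehat{\bo\Lambda}^{(j)}) = (\otimes_{j=1}^p \mathbf B_j)\,(\odot_{j=1}^p\widehat{\bo\Lambda}^{(j)})$. The left side is $\KK(\{\bo\Lambda^{(j)}-\bo\delta^{(j)}\one_k^\top\})$ by the single-factor computation, while the rightmost Khatri--Rao product is $\KK(\{\bo\Lambda^{(j)}\})$, so $\mathbf B := \otimes_{j=1}^p \mathbf B_j$ is the desired matrix of size $\prod_j d_j \times \prod_j d_j$, depending only on $\{\bo\delta^{(j)}\}_{j\in[p]}$. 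Its invertibility is immediate from invertibility of each $\mathbf B_j$ together with multiplicativity under the Kronecker product (indeed $\det\mathbf B = \prod_{j=1}^p(\det\mathbf B_j)^{\prod_{i\neq j}d_i}=1$).

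I expect the only genuinely delicate step to be the single-factor linearization: recognizing that a constant additive shift becomes a linear (shear) map precisely because each factor of $\KK$ carries a row of ones, which the constraint $\delta^{(j)}_{d_j}=0$ keeps intact. Once this is in place, the remainder is a routine invocation of the mixed-product property in Lemma \ref{lem-krprod} and the standard Kronecker-product determinant identity.
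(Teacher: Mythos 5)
Your proof is correct and is essentially identical to the paper's: your shear matrix $\mathbf B_j = \mathbf I_{d_j} - \bo\delta^{(j)}\ee_{d_j}^\top$ is exactly the upper-triangular unit-diagonal factor $\bo\Delta^{(j)}$ the paper writes out explicitly, and both arguments then invoke Lemma \ref{lem-krprod} to pull the Kronecker product $\otimes_{j=1}^p \mathbf B_j$ out of the Khatri--Rao product and conclude invertibility from the triangular structure. No gaps.
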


Lemma \ref{lem-mult} implies that we can show $\KK(\{\bo\Lambda^{(j)} \}_{j\in[p]})$ has full column rank by showing that matrix $\KK(\{\bo\Lambda^{(j)}-\bo\delta^{(j)}\bcdot\one^\top_k \}_{j\in[p]})$ has full column rank for some $\bo\delta^{(j)}$'s.
In particular, we construct $\bo\delta^{(j)}$ to be 
\begin{align*}
	\delta^{(j)}_c = \lambda^{(j)}_{c, 0}~\text{for}~ c\in[d_j-1];\quad \delta^{(j)}_{d_j}=0.
\end{align*}
Recall the $\yy^1,\ldots,\yy^k$ defined in \eqref{eq-y1k}, then the $(\yy^\ell,h)$th entry of $\KK(\{\bo\Lambda^{(j)}-\bo\delta^{(j)}\bcdot\one^\top_k \}_{j\in[p]})$ takes the form
\begin{align*}
	\EE^\star_{\ell,h}
	=\prod_{j:\,S_{j,\ell}=1}\left(\lambda^{(j)}_{y^\ell_j, h} - \delta^{(j)}_{y^\ell_j}\right)
	=\prod_{j:\,S_{j,\ell}=1}\left(\lambda^{(j)}_{y^\ell_j, h} - \lambda^{(j)}_{y^\ell_j, 0}\right),\quad
	\ell\in[H],~h\in[H].
\end{align*} 
Now consider if $\ell < h$, then because $S_{\bcolon,\ell}\succ_{\text{lex}} S_{\bcolon,h}$ under our assumption in \eqref{eq-lexk}, there must be some variable $j\in[p]$ such that $S_{j,\ell}=1$ and $S_{j,h}=0$, so $\lambda^{(j)}_{y^\ell_j,h}=\lambda^{(j)}_{y^\ell_j, 0}$ for this $j$. This means for $\ell<h$ the $\EE^{\star}_{\ell,h}$ must contain a factor of $\left(\lambda^{(j)}_{y^\ell_j, 0}-\lambda^{(j)}_{y^\ell_j, 0}\right)=0$, hence $\EE^{\star}_{\ell,h}=0$ for any $\ell<h$. So far we have obtained that $\EE^\star$ is a lower-triangular matrix. We further look at its diagonal entries. For any $h\in[H]$ there is
	$$\EE^\star_{h,h} = 
	\prod_{j:\,S_{j,h}=1}
	\left(\lambda^{(j)}_{y^{h}_j, h} - \lambda^{(j)}_{y^{h}_j,h}\right) \neq 0
	$$
due to the inequality constraint \eqref{eq-neq} that $\lambda^{(j)}_{y_j,h} \neq \lambda^{(j)}_{y_j,0}$ if $S_{j,h}=1$. Now  we have shown $\EE^\star$ is a lower-triangular matrix with all the diagonal entries being nonzero, therefore $\EE^\star$ is invertible.
Since $\EE^\star$ is a submatrix of $\KK(\{\bo\Lambda^{(j)}-\bo\delta^{(j)}\bcdot\one^\top_k \}_{j\in[p]})$ containing $k$ of its rows, the latter must have full column rank. So by Lemma \ref{lem-mult}, $\KK(\{\bo\Lambda^{(j)}\}_{j\in[p]})$ also has full column rank. This proves part (a) of the proposition.

\bigskip
We next prove part (b) of the proposition. It suffices to show that in the special case described in Remark \ref{rmk-prop1} the Khatri-Rao product matrix $\mathbf K=\odot_{j=1}^p \LLambda^{(j)}$ does not have full column rank.
Specifically, suppose that besides constraint \eqref{eq-equa} that $\lambda^{(j)}_{1:d_j,\,h_1}=\lambda^{(j)}_{1:d_j,\,h_2}$ if $S_{j,h_1}=S_{j,h_2}=0$, the parameters also satisfy $\lambda^{(j)}_{1:d_j,\,h_1}=\lambda^{(j)}_{1:d_j,\,h_2}$ if $S_{j,h_1}=S_{j,h_2}=1$. 
In this case, we claim that if $\mb S_{\bcolon,h_1} = \mb S_{\bcolon,h_2}$ then there must also be $\KK_{\bcolon,h_1} = \KK_{\bcolon,h_2}$.
To see this, first note that now the constraint on the matrix $\mb S$ becomes $\lambda^{(j)}_{1:d_j,\,h_1}=\lambda^{(j)}_{1:d_j,\,h_2}$ as long as $S_{j,h_1} = S_{j,h_2}$.
So for each $\bo c=(c_1,\ldots,c_p)^\top \in \times_{j=1}^p [d_j]$, there is
\begin{align}\label{eq-kkeq}
	\KK_{\cc,h_1} 
	= \prod_{j=1}^p \lambda^{(j)}_{c_j,h_1}
	= \prod_{j=1}^p \lambda^{(j)}_{c_j,h_2}
	= \KK_{\cc,h_2}, 
\end{align}
where the second equality above follows from $S_{j,h_1} = S_{j,h_2}$ for each $j\in[p]$.
Now that \eqref{eq-kkeq} holds for all possible $\cc \in \times_{j=1}^p [d_j]$, we have $\KK_{\bcolon,h_1} = \KK_{\bcolon,h_2}$.
Therefore, the matrix $\KK$ is necessarily rank-deficient since it contains identical column vectors.
This completes the proof of Proposition \ref{prop-kr-gen}.


\subsection{Proof of Theorem \ref{thm-suff}}
For a matrix $\mathbf M$, Kruskal's rank is the maximal number $r$ such that every $r$ columns of $\mathbf M$ are linear independent. Denote the Kruskal's rank of $\mathbf M$ by $\rank_{K}(\mathbf M)$. We next restate a useful version of the Kruskal's theorem on three-way tensor decomposition here to facilitate the proof; see more discussion on how this theorem can be invoked to show identifiability for a variety of latent variable models in \cite{allman2009}.

\begin{lemma}[\cite{kruskal1977}]\label{lem-kruskal}
	Suppose $M_1, M_2, M_3$ are three matrices of dimension $a_i\times H$ for $i=1,2,3$, $N_1, N_2, N_3$ are three matrices each with $k$ columns, and $\left(\odot_{i=1}^3 M_i \right) \bcdot \one_{H\times 1} = \left(\odot_{i=1}^3 N_i \right) \bcdot \one_{H\times 1}$. If $\rank_{K}( M_1) + \rank_{K}( M_2) + \rank_{K}( M_3) \geq 2k + 2$, then there exists a permutation matrix $P$ and three invertible diagonal matrices $D_i$ with $D_1D_2D_3 =\mb I_k$ and $N_i = M_i D_i P$.
\end{lemma}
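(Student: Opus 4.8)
Since Lemma \ref{lem-kruskal} is the classical uniqueness theorem of \cite{kruskal1977}, the honest move in the paper is to invoke it as a black box rather than reprove it; the sentence preceding the statement already directs the reader to \cite{kruskal1977} and \cite{allman2009}. For completeness I describe how I would organize a self-contained proof. I read the hypothesis $\odot_{i=1}^3 M_i = \odot_{i=1}^3 N_i$ as the statement that the two triples of factor matrices reconstruct the same three-way array $T = \sum_{r=1}^k (M_1)_{\bcolon,r}\otimes(M_2)_{\bcolon,r}\otimes(M_3)_{\bcolon,r} = \sum_{r=1}^k (N_1)_{\bcolon,r}\otimes(N_2)_{\bcolon,r}\otimes(N_3)_{\bcolon,r}$, and after absorbing scalars I may assume every column of every factor is nonzero. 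The plan is to show, one mode at a time, that the columns of $N_i$ are a permuted and rescaled copy of the columns of $M_i$, and then to force the three permutations to coincide and the scalings to multiply to the identity.

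The engine is Kruskal's Permutation Lemma, which I would prove first. Writing $\omega(v)$ for the number of nonzero entries of a vector $v$, it asserts: if $M$ is an $a\times k$ matrix with no zero column and Kruskal rank $\rank_K(M)=r$, and if every covector $x$ with $\omega(x^\top N)\le k-r+1$ also satisfies $\omega(x^\top M)\le \omega(x^\top N)$, then $N = M D P$ for a permutation matrix $P$ and an invertible diagonal $D$. The content is that sparse linear combinations of the columns of $N$ cannot be strictly sparser than the matching combinations of $M$; together with the high Kruskal rank of $M$ this pins the columns down one-to-one up to scale. Proving the lemma is itself a delicate induction on which supports of linear combinations are realizable.

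The substance of the theorem is verifying this hypothesis in each mode. Contracting $T$ along mode $3$ against a covector $w$ gives the slice $T(w)=M_1\,\mathrm{diag}(M_3^\top w)\,M_2^\top = N_1\,\mathrm{diag}(N_3^\top w)\,N_2^\top$. If $s:=\omega(M_3^\top w)$, the nonzero diagonal entries select $s$ columns of $M_1$ and of $M_2$; when $s\le\min\{\rank_K(M_1),\rank_K(M_2)\}$ those columns are independent and $\rank T(w)=s$, while on the other side $\rank T(w)\le \omega(N_3^\top w)$, giving $\omega(M_3^\top w)\le \omega(N_3^\top w)$ exactly as the Permutation Lemma (with $M=M_3$, $N=N_3$) requires. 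The role of the budget $\rank_K M_1+\rank_K M_2+\rank_K M_3\ge 2k+2$ is to make the sparsity thresholds line up symmetrically across the three modes, so the lemma applies in all of them and yields $N_i = M_i D_i P_i$ with permutations $P_i$ and invertible diagonal $D_i$. Substituting these back into the identity for $T$ and matching its rank-one summands---which the Kruskal ranks guarantee lie in ``general position'' enough to be matched uniquely---forces $P_1=P_2=P_3=:P$ and $D_1 D_2 D_3=\mb I_k$ after the common reindexing, which is the asserted conclusion.

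The main obstacle is precisely the Permutation Lemma together with the sharp support counting behind it: the clean case $s\le\min\{\rank_K M_1,\rank_K M_2\}$ sketched above is only the easy part, and covering the full range of $w$ needed to invoke the lemma requires the finer rank lower bound $\rank T(w)\ge \min(s,\rank_K M_1)+\min(s,\rank_K M_2)-s$ and a careful combinatorial accounting of attainable supports. This accounting, which is exactly where the threshold $2k+2$ is consumed, is the technical heart of \cite{kruskal1977}, and is the reason the result is usually cited rather than reproved.
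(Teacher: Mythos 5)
Your treatment matches the paper's: Lemma \ref{lem-kruskal} is Kruskal's classical uniqueness theorem, which the paper restates from \cite{kruskal1977} without proof (pointing to \cite{allman2009} for discussion), exactly as you invoke it as a black box. Your accompanying sketch --- the permutation lemma, the slice contraction $M_1\,\mathrm{diag}(M_3^\top w)\,M_2^\top = N_1\,\mathrm{diag}(N_3^\top w)\,N_2^\top$, the rank lower bound for diagonal products, and the final matching of mode-wise permutations and scalings --- is a faithful outline of Kruskal's original argument, with the unproved permutation lemma correctly identified as the technical heart that justifies citing rather than reproving the result.
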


	Under assumption (a) in the theorem, we apply Lemma \ref{lem-kruskal}
	to establish that $\KK^i:=\KK(\{\bo\Lambda^{(j)}\}_{j\in\mathcal A_i})$ has full column rank for $i=1$ and $2$.
	Now consider $\KK^3:=\KK(\{\bo\Lambda^{(j)}\}_{j\in\mathcal A_3})$ for the set of variables $\mathcal A_3$. 
    We first claim that for each column $h$ of 
    $\KK^3$, the sum of all the entries of $\KK^3$ in this column equals one. To see this, denote the indices of variables in $\mathcal A_3$ by $\{j_1,\ldots,j_m\}$ with $m=|\mathcal A_3|$, then
	\begin{align*}
		\sum_{\bo c\in \times_{j\in\mathcal A_3} [d_j]} \KK^3_{\bo c, h}=
		 \sum_{(c_{j_1},\ldots,c_{j_m})\in\atop [d_{j_1}]\times\cdots\times[d_{j_m}]} \left(\prod_{\ell=1}^m \lambda^{(j_\ell)}_{h,c_{j_\ell}}\right)
		 =\prod_{\ell=1}^m \left(\sum_{c_{j_\ell}\in[d_{j_\ell}]}  \lambda^{(j_\ell)}_{h,c_{j_\ell}}\right)
		 =\prod_{\ell=1}^m 1 = 1.
	\end{align*}
	From the above it is not hard to see that  $\KK^3_{\bo c, h} = \mathbb P(y_{j_1}=c_{j_1},\ldots,y_{j_m}=c_{j_m}\mid z=h)$ where $z$ is the latent class indicator variable. So $\KK^3_{\bcolon,h}$ is a vector living in the $(\prod_{j\in\mathcal A_3} d_j-1)$-dimensional simplex for each $h$, characterizing the conditional joint distribution of $\{y_j\}_{j\in\mathcal A_3}$ given $z=h$.
	
	Under assumption (b) in the theorem, we claim that the matrix $\KK^3$ must have Kruskal rank at least two.
	We show this  using proof by contradiction. Suppose there exist two different columns of $\KK^3$ indexed by $h_1, h_2$ that are linear dependent, then there are some $a,b\in\mathbb R$ such that
	\begin{align}\label{eq-k3linear}
		a\cdot\KK^3_{\bcolon,h_1} + b\cdot\KK^3_{\bcolon,h_2}=0,
	\end{align}
	with $a\neq 0$ or $b\neq 0$.
	Since in the last paragraph we established $\sum_{\bo c}\KK^3_{\bo c,h}=1$ for each $h$, we can take the sum over all the $\cc \in \times_{j\in\mathcal A_3} [d_j]$ and obtain $a\cdot 1+ b\cdot 1=0$. If $a\neq 0$, then $\KK^3_{\bcolon,h_1}=-b/a\KK^3_{\bcolon,h_2}=\KK^3_{\bcolon,h_2}$. Now we fix an arbitrary $\ell\in\mathcal A_3$ and $c_\ell\in[d_\ell]$ in \eqref{eq-k3linear} and sum over all the remaining indices,
	\begin{align*}
	\sum_{m\in[p]\atop m\neq j}\sum_{c_m\in[d_m]} \KK^3_{\bo c,h_1}
	    =
		\sum_{m\in[p]\atop m\neq j}\sum_{c_m\in[d_m]} \prod_{j=1}^p \lambda^{(j)}_{h_1,c_j}
		=\lambda^{(\ell)}_{h_1,c_\ell}\prod_{m\in[p]\atop m\neq \ell}
		 \left(\sum_{c_m\in[d_m]}\lambda^{(m)}_{h_1,c_m}\right)
		=\lambda^{(\ell)}_{h_1,c_\ell},
	\end{align*}
	so $\KK^3_{\bcolon,h_1}=\KK^3_{\bcolon,h_2}$ indeed indicates that $\lambda^{(\ell)}_{h_1,c_\ell}=\lambda^{(\ell)}_{h_2,c_\ell}$ for any $\ell\in\mathcal A_3$ and $c_\ell\in[d_\ell]$. This contradicts the assumption (b) in the theorem that for any $h_1\neq h_2$ there exists $j\in\mathcal A_3$ and $c\in[d_j]$ for which $\lambda^{(j)}_{h_1,c}\neq \lambda^{(j)}_{h_2,c}$.
	This contradiction implies that $\KK^3$ must have Kruskal rank at least two.
	
	Finally, since the distribution  for the observed $\bo y=(y_1,\ldots,y_j)^\top$ can be written as the tensor $\mathbf\Pi=(\pi_{c_1\cdots c_p})$ and further
	\begin{align}\label{eq-three}
		\mathbf\Pi = \left\{\odot_{j=1}^p \bo\Lambda^{j} \right\}\bcdot \bo\nu
		= \left\{\odot_{i=1}^3 \KK^{i} \right\}\bcdot \bo\nu
		= \KK^3 \odot \KK^2 \odot \left\{ \KK^1 \bcdot \text{diag}(\bo\nu) \right\},
	\end{align}
	where $\bo\nu=(\nu_1,\ldots,\nu_k)^\top$ belongs to the $(k-1)$-dimensional simplex and $\text{diag}(\bo\nu)$ denotes a $H\times H$ diagonal matrix with diagonal entries being $\nu_h$'s.
	Recall that we have established $\KK^1$, $\KK^2$ both have full column rank and $\KK^3$ has Kruskal rank at least two.
	Under the assumption that $\nu_h>0$ for each $h\in[H]$ in the theorem, it is not hard to see that matrix $\KK^1 \bcdot \text{diag}(\bo\nu)$ also has full column rank.
	Now suppose an alternative set of parameters $(\overline{\bo\nu},\overline{\bo\Lambda})$ and the true parameters $({\bo\nu},{\bo\Lambda})$ lead to the same distribution, that is, lead to the same $\mathbf\Pi$. We similarly denote $\overline\KK^i:=\KK(\{\overline{\bo\Lambda}^{(j)}\}_{j\in\mathcal A_i})$ for $i=1,2,3$.
	Then
	\begin{align*}
	\KK^3 \odot \KK^2 \odot \left\{ \KK^1 \bcdot \text{diag}(\bo\nu) \right\}
	=\overline\KK^3 \odot \overline\KK^2 \odot \left\{ \overline\KK^1 \bcdot \text{diag}(\overline{\bo\nu}) \right\}.
	\end{align*}
	Now we apply Lemma \ref{lem-kruskal} to this three-way decomposition and obtain 
	\begin{align}\label{eq-decomp}
		\overline\KK^i = \KK^i D_i P~~\text{for}~~ i=2,\,3;
		\qquad
		\overline\KK^1\bcdot\text{diag}(\overline{\bo\nu}) = \KK^1\bcdot \text{diag}({\bo\nu}) D_1 P
	\end{align}
	for a permutation matrix $P$ and three invertible diagonal matrices $D_i$. Now note that both $\overline\KK^i$ and $\KK^i$ have each column characterizing the conditional joint distribution of $\{y_j\}_{j\in\mathcal A_i}$ given the $h$th latent class $z=h$. Therefore the sum of each column of $\overline\KK^i$ or $\KK^i$ equals one, which implies the diagonal matrix $D_i$ is an identity matrix for $i=2$ or 3. Since Lemma \ref{lem-kruskal} ensures $D_1D_2D_3=I_k$, we also obtain $D_1=I_k$.  
	So far we have obtained $\overline\KK^i = \KK^i P$ for $i=2,3$ and $\overline\KK^1\bcdot\text{diag}(\overline{\bo\nu}) = \KK^1\bcdot \text{diag}({\bo\nu}) P$. 
	Now proceeding in the same way as the argument after Eq.~\eqref{eq-k3linear}, the $\overline\KK^i = \KK^i P$ for $i=2,3$ implies $\overline{\bo\Lambda}^{j}={\bo\Lambda}^{j}\cdot P$ for all $j\in\mathcal A_2\cup\mathcal A_3$.
    For $i=3$, consider $h_1\in[H]$ and assume without loss of generality that the $(h_2,h_1)$th entry of matrix $P$ is $P_{h_2,h_1}=1$. Then the $h_1$th column of the equality $\overline\KK^1\bcdot\text{diag}(\overline{\bo\nu}) = \KK^1\bcdot \text{diag}({\bo\nu}) P$ gives $\overline\KK^1_{\cc,h_1}\cdot\bar\nu_{h_1}=\overline\KK^1_{\cc,h_1}\cdot\nu_{h_2}$; summing over the index $\cc\in\times_{j\in\mathcal A_1}[d_j]$ gives $\bar\nu_{h_1}=\nu_{h_2}$. Note we have generally established $\bar\nu_{h_1}=\nu_{h_2}$ whenever $P_{h_2,h_1}=1$, which means $\overline\nnu^\top=\nnu^\top\cdot P$. Combining this with \eqref{eq-decomp}, we obtain $\overline\KK^1 = \KK^1 \cdot P$ and $\overline{\bo\Lambda}^{(j)}={\bo\Lambda}^{(j)}$ for $j\in\mathcal A_1$ as a last step. 
    This establishes the strict identifiability of all the parameters $\bo\nu$ and $\mathbf \Lambda$ up to a relabeling of the latent classes. 


\subsection{Proof of Corollary \ref{cor-three}}
Following a similar argument as the proof of Theorem \ref{thm-suff}, we can establish that each $\KK^i=\KK(\{\bo\Lambda^{(j)}\}_{j\in\mathcal A_i})$ has full column rank for $i=1,2,3$. Since if a matrix with at least two columns has full column rank, it must also have Kruskal rank at least two, the assumptions in Theorem \ref{thm-suff} are satisfied and the conclusion of the corollary directly follows.

\subsection{Proof of Theorem \ref{thm-genid}}
We next introduce a useful concept in algebraic geometry, the \textit{algebraic variety}, to facilitate the proof.
An algebraic variety $\mathcal V$ is defined to be the simultaneous zero-set of a finite set of multivariate polynomials $\{f_i\}_{i=1}^n\subseteq \mathbb R[x_1,x_2,\ldots,x_d]$, 
$\mathcal V =\mathcal V(f_1,\ldots,f_n) = \{\boldsymbol x\in\mathbb R^d\mid f_i(\boldsymbol x) = 0,~ 1\leq i\leq n\}$.
An algebraic variety ${\mathcal V}$ equals the entire space $\mathbb R^d$ if all of the polynomials $f_i$ defining it are zero polynomials.
Otherwise, ${\mathcal V}$ is called a \textit{proper subvariety} of dimension less than $d$, in which case it must have Lebesgue measure zero in $\mathbb R^d$. 
The same argument still holds if the above $\mathbb R^d$ is replaced by a parameter space $\mc T\subseteq\mathbb R^d$ that contains an open ball of full dimension in $\mathbb R^d$.
For the constrained latent class model under constraints \eqref{eq-equa} and \eqref{eq-neq}, we have the following parameter space for $\LLambda$ subject to the constraint of $\mathbf S$ (denoted by $\LLambda^{\mathbf S}$) and mixture proportions $\nnu$,
\begin{align*}
	\Omega^{\mb S} = 
    \Big\{(\LLambda^{\mathbf S},\bo\nu):\,
    & \forall~j\in[p],~\forall~c_j\in[d_j],~ \max_{h:\,S_{j,h}=0} \lambda^{(j)}_{c_j, h} = \min_{h:\,S_{j,h}=0} \lambda^{(j)}_{c_j, h}
    \\
    & \forall~j\in[p],~\forall~c_j\in[d_j],~ 
   \lambda^{(j)}_{c_j, h_1} \neq \lambda^{(j)}_{c_j, h_2}
     ~\text{if}~S_{j,h_1}\neq S_{j,h_2};
    \\
    & \sum_{h=1}^k \nu_h = 1, ~ \nu_h>0~~\forall~ h\in[H]
      \Big\}.
\end{align*}
On $\Omega^{\mb S}$, altering some entries in $\mb S$ from one to zero is equivalent to imposing more equality constraints on $\LLambda$ and forces the resulting parameters to be in $\Omega^{\widetilde{\mb S}}$ instead of $\Omega^{\mb S}$. 
Under the condition that the altered submatrix $\widetilde{\mb S}_{\mc A_i,\bcolon}$ has distinct column vectors, 
 Proposition \ref{prop-kr-gen} gives that for any $\LLambda \in \Omega^{\widetilde{\mb S}}$ the matrix $ \odot_{j\in\mc A_i} \LLambda^{(j)}$ has full column rank $k$ for $i=1,2$.
Note that the statement that the $\prod_{j\in\mc A_i}d_j \times H$ matrix $\odot_{j\in\mc A_i} \LLambda^{(j)}$ has full column rank $k$ is equivalent to the following statement,
\begin{itemize}
	\item[(M)] the maps sending the matrix $\odot_{j\in\mc A_i} \LLambda^{(j)}$ to all of its $C_i := (\prod_{j\in\mc A_i}d_j)! /(k! (\prod_{j\in\mc A_i}d_j - k)!)$ possible $H\times H$ minors $M_1^i,M_2^i,\ldots, M^i_{C_i}$ yields at least one nonzero minor.
\end{itemize}  
Here each matrix determinant $M^i_\ell$ is a polynomial of the  parameters $\{\LLambda^{(j)}\}_{j\in\mc A_i}$, so we can simply write each of them as $M^i_\ell (\{\LLambda^{(j)}\}_{j\in\mc A_i})$.
We now define 
$$
\mathcal N = \bigcup_{i=1,2} \Big\{\bigcap_{\ell=1}^{C_i} \{(\LLambda,\nnu)\in \Omega^{\mb S}: M^i_\ell (\{\LLambda^{(j)}\}_{j\in\mc A_i}) = 0\}\Big\},
$$
then $\mathcal N \subseteq \Omega^{\mb S}$ and $\mathcal N$ is an algebraic variety defined by polynomials of the model parameters. 
An important observation is that this $\mathcal V$ is a \textit{proper} subvariety of $\Omega^{\mb S}$.
We next prove this claim.
First, note that for any $\LLambda \in \Omega^{\widetilde{\mb S}}$, there exists some $\LLambda^{\star} \in \Omega^{\mb S}$ that is arbitrarily close to $\LLambda$.
This is because the altered parameter space $\Omega^{\widetilde{\mb S}}$ is obtained by changing some inequality constraints in the original $\Omega^{\mb S}$ to equality constraints.
Therefore, fixing a $\LLambda \in \Omega^{\widetilde{\mb S}}$ and considering a corresponding $\LLambda^{\star} \in \Omega^{\mb S}$ close enough to $\LLambda$,
since the statement (M) ensures $M^i_{\ell_i} (\{\LLambda^{(j)}\}_{j\in\mc A_i}) \neq 0$ for some $\ell_i\in[C_i]$, $i=1,2$, 
we also have  $M^i_{\ell_i} (\{\LLambda^{\star, (j)}\}_{j\in\mc A_i}) \neq 0$  for  $\ell_i\in[C_i]$, $i=1,2$.
The above argument holds because the polynomials $M^i_{\ell_i}(\bcdot)$'s are continuous functions of the $\LLambda$-parameters and  $\LLambda^{\star} \in \Omega^{\mb S}$ and $\LLambda$ are close enough. 
This indicates that the polynomials defining $\mathcal V$ contain at least one nonzero polynomial on $\Omega^{\mb S}$. 
This proves the earlier claim that $\mathcal V$ is a {proper} subvariety of $\Omega^{\mb S}$, so $\mathcal V$ has measure zero with respect to the Lebesgue measure on $\Omega^{\mb S}$.
Then we can simply consider parameters falling in $\Omega^{\mb S}\setminus \mc N$ and proceed in the same way as in the proof of Theorem \ref{thm-suff} to establish identifiability. The argument above establishes generic identifiability of the parameters by Definition \ref{def-genid}. This completes the proof of Theorem \ref{thm-genid}.

\subsection{Proof of Theorem \ref{thm-pos}}
First, based on Theorem 2 in \cite{dunson2009} we obtain that when the priors for $\LLambda^{\mb S}$ (denoting the $\LLambda$ subject to the constraints imposed by the constraint matrix $\mb S$) and $\nnu$ both have full support, then the induced prior on the probability tensor $\bo\Pi$ for $\bo y_i$ also has full support on the $(\prod_{j=1}^p d_j - 1)$-dimensional simplex.
	We need to establish that for any $\epsilon^\star>0$ there exists some positive integer $N$ such that for all $n>N$ there is
	\begin{align*}
		\mathbb P(\Theta\in \mathcal N^c_{\epsilon}(\Theta^0) \mid \bo y_1,\ldots,\bo y_n) < \epsilon^\star~\text{a.s.}~\mathbb P^0.
	\end{align*}
	Define $$\delta= \inf_{\bo\Theta\in\mathcal N^c_{\epsilon}(\Theta^0) } \norm{\bo\Pi_{\Theta^0} - \bo\Pi_{\Theta}}_1,$$ 
	where $\bo\Pi_{\Theta^0}$ denotes the probability tensor characterizing the distribution of $\bo y_i$ under parameters $\Theta^0$. According to identifiability, there is $\mathbb P_{\Theta^0} \neq \mathbb  P_{\Theta}$ for $\Theta\in\bo{\Theta}\setminus\mathcal N_{\epsilon}(\Theta^0)$. Since $\mathcal N_{\epsilon}$ is an open set and the entire parameter space $\bo{\Theta}$ is a compact set, the complement $\mathcal N^c_{\epsilon}(\Theta^0)$ is also compact. Therefore  the previously defined $\delta$ is greater than zero. and there exists some $N$ such that for $n>N$,
	\begin{align*}
		\mathbb P(\Theta\in \mathcal N^c_{\epsilon}(\Theta^0) \mid \bo y_1,\ldots,\bo y_n) 
		\leq \mathbb P(\norm{\bo\Pi_{\Theta^0} - \bo\Pi_{\Theta}}_1 >\delta/2 \mid \bo y_1,\ldots,\bo y_n) 
		< \epsilon^\star,
	\end{align*}
	where the last inequality above results from the assumption the prior has full support on the space of the probability tensor $\bo\Pi$. This proves the conclusion of the theorem.

\subsection{Proof of Theorem \ref{thm-stack}}
We build on the conclusions of Proposition \ref{prop-relation} and Corollary \ref{cor-three} and prove the identifiability of parameters layer by layer in the ``bottom-up'' direction.
Recall that the values of $K_1,\ldots,K_{D-1}$ are known.
We rewrite the form of $\GG^{(m)}$ as in \eqref{eq-ggm} in the theorem,
\begin{align*}
		\GG^{(m)} = 
		\begin{pmatrix}
			\mb I_{K_m}\\
			\mb I_{K_m}\\
			\mb I_{K_m}\\
			\GG^{(m),\star}
		\end{pmatrix}.
\end{align*}
First consider the $K_1\times K_1$ submatrix of $\GG^{(1)}$ consisting of its first $K_1$ rows (which is $\mb I_{K_1}$) and denote it by $\GG^{(1)}_{1:K_1,\bcolon}$.
The $\GG^{(1)}_{1:K_1,\bcolon}$ will have a corresponding $K_1\times 2^{K_1}$ constraint matrix $\mb S^{(1)}_{1:K_1,\bcolon}$, which consists of the first $K_1$ rows of $\mb S^{(1)}$.
We next prove that this $\mb S^{(1)}_{1:K_1,\bcolon}$ has $2^{K_1}$ distinct column vectors.
Note that each column of $\mb S^{(1)}_{1:K_1,\bcolon}$ is indexed by a latent class characterized by a $K_1$-dimensional binary pattern $\aaa^{(1)}\in\{0,1\}^{K_1}$. 
So we only need to show that if $\aaa = (\alpha_1,\ldots,\alpha_{K_1}) \neq \aaa^\star = (\alpha_1^\star,\ldots,\alpha_{K_1}^\star)$, then $\mb S^{(1)}_{1:K_1,\aaa} \neq \mb S^{(1)}_{1:K_1,\aaa^\star}$.
Without loss of generality, suppose $\alpha_k\neq\alpha^\star_k$ for certain $k\in[K_1]$, with $\alpha_k=1$ and $\alpha^\star_k=0$.
Since the $k$th row of $\GG^{(1)}_{1:K_1,\bcolon}$ is the standard basis vector $\ee_k$ which equals one in the $k$th coordinate and zero in other coordinates, according to Proposition \ref{prop-relation} we have
\begin{align*}
	S^{(1)}_{k,\aaa} &= 1-\mathbbm{1}(\aaa\succeq \ee_k) = 1-1 = 0;\\
	S^{(1)}_{k,\aaa^\star} &= 1-\mathbbm{1}(\aaa^\star\succeq \ee_k) = 1-0 = 1.
\end{align*}
So $S^{(1)}_{k,\aaa} \neq S^{(1)}_{k,\aaa^\star}$, and hence $\mb S^{(1)}_{1:K_1,\aaa} \neq \mb S^{(1)}_{1:K_1,\aaa^\star}$.
Since $\aaa$ and $\aaa^\star$ are arbitrary, we have shown that $\mb S^{(1)}_{1:K_1,\bcolon}$ has $2^{K_1}$ distinct column vectors.
A similar argument applies to $\mb S^{(1)}_{(K_1+1):(2K_1),\bcolon}$ and $\mb S^{(1)}_{(2K_1+1):(3K_1),\bcolon}$, since the second $K_1$ rows and the third $K_1$ rows of $\GG^{(1)}$ also both form an identity matrix $I_m$ by \eqref{eq-ggm}.
This means the constraint matrix $\mb S^{(1)}$ vertically stacks three submatrices $\mb S^{(1)}_{1:K_1,\bcolon}$, $\mb S^{(1)}_{(K_1+1):(2K_1),\bcolon}$, $\mb S^{(1)}_{(2K_1+1):(3K_1),\bcolon}$, each having distinct column vectors. By Corollary \ref{cor-three}, this guarantees the identifiability of the following quantities: the $\mb S^{(1)}$, the parameters of each conditional distribution of $y_j \mid \aaa^{(1)}_{\pa(j)}$ (denoted by $\mathfrak L^{(1)}$), and the parameters of the categorical distribution of $\aaa^{(1)}$ (denoted by $\nnu^{(1)}$).
Here $\nnu^{(1)}=\{\nu_{\aaa};\,\aaa\in\{0,1\}^{K_1}\}$ is a $2^{K_1}$-dimensional vector with non-negative entries summing to one.
We claim that the identifiability of $\mb S^{(1)}$ also implies the identifiability of $\GG^{(1)}$ here. As shown earlier in the previous paragraph, the existence of an identity submatrix $\mb I_{K_1}$ in $\GG^{(1)}$ actually guarantees that the resulting $\mb S^{(1)}$ contains $2^{K_1}$ distinct column vectors. 
This means each column of $\GG^{(1)}$ can indeed be read off from each column of $\mb S^{(1)}$.
This proves the claim of the identifiability of $\GG^{(1)}$.

Now recall that part (b) of Proposition \ref{prop-relation} states the following: the distribution of $\aaa^{(1)}$ can be considered as a constrained latent class model with $2^{K_2}$ latent classes, which are characterized by configurations of latent variable $\aaa^{(2)}$.
Since the distribution of $\aaa^{(1)}$ is already identified with parameters $\nnu^{(1)}$, we can proceed as if $\aaa^{(1)}$ are the observed variables with a known probability tensor.
This means we can further examine the structure of the current constraint matrix $\mb S^{(2)}$ to establish identifiability of parameters for this second constrained latent class model.
By \eqref{eq-ggm}, $\GG^{(2)}$ also contains three copies of identity submatrices $\mb I_{K_2}$, so the same argument as in the last paragraph gives that $\mb S^{(2)}$ vertically stacks three submatrices $\mb S^{(2)}_{1:K_2,\bcolon}$, $\mb S^{(2)}_{(K_2+1):(2K_2),\bcolon}$, $\mb S^{(2)}_{(2K_2+1):(3K_2),\bcolon}$, each having distinct column vectors.
Invoking Corollary \ref{cor-three} again gives the identifiability of the $\GG^{(2)}$, the parameters of each conditional distribution of $\alpha^{(1)}_k \mid \aaa^{(2)}_{\pa(\alpha^{(1)}_k)}$ (denoted by $\mathfrak L^{(2)}$), and the parameters of the categorical distribution of $\aaa^{(2)}$ (denoted by $\nnu^{(2)}$).
Then in a similar and recursive fashion, we can establish identifiability for $\GG^{(m)}$, $\mathfrak L^{(m)}$, and $\nnu^{(m)}$ for each $m=2,\ldots,D-1$. This completes the proof of Theorem \ref{thm-stack}.

\subsection{Proof of Proposition \ref{prop-2layer}}\label{sec-pf2layer}
Rewrite the distribution of each $y_j$ given $\aaa^{(1)}$ under \eqref{eq-2layer} as
\begin{align*}
\lambda^{(j)}_{c,\,\aaa}
&=
\mathbb P(y_j=c\mid \aaa^{(1)}=\aaa) \\
&=
\frac{ \exp\left(\beta_{j,c,0} + \sum_{k=1}^{K_1} \beta_{j,c,k} g^{(1)}_{j,k}\alpha_k \right)}{ \sum_{m=1}^{d_j} \exp\left(\beta_{j,0,m} + \sum_{k=1}^{K_1} \beta_{j,k,m} g^{(1)}_{j,k}\alpha_k \right)},
\quad
c\in[d_j],~ \aaa\in\{0,1\}^{K_1},
\end{align*}
where  $\beta_{j,0,d_j} = \beta_{j,k,d_j} = 0$ for all $k\in[K_1]$.
For each variable $y_j$, consider a $d_j\times 2^{K_1}$ conditional probability matrix $\Lambda^{j} = (\lambda^{(j)}_{c,\,\aaa})$ whose rows are indexed by the $d_j$ categories and columns by the $2^{K_1}$ possible binary latent pattern configurations.
Under the condition $\GG^{(1)}_{1:K_1,\bcolon} = \GG^{(1)}_{(K_1+1):2K_1,\bcolon} = \GG^{(1)}_{(2K_1+1):3K_1,\bcolon} =\mb I_{K_1}$, consider the two Khatri-Rao product matrices 
$$\mb K^{1} = \odot_{j=1}^{K_1} \Lambda^{(j)},\quad
\mb K^{2} = \odot_{j=K_1+1}^{2K_1} \Lambda^{(j)},\quad
\mb K^{3} = \odot_{j=2K_1+1}^{3K_1} \Lambda^{(j)}.$$
We claim that each of $\mb K^1$, $\mb K^2$, and  $\mb K^3$ has full column rank $2^{K_1}$.
We next focus on proving this claim for $\mb K^1$ without loss of generality.
Note $\mb K^1$ is of size $\prod_{j=1}^{K_1} d_j \times 2^{K_1}$, where each of its row is indexed by a $K_1$-dimensional response pattern $\bo y_{1:K_1}\in\times_{j=1}^{K_1} [d_j]$.
There is $d_j\geq 2$ for each $j$, so to show $\mb K^1$ has full column rank, it suffices to find a $2^{K_1}\times 2^{K_1}$ submatrix of it that is invertible.
To this end, we next specifically consider the rows of $\mb K^1$ indexed by response pattern $\bo y = (y_1,\ldots,y_{K_1})$ where each $y_j\in\{1,d_j\}$. These $2^{K_1}$ response patterns naturally give rise to a $2^{K_1}\times 2^{K_1}$ submatrix of $\mb K^1$, which we denote by $\mb K^{1,\sub}$.
Now note that for each $j\in[K_1]$ there is $\bo g^{(1)}_{j,\bcolon} = \ee_j$, the $j$th standard basis vector, therefore when $\aaa$ varies all the $\lambda^{(j)}_{1,\,\aaa}$ only takes two possible values: 
\begin{align}\label{eq-lam1}
\lambda^{(j)}_{1,\,\aaa} =
\begin{cases}
    \dfrac{\exp\left(\beta_{j,0,1}\right)}{\sum_{m=1}^{d_j} \exp\left(\beta_{j,0,m} \right) }, & \text{if}~~\aaa\nsucceq\ee_j;
    \\[5mm]
    \dfrac{\exp\left(\beta_{j,0,1} + \beta_{j,j,1}\right)}{\sum_{m=1}^{d_j} \exp\left(\beta_{j,0,m} + \beta_{j,j,m} \right) }, & \text{if}~~\aaa\succeq\ee_j.
\end{cases}
\end{align}
Similarly there are
\begin{align}\label{eq-lamdj}
\lambda^{(j)}_{d_j,\,\aaa} =
\begin{cases}
    \dfrac{1}{\sum_{m=1}^{d_j} \exp\left(\beta_{j,0,m} \right) }, & \text{if}~~\aaa\nsucceq\ee_j;
    \\[5mm]
    \dfrac{1}{\sum_{m=1}^{d_j} \exp\left(\beta_{j,0,m} + \beta_{j,j,m} \right) }, & \text{if}~~\aaa\succeq\ee_j.
\end{cases}
\end{align}
Now an important observation is that the conditional probability tensor $\mb K^{1,\sub}$ can be written as a Kronecker product of $K_1$ matrices each of size $2\times 2$ as follows,
\begin{align}\label{eq-k1sub}
    \mb K^{1,\sub} = 
    \bigotimes_{j=1}^{K_1} 
    \begin{pmatrix}
        \dfrac{\exp\left(\beta_{j,0,1}\right)}{\sum_{m=1}^{d_j} \exp\left(\beta_{j,0,m} \right) }
        &
        \dfrac{\exp\left(\beta_{j,0,1} + \beta_{j,j,1}\right)}{\sum_{m=1}^{d_j} \exp\left(\beta_{j,0,m} + \beta_{j,j,m} \right)}
        \\[5mm]
        \dfrac{1}{\sum_{m=1}^{d_j} \exp\left(\beta_{j,0,m} \right) }
        &
        \dfrac{1}{\sum_{m=1}^{d_j} \exp\left(\beta_{j,0,m} + \beta_{j,j,m} \right) }
    \end{pmatrix}
    =:  \bigotimes_{j=1}^{K_1} \mb B^{(j)}.
\end{align}
Under the assumptions $\beta_{j,k,1} \neq 0$ for any  $g^{(1)}_{j,k}=1$ stated in the theorem, we have $\beta_{j,j,1} \neq 0$ for all $j\in[K_1]$.
Therefore, the $2\times 2$ matrix $\mathbf B^{(j)}$ in \eqref{eq-k1sub} has full rank. 
According to the property of the Kronecker product, the $\mb K^{1,\sub}$ then must have full rank $2^{K_1}$. The same conclusion also holds for $\mb K^2$,  $\mb K^3$ and we have proved the earlier claim that each of $\mb K^{1}$, $\mb K^{2}$, and  $\mb K^3$ has full column rank.
Note that after marginalizing out the deep latent variable $z$, the model for $\bo y$ is simply a constrained latent class model with $2^{K_1}$ latent classes indexed by $\aaa\in\{0,1\}^{K_1}$. 
Now invoking Corollary \ref{cor-three} gives the identifiability of $\LLambda^{(1)}$, $\GG^{(1)}$, and $\bo \nu^{(1)}$. 
This proves part (a) of the proposition.

We next prove part (b) of the proposition. The conclusion of part (a) guarantees the identifiability of $\nnu^{(1)} := \{\mathbb P(\aaa^{(1)} = \aaa);\,\aaa\in\{0,1\}^{K_1}\}$, so now the $\aaa^{(1)}$ can be viewed as if they are observed variables with known probability mass function $\nnu^{(1)}$.
Then it suffices to note that the distribution of $\aaa^{(1)}$ is simply an unconstrained latent class model with $H$ latent classes. Under $K_1 \geq 2\ceil{\log_2 B}+1$, Corollary 5 in \cite{allman2009} directly gives the generic identifiability conclusion for $\bo\tau$ and $\bo\eta$.
The proof of the proposition is complete.

\subsection{Proof of Proposition \ref{prop-gendiag}}
Denote the three sets of row indices forming $\GG^{(1),1}$, $\GG^{(1),2}$, and $\GG^{(1),3}$ by $\mc A_1$, $\mc A_2$, and $\mc A_3$; that is $\GG^{(1),i} = \mc A_i$ for each $i=1,2,3$.
We claim that for each $\mc A_i$ with $i=1,2,3$, the induced Khatri-Rao product matrix $\mb K_i = \odot_{j\in\mc A_i} \Lambda^{(j)}$ has full column rank for \textit{generic} parameters under the equality constraint \eqref{eq-equa}.
If this claim is true, then we can invoke Corollary \ref{cor-three} to establish the desired generic identifiability result.
So we next focus on proving this claim.
The statement in the corollary says that each $\GG^{(1),i}$ takes the following form,
\begin{align}\label{eq-diag}
   \GG^{(1),i} =
   \begin{pmatrix}
       1 & * & \cdots & * \\
       * & 1 & \cdots & * \\
       \vdots & \vdots & \ddots & \vdots \\
       * & * & \cdots & 1 \\
   \end{pmatrix}.
\end{align}
We next consider a $2^{K_1} \times 2^{K_1}$ submatrix of the Khatri-Rao product matrix $\mb K^{\mc A_i} = \odot_{j\in \mc A_1} \Lambda^{(j)}$.
Similarly as in the proof of Proposition \ref{prop-2layer}, we consider the rows of $\mb K^{\mc A_i}$ indexed by response patterns $\bo y_{\mc A_i} \in \times_{j\in\mc A_i} \{1, d_j\}$.
First, in a special case where all the off-diagonal entries of $\GG^{(1),i}$ in \eqref{eq-diag} are equal to zero, there is $\GG^{(1),i}=I_{K_1}$ and Eq.~\eqref{eq-k1sub} in the proof of Proposition \ref{prop-2layer} states that the matrix 
$\mb K^{\mc A_i}(\bo\beta) := \otimes_{j\in \mc A_i}
\mb B^{(j)}(\bo\beta)$ with
\begin{equation}\label{eq-defbj}
\mb B^{(j)}(\bo\beta) = 
    \begin{pmatrix}
        \dfrac{\exp\left(\beta_{j,0,1}\right)}{\sum_{m=1}^{d_j} \exp\left(\beta_{j,0,m} \right) }
        &
        \dfrac{\exp\left(\beta_{j,0,1} + \beta_{j,j,1}\right)}{\sum_{m=1}^{d_j} \exp\left(\beta_{j,0,m} + \beta_{j,j,m} \right)}
        \\[5mm]
        \dfrac{1}{\sum_{m=1}^{d_j} \exp\left(\beta_{j,0,m} \right) }
        &
        \dfrac{1}{\sum_{m=1}^{d_j} \exp\left(\beta_{j,0,m} + \beta_{j,j,m} \right) }
    \end{pmatrix}
\end{equation}
has full rank $2^{K_1}$ when the $\bo\beta$-parameters vary in the $I_{K_1}$-constrained parameter space 
$$
\Omega(\bo\beta;\, \GG^{(1),i} =\mb I_{K_1}) := \{\beta_{1:K_1,1:K_1,1:(d_j-1)};\,\beta_{j,j,c}\neq 0~\text{and}~\beta_{j,c,k}=0 ~\text{if}~ k\neq j\}.
$$
Now an important observation is that for an arbitrary $\GG^{(1),i} =: \GG^\star$ taking the form of \eqref{eq-diag}, its corresponding constrained parameter space 
$\Omega(\bo\beta;\, \GG^{(1),i} = \GG^\star)$ 
must contain some $\bo\beta$ that is arbitrarily close to some $\bo\beta^\star$ belonging to $\Omega(\bo\beta;\, \GG^{(1),i} =\mb I_{K_1})$.
Mathematically, this means: 
\begin{itemize}
    \item[(C)] For any small positive number $\epsilon > 0$ and any $\bo\beta \in \Omega(\bo\beta;\, \GG^{(1),i} =\mb I_{K_1})$, there must be some $\bo\beta^\star \in \Omega(\bo\beta;\, \GG^{(1),i} = \GG^\star)$ such that $\norm{\bo\beta - \bo\beta^\star} < \epsilon$.
\end{itemize}
The reasoning behind this claim is as follows. Recall the definition of the parameter space $\Omega(\bo\beta;\,\GG^{(1)})$ for a general graphical matrix $\GG^{(1)}$ given in \eqref{eq-omegab} in the main text,
$$
\Omega(\bo\beta;\, \GG^{(1)}) = \{\beta_{1:p,\, 1:K_1,\, 1:(d_j-1)};\, \beta_{j,c,k} \neq 0~\text{if}~g^{(1)}_{j,k}=1;~\text{and}~\beta_{j,c,k}=0~\text{if}~g^{(1)}_{j,k}=0\}.
$$
By this definition, considering the structure of $\GG^\star$ with diagonal elements being one and some off-diagonal elements potentially also being one, the space $\Omega(\bo\beta;\, \GG^{(1),i} = \GG^\star)$  can be obtained from $\Omega(\bo\beta;\, \GG^{(1),i} =\mb I_{K_1})$ by performing the following operation:
\begin{itemize}
    \item[(P)] If $g^{(1),\star}_{j,k} = 1$ in $\GG^\star$, then convert the equality constraint $\beta_{j,c,k} = 0$ in $\Omega(\bo\beta;\, \GG^{(1),i} =\mb I_{K_1})$ into the equality constraint $\beta_{j,c,k} \neq 0$.
\end{itemize}
Based on the above (P) that $\Omega(\bo\beta;\, \GG^{(1),i} = \GG^\star)$ is obtained through changing certain equality constraints in $\Omega(\bo\beta;\, \GG^{(1),i} =\mb I_{K_1})$ to inequality constraints, it is not hard to see that the previous claim (C) is true.
Recall for any $\bo\beta\in \Omega(\bo\beta;\, \GG^{(1),i} = \GG^\star)$, the $\mb K^{\mc A_i}$ defined earlier has full rank, that is,
$\text{det}\left( \mb K^{\mc A_i}(\bo\beta) \right) \neq 0$, where $\text{det}(\mb M)$ denotes the determinant of matrix $\mb M$.
Since the $\text{det}\left( \mb K^{\mc A_i}(\bo\beta) \right)$ can be viewed as a continuous function of parameters $\bo\beta$, we obtain that there exists some $\bo\beta^\star\in \Omega(\bo\beta;\, \GG^{(1),i} = \GG^\star)$ close enough to $\bo\beta$ (see claim (C)) such that  
$\text{det}\left( \mb K^{\mc A_i}(\bo\beta^\star) \right) \neq 0$.
According to the multinomial logistic model specified in \eqref{eq-2layer}, the $\text{det}\left( \mb K^{\mc A_i}(\bo\beta) \right)$ is actually an {analytic} function of $\bo\beta$, and $\text{det}\left( \mb K^{\mc A_i}(\bo\beta^\star) \right) \neq 0$ for some $\bo\beta^\star \in \Omega(\bo\beta;\, \GG^{(1),i} = \GG^\star)$ implies that the following subset 
$$
\{\bo\beta^\star\in\Omega(\bo\beta;\, \GG^{(1),i} = \GG^\star):\, \text{det}\left( \mb K^{\mc A_i}(\bo\beta) \right) = 0 \}
$$
has measure zero with respect to the Lebesgue measure on $\Omega(\bo\beta;\, \GG^{(1),i} = \GG^\star)$.
Thus we have proved that for generic parameters, the $\odot_{j\in \mc A_i} \Lambda^{(j)}$ has full column rank $2^{K_1}$.
Having three full rank structures $\odot_{j\in \mc A_i} \Lambda^{(j)}$ for $i=1,2,3$, we can invoke the Kruskal's theorem in Lemma \ref{lem-kruskal} to conclude that the parameters $\GG^{(1)}$, $\bo\beta$, and $\nnu^{(1)} := \{\mathbb P(\aaa^{(1)} = \aaa);\,\aaa\in\{0,1\}^{K_1}\}$ are generically identifiable.
Then combined with the condition that $K_1 \geq 2\ceil{\log_2 B}+1$, as shown in the last part of the proof of Proposition \ref{prop-2layer} we also have the generic identifiability of $\bo\eta$ and $\bo\tau$.
This concludes the proof of Proposition \ref{prop-gendiag}.

\begin{proof}[Proof of Lemma \ref{lem-krprod}]
We prove the equality \eqref{eq-krprod} by showing the corresponding elements of the two matrices on the left and right hand sides (LHS and RHS) are equal. Suppose each $\mathbf A_j=(a^j_{m,\ell})$ has dimension $x_j\times y_j$ and each $\mathbf B_j=(b^j_{m,\ell})$ has dimension $y_j \times H$. Then both LHS and RHS have dimension $\prod_{j=1}^p x_j \times H$. For arbitrary $\bo c=(c_1,\ldots, c_p)^\top$ with each $c_j\in[x_j]$ and arbitrary $h\in[H]$, there is
\begin{align*}
	\text{\normalfont{LHS}}_{\bo c, h} 
	&= 
	\prod_{j=1}^p \{\mathbf A_j  \mathbf B_j\}_{c_j, h}
	=\prod_{j=1}^p\left( \sum_{m=1}^{y_j} a^j_{c_j,m} b^j_{m,h} \right)
	=\sum_{m_1=1}^{y_1}\cdots \sum_{m_p=1}^{y_p} \prod_{j=1}^p a^j_{c_j,m_j} b^j_{m_j,h};\\
	\text{\normalfont{RHS}}_{\bo c, h} 
	&= \sum_{\bo m=(m_1,\ldots,m_p),\atop m_j\in[y_j]}\left\{\otimes_{j=1}^p \mathbf A_j \right\}_{\bo c,\bo m} 
	\times 
	\left\{\odot_{j=1}^p \mathbf B_j\right\}_{\bo m,h}
	=\sum_{m_1=1}^{y_1}\cdots \sum_{m_p=1}^{y_p} \prod_{j=1}^p a^j_{c_j,m_j} b^j_{m_j,h},
\end{align*}
therefore $\text{\normalfont{LHS}}_{\bo c, h} = \text{\normalfont{RHS}}_{\bo c, h}$ for any $\bo c$ and $h$. This establishes the equality in the lemma and completes the proof.
\end{proof}

\begin{proof}[Proof of Lemma \ref{lem-mult}]
We can write out the explicit form of $\KK(\{\bo\Lambda^{(j)}-\bo\delta^{(j)}\bcdot\one^\top_k \}_{j\in[p]})$ as follows,
\begin{align*}
	&~ \odot_{j=1}^p 
	\begin{pmatrix}
		\lambda^{(j)}_{1,1}-\delta^{(j)}_{1} & \lambda^{(j)}_{2,1}-\delta^{(j)}_{1} & \cdots & \lambda^{(j)}_{H,1}-\delta^{(j)}_{1}\\
		\vdots & \vdots & \vdots & \vdots \\
		\lambda^{(j)}_{1,d_j-1}-\delta^{(j)}_{d_j-1} & \lambda^{(j)}_{2,d_j-1}-\delta^{(j)}_{d_j-1} & \cdots & \lambda^{(j)}_{H,d_j-1}-\delta^{(j)}_{d_j-1} \\
			1 & 1 & \cdots & 1
	\end{pmatrix} \\
	=&~ \odot_{j=1}^p 
	\left\{
	\begin{pmatrix}
		1 & 0 & \cdots & 0 & -\delta^{(j)}_1 \\
		0 & 1 & \cdots & 0 & -\delta^{(j)}_2 \\
		\vdots & \vdots & \ddots & \vdots & \vdots \\
		0 & 0 & \cdots & 1 & -\delta^{(j)}_{d_j-1}\\
		0 & 0 & \cdots & 0 & 1
	\end{pmatrix}
	\bcdot
	\begin{pmatrix}
		\lambda^{(j)}_{1,1} & \lambda^{(j)}_{2,1} & \cdots & \lambda^{(j)}_{H,1}\\
		\vdots & \vdots & \vdots & \vdots \\
		\lambda^{(j)}_{1,d_j-1} & \lambda^{(j)}_{2,d_j-1} & \cdots & \lambda^{(j)}_{H,d_j-1} \\
			1 & 1 & \cdots & 1
	\end{pmatrix} 
	\right\}
	\\
	=&~
	\left\{
	\bigotimes_{j=1}^p 
	\begin{pmatrix}
		1 & 0 & \cdots & 0 & -\delta^{(j)}_1 \\
		0 & 1 & \cdots & 0 & -\delta^{(j)}_2 \\
		\vdots & \vdots & \ddots & \vdots & \vdots \\
		0 & 0 & \cdots & 1 & -\delta^{(j)}_{d_j-1}\\
		0 & 0 & \cdots & 0 & 1
	\end{pmatrix}
	\right\}
	\bcdot
	\left\{
	\odot_{j=1}^p 
	\begin{pmatrix}
		\lambda^{(j)}_{1,1} & \lambda^{(j)}_{2,1} & \cdots & \lambda^{(j)}_{H,1}\\
		\vdots & \vdots & \vdots & \vdots \\
		\lambda^{(j)}_{1,d_j-1} & \lambda^{(j)}_{2,d_j-1} & \cdots & \lambda^{(j)}_{H,d_j-1} \\
			1 & 1 & \cdots & 1
	\end{pmatrix} 
	\right\} 
	\\
	=&:
	\left\{\bigotimes_{j=1}^p \bo\Delta^{(j)}\right\} \bcdot \KK(\{\bo\Lambda^{(j)} \}_{j\in[p]}),
\end{align*}
where the second equality above also results from Lemma \ref{lem-krprod}.
Because each $\bo\Delta^{(j)}$ in the above display is an upper-triangular matrix with all the diagonal entries equal to one, the Kronecker product of them is an invertible matrix of size $\prod_{j=1}^p d_j \times \prod_{j=1}^p d_j$.
Therefore we can take $\mathbf B(\{\bo\delta^{(j)}\}_{j\in[p]})= \bigotimes_{j=1}^p \bo\Delta^{(j)}$ and this completes the proof of the lemma.
\end{proof}

\section{Posterior Computation Details}
\label{sec-poscomp}

\subsection{Gibbs Sampler for a Fixed $K=K_1$ (Number of Binary Latent Traits in the Middle Layer)}\label{sec-gibbs2layer}
The posterior distribution is
\begin{align}\label{eq-pos}
& p(\bo\beta,\GG^{(1)},\bo\tau,\bo\eta, \mb A, \mb Z\mid \mb Y)
\\ \notag
    \propto & ~\prod_{i=1}^n \prod_{j=1}^p \prod_{c=1}^d \prod_{k=1}^{K_1}
    \frac{\left[\exp\left(\beta_{jc0} + \sum_{k=1}^{K_1} \beta_{jck} g_{j,k} a_{i,k} \right)\right]^{\mathbbm{1}(y_{ij} = c )}}{\sum_{l=1}^{d} \exp\left(\beta_{jl0} + \sum_{k=1}^{K_1} \beta_{jlk} g_{j,k} a_{i,k} \right)}
    \times
    \prod_{i=1}^n
    \prod_{b=1}^B
    \left[ \tau_b \prod_{k=1}^{K_1} \eta_{kb}^{a_{ik}} (1-\eta_{kb})^{1-a_{ik}} \right]^{z_{ib}}
    \\ \notag
    &\times
    \prod_{c=1}^{d-1}\prod_{j=1}^p \exp\left\{ -1/2\sigma_{c0}^{-2} (\beta_{jc0} - \mu_{c0})^2 \right\}
    \\ \notag
    &\times
    \prod_{k=1}^{K_1} \prod_{j=1}^p 
    \prod_{c=1}^{d-1}
    \left[\sigma_{ck}^{-1}  \exp\left( -1/2 \sigma_{ck}^{-2}
    \beta_{jck}^2 \right)\right]^{g_{j,k}}
    \left[v_0^{-1}  \exp\left( -1/2 v_0^{-2}
    \beta_{jck}^2 \right)\right]^{1-g_{j,k}}
    \\ \notag
    &\times
    \prod_{j=1}^p \prod_{k=1}^{K_1} \gamma^{g_{j,k}} (1-\gamma)^{1-g_{j,k}}
    \times
    \prod_{k=1}^{K_1} \prod_{c=1}^{d-1} 
    \sigma_{ck}^{-2a_{\sigma}-2} \exp\left( -b_{\sigma} \sigma_{ck}^{-2}\right)
    \\ \notag
    &\times
    \prod_{i=1}^n \prod_{j=1}^p \prod_{c=1}^{d-1} \text{PG}(w_{ijc}\mid 1,0)
    \times 
    \text{Dirichlet}(\bo\tau;\, \aaa_{\tau})
    \times \prod_{b=1}^B \prod_{k=1}^{K_1} \text{Beta}(\eta_{kb};\, \aaa_{\eta})
    \times \text{Beta}(\gamma;\, \aaa_{\gamma}).
\end{align}
Here the PG$(\bcdot\mid 1, 0)$ denotes the density of the Polya-Gamma distribution \citep{polson2013pg}.
We simply take $\aaa_{\tau} = \one_B$ and $\aaa_{\eta} = \aaa_{\gamma} = (1,1)^\top$.

\bigskip
\noindent
\textbf{(1) Conditional distribution for $\bo\beta$.}
We use the auxiliary variable approach in \cite{holmes2006} for multinomial regression to derive the conditional distribution of each $\beta_{jck}$.
Recall the posterior $p(\bo\beta,\mb W \mid \bo y_{1:n})$ in Section \ref{sec-method}.

For an arbitrary given graphical vector $\bo g_{j,\bcolon} = (g_{j,1},\ldots,g_{j,K})$, define $\mc K_j = \{k\in[K]:\, g_{j,k}=1\}$, the set of parent latent variables of variable $y_j$.
Then denote $\bo\beta_{jc,\,\text{eff}} = (\beta_{jck};\, k\in \mc K_j)^\top$, 
$\tilde{\bo\beta}_{jc,\,\text{eff}} = (\beta_{jc0},\bo\beta_{jc,\,\text{eff}}^\top)^\top$;
also denote ${\bo a}_{i,\,\text{eff}} = (a_{i,k};\, k\in \mc K_j)^\top$, 
$\tilde{\bo a}_{i,\,\text{eff}} = (1,{\bo a}_{i,\,\text{eff}}^\top)^\top$.
Then we can write 
$\phi_{ijc} 
= \tilde{\bo a}_i^\top \tilde{\bo\beta}_{jc,\bcolon} - C_{ij(c)}$, where
$$
C_{ij(c)} = 
\log\left\{ \sum_{ 1\leq \ell\leq d_j,\,\ell\neq c} \exp\left( \beta_{j\ell 0} + \sum_{k=1}^{K_1} \beta_{j\ell k} g_{j,k} \alpha_{i,k} \right) \right\}.
$$
By the property of the Polya-Gamma augmentation \citep{polson2013pg}, the conditional distribution of $\tilde{\bo\beta}_{jc,\bcolon}$ is
\begin{align*}
 &~ p(\tilde{\bo\beta}_{jc,\,\text{eff}} \mid \mb Y, \mb W)\\
= &~ 
p(\tilde{\bo\beta}_{jc,\,\text{eff}}) \prod_{i=1}^n \mc L(\bo y_i\mid\bo\beta)
\\
\propto &~
p(\tilde{\bo\beta}_{jc,\,\text{eff}})
    \prod_{i=1}^n  \xi_{ijc}^{y_{ijc}} (1-\xi_{ijc})^{1-y_{ijc}}
=p(\tilde{\bo\beta}_{jc,\,\text{eff}})
    \prod_{i=1}^n  \frac{[\exp(\phi_{ijc})]^{y_{ijc}}}{1+\exp(\phi_{ijc})}
\\
\propto &~
p(\tilde{\bo\beta}_{jc,\,\text{eff}})
\prod_{i=1}^n \exp\left\{(y_{ijc}-1/2) \phi_{ijc} \right\}
\exp\left\{ - 1/2 w_{ijc} \phi_{ijc}^2 \right\}
\\
\propto &~
p(\tilde{\bo\beta}_{jc,\,\text{eff}})
\prod_{i=1}^n 
\exp\left\{ - \frac{1}{2} w_{ijc} \left( \phi_{ijc} - \frac{y_{ijc}-1/2}{w_{ijc}} \right)^2 \right\} 
\\
&~(\text{use}~\phi_{ijc} = \tilde{\bo a}_i^\top \tilde{\bo\beta}_{jc,\,\text{eff}} - C_{ij(c)} ~\text{and define}~ x_{ijc} = \frac{y_{ijc}-1/2}{w_{ijc}} + C_{ij(c)})
\\
\propto &~
p(\tilde{\bo\beta}_{jc,\,\text{eff}})
\prod_{i=1}^n 
\exp\left\{ - \frac{1}{2} w_{ijc} \left( \tilde{\bo a}_{i,\,\text{eff}}^\top \tilde{\bo\beta}_{jc,\,\text{eff}} - x_{ijc} \right)^2 \right\}
\\
\propto &~
p(\tilde{\bo\beta}_{jc,\,\text{eff}})
\exp\left\{ - \frac{1}{2} 
\left(
 x_{\bcolon,jc} - \tilde{\mb A}_{\bcolon,\,\text{eff}}\tilde{\bo\beta}_{jc,\,\text{eff}}
\right)^\top
\diag(\mb W_{\bcolon,jc})
\left(
 x_{\bcolon,jc} - \tilde{\mb A}_{\bcolon,\,\text{eff}}\tilde{\bo\beta}_{jc,\,\text{eff}}
\right)
\right\}
\\
\propto &~
p(\tilde{\bo\beta}_{jc,\,\text{eff}})
\exp\left\{ - \frac{1}{2} 
\left(
\tilde{\bo\beta}_{jc,\,\text{eff}} -
 \tilde{\mb A}_{\bcolon,\,\text{eff}}^{-1} x_{\bcolon,jc} 
\right)^\top
\tilde{\mb A}_{\bcolon,\,\text{eff}}^{\top}\diag(\mb W_{\bcolon,jc})\tilde{\mb A}_{\bcolon,\,\text{eff}}
\left(
\tilde{\bo\beta}_{jc,\,\text{eff}} -
 \tilde{\mb A}_{\bcolon,\,\text{eff}}^{-1} x_{\bcolon,jc} 
\right)\right\},
\end{align*}
where $\tilde{\mb A}_{\bcolon,\,\text{eff}}$ is a $n\times (|\mc K_j|+1)$ matrix with the $i$th row being $\tilde{\bo a}_{i,\,\text{eff}}$.
Define $\bo\mu_{0c} = (\mu_{0c},0,\ldots,0)^\top$, a vector of length $K_1+1$, and ${\bo\Sigma}_{0jc} = \diag\left(\sigma_{c0}^{2}, \{\sigma^2_{ck};~k\in\mc K_j\} \right)$, a diagonal matrix of size $(|\mc K_j|+1) \times (|\mc K_j|+1)$.
Considering the prior for $\tilde{\bo\beta}_{jc,\,\text{eff}}$, we have
\begin{align*}
p(\tilde{\bo\beta}_{jc,\,\text{eff}} \mid -)
\propto &~
\exp\left\{ - \frac{1}{2} \left(\tilde{\bo\beta}_{jc,\,\text{eff}} - \bo\mu_{0c} \right)^\top 
{\bo\Sigma}_{0jc}^{-1}
\left(\tilde{\bo\beta}_{jc,\,\text{eff}} - \bo\mu_{0c} \right) \right\}
\\
&~ \times
\exp\left\{ - \frac{1}{2} 
\left(
\tilde{\bo\beta}_{jc,\,\text{eff}} -
 \tilde{\mb A}_{\bcolon,\,\text{eff}}^{-1} x_{\bcolon,jc} 
\right)^\top
\tilde{\mb A}_{\bcolon,\,\text{eff}}^{\top}\diag(\mb W_{\bcolon,jc})\tilde{\mb A}_{\bcolon,\,\text{eff}}
\left(
\tilde{\bo\beta}_{jc,\,\text{eff}} -
 \tilde{\mb A}_{\bcolon,\,\text{eff}}^{-1} x_{\bcolon,jc} 
\right)\right\},
\end{align*} 
so 
$(\tilde{\bo\beta}_{jc,\,\text{eff}} \mid -) \sim
N(\bo\mu_{jc}, {\bo\Sigma}_{jc})$
where 
\begin{align}\label{eq-posbeta}
\bo\mu_{jc} = {\bo\Sigma}_{jc}\left(
\tilde{\mb A}^{\top}\diag(\mb W_{\bcolon,jc}) x_{\bcolon,jc}
 + {\bo\Sigma}_{0jc}^{-1}\bo\mu_{0c} \right),
\quad
{\bo\Sigma}_{jc} = \left(\tilde{\mb A}^{\top}\diag(\mb W_{\bcolon,jc})\tilde{\mb A} + {\bo\Sigma}_{0jc}^{-1}\right)^{-1}.
\end{align}
As for those $k\notin \mathcal K_j$ with $g_{j,k} = 0$, we sample $\beta_{jkc}$ from the pseudo prior $N(0,v_0^2)$. In summary, for any $j\in[p]$, we sample $\bo\beta_{j,\bcolon,\bcolon} = (\tilde{\bo\beta}_{jc,\,\text{eff}}, ~ \bo\beta_{jc,\,\text{ine}})$ as follows
\begin{align*}
    \tilde{\bo\beta}_{jc,\,\text{eff}} \sim N(\bo\mu_{jc}, {\bo\Sigma}_{jc});
    \quad
    \bo\beta_{jc,\,\text{ine}} \sim N(0, v_0^2 \mb I_{p-|\mc K_j|}).
\end{align*}

\bigskip
\noindent
\textbf{(2) Conditional distribution for Polya-Gamma random variables $w_{ijc}$.}
\begin{align*}
    p(w_{ijc}\mid -)
    \propto
    \text{PolyaGamma}\mleft(w_{ijc} \;\middle|\; 1,~ \beta_{jc0} + \sum_{k=1}^{K_1} \beta_{jck} g_{j,k} a_{i,k} - C_{ij(c)}\mright).
\end{align*}

\bigskip
\noindent
\textbf{(3) Conditional distribution for $g_{j,k}$.}
\begin{align*}
    &~\mathbb P(g_{j,k} = 1\mid -)
    =
    \frac{\gamma}
    {\gamma 
    +
    (1-\gamma) O^{01}_{jk}};
    \\
    &~O^{01}_{jk}
    =
    \prod_{c}
    \frac{v_0^{-1}  \exp\left( -1/2 v_0^{-2}
    \beta_{jck}^2 \right)}
    {\sigma_{ck}^{-1}  \exp\left( -1/2 \sigma_{ck}^{-2}
    \beta_{jck}^2 \right)}
    \prod_{i}\prod_{c} 
    \frac{ p(y_{ij} = c\mid \mb -, g_{j,k}=0)}
    {p(y_{ij} = c\mid \mb -, g_{j,k}=1)};
\end{align*}
Also there is $\gamma\sim \text{Beta}(1+\sum_{j=1}^p\sum_{k=1}^K g_{j,k},\, 1 + pK -\sum_{j=1}^p\sum_{k=1}^K g_{j,k})$.

\bigskip
\noindent
\textbf{(4) Conditional distribution for $\sigma_{ck}^2$.}
\begin{align*}
p(\sigma_{ck}^2 \mid -)
    \propto &~
    \prod_{j=1}^p
    \left[\sigma_{ck}^{-1}  \exp\left( -1/2 \sigma_{ck}^{-2}
    \beta_{jck}^2 \right)\right]^{g_{j,k}}
    \times
    \sigma_{ck}^{-2(a_{\sigma}+1)} \exp\left( -b_{\sigma} \sigma_{ck}^{-2}\right)
    \\
    \propto &~
\sigma_{ck}^{-2(1/2 \sum_{j=1}^p g_{j,k} + a_{\sigma}+1)}
\exp\left\{ - \sigma_{ck}^{-2} \left(\frac{1}{2} \sum_{j=1}^p  g_{j,k} \beta_{jck}^2  + b_\sigma \right) \right\},
\end{align*}
so 
$$
\sigma_{ck}^2 \mid - \sim \text{InvGa}
\left(
\frac{1}{2}\sum_{j=1}^p g_{j,k} + a_\sigma,~
\frac{1}{2} \sum_{j=1}^p  g_{j,k} \beta_{jck}^2 + b_\sigma
\right).
$$

\bigskip
\noindent
\textbf{(5) The conditional distribution for parameters of the deeper latent layer.}
For parameters $\bo\tau$ and $\bo\eta$ underlying the deeper latent layer, 
\begin{align*}
    (\tau_1,\ldots,\tau_B)
    &\sim
    \text{Dirichlet}\left(1+\sum_{i=1}^n z_{i1},\,\ldots,\,1+\sum_{i=1}^n z_{iB}\right);\\
    \eta_{kb} 
    &\sim \text{Beta}\left(1 + \sum_{i=1}^n a_{ik}z_{ib}, 
    ~ 1 + \sum_{i=1}^n (1-a_{ik})z_{ib}\right).
\end{align*}

\bigskip
\noindent
\textbf{(6) The conditional distribution for subject-specific local parameters $\mb A$ and $\mb Z$.}
\begin{align*}
    &\mathbb P(a_{ik}=1\mid -)
    =\\
     &\frac{\prod_{b} \eta_{kb}^{z_{ib}} \prod_{j}\prod_{c} p(y_{ij} = c\mid -, a_{ik}=1)}{\prod_{b} \eta_{kb}^{z_{ib}} \prod_{j}\prod_{c} p(y_{ij} = c\mid \mb -, a_{ik}=1) + \prod_{b} (1-\eta_{kb})^{z_{ib}} \prod_{j}\prod_{c} p(y_{ij} = c\mid \mb -, a_{ik}=0 )};
    \\[3mm]
    &(z_{i1},\ldots,z_{iB})
    \sim\text{Categorical}
    \left(\tau_1 \prod_{k=1}^{K_1} \eta_{k1}^{a_{ik}} (1-\eta_{k1})^{1-a_{ik}},
    \,\ldots,\,
    \tau_B \prod_{k=1}^{K_1} \eta_{kB}^{a_{ik}} (1-\eta_{kB})^{1-a_{ik}}\right).
\end{align*}

Based on the full conditionals in the above (1)--(6), we can have a Gibbs sampler that cycle through these steps. All the conditional distributions are standard distributions that are easy to sample from.

\subsection{Gibbs Sampler under a Cumulative Shrinkage Process Prior when $K_1$ is Unknown}

Introduce weight $w_\ell := v_\ell\prod_{m=1}^{\ell-1}(1-v_m)$ and introduce latent indicators $h_k$ with $\mathbb P(h_k=\ell\mid w_\ell) = w_{\ell}$, for $\ell\in[K]$.
The joint conditional posterior of $\{\sigma_{ck}^2\}$, $\{v_{k}\}$, and $\{h_k\}$ are
\begin{align*}
    p(\{\sigma_{ck}^2\}, \{h_{k}\}, \{v_{k}\}\mid-)
    \propto &~
    \prod_{k=1}^K\prod_{j=1}^p\left[\prod_{c=1}^{d-1}\sigma_{ck}^{-1}  \exp\left( -1/2 \sigma_{ck}^{-2}
    \beta_{j1k}^2 \right)\right]^{g_{j,k}}
    \\
    &~\times
    \prod_{k=1}^K \left[\prod_{c=1}^{d-1}(\sigma_{ck}^{-2})^{a_{\sigma}-1}\exp\left( -b_{\sigma} \sigma_{ck}^{-2} \right)\right]^{\mathbbm{1}(h_{k}>k)}
    \left[\prod_{c=1}^{d-1}\mathbbm{1}(\sigma_{ck}^2=\theta_{\infty})\right]^{\mathbbm{1}(h_{k}\leq k)}
    \\
    &~\times
    \prod_{\ell=1}^K \left[v_\ell\prod_{m=1}^{\ell-1}(1-v_m)\right]^{\sum_{k=1}^K \mathbbm{1}(h_k=\ell)}
\end{align*}
Based on the above display, we can derive the full conditional distribution for each quantity as follows.
\begin{align*}
(\sigma_{ck}^2 \mid -) 
&\sim
\begin{cases}
    \text{InvGa}\left(a_\sigma + \frac{1}{2}\sum_{j=1}^p g_{j,k},\,
    b_\sigma + \frac{1}{2}\sum_{j=1}^p g_{j,k}\beta^2_{jck}\right), & \text{if}~~ h_k\leq k;\\
    \sigma_{ck}^2 = \theta_{\infty}, & \text{if}~~ h_k > k;
\end{cases}
\\[2mm]
(v_k \mid -) &\sim \text{Beta}\left(1 + \sum_{\ell=1}^K \mathbbm{1}(h_\ell=k), ~
    \alpha + \sum_{\ell=1}^K \mathbbm{1}(h_\ell > k) \right);
    \\[2mm]
\mathbb P(h_k = \ell \mid -) &\propto
\begin{cases}
w_\ell \prod_{c=1}^{d-1} N_p (\beta_{\eff,\, ck};\, 0, \theta_{\infty} \mb I_{p_{\eff}}), & \text{if} ~~ \ell\leq k; \quad(\text{spike})
    \\[2mm]
w_\ell \prod_{c=1}^{d-1} t_{2a_\sigma}\{\beta_{\eff, \, ck};\,  0, (b_\sigma/a_\sigma)\mb I_{p_{\eff}}\}, & \text{if} ~~ \ell> k. \quad(\text{slab})
\end{cases}
\end{align*}

With the indicator variables $\{h_k;\, k\in[K]\}$,
the posterior mean of the inferred number of effective latent variables takes the form of 
\begin{equation}\label{eq-kz}
    \mathbb E[K^\star\mid \mb Y] = \mathbb P\mleft(\sum_{k=1}^K \sum_{\ell=k+1}^K \mathbbm{1}(h_k =\ell) \;\middle|\; \mb Y\mright).
\end{equation}



\subsection{Gibbs Sampler for One-wide-latent-layer Models}
Consider a single-latent-layer model alternative to the multilayer Bayesian pyramids, where the latent layer consists of multiple independent binary latent variables. Compared to the two-latent-layer Bayesian Pyramid, this model does not have a deep latent class variable $z$ beneath the $\aaa$-layer.
We can slightly modify the Gibbs sampler for Bayesian Pyramids (described in Section 4.2 of the main text and Section \ref{sec-gibbs2layer} in this Supplementary Material) to develop a similar Gibbs sampler for the single-latent-layer model.
With similar priors as those for the Bayesian Pyramids and assuming each $\alpha_k\sim \text{Bernoulli}(\eta_k)$, the current posterior distribution can be written as
\begin{align}\label{eq-pos}
& p(\bo\beta,\GG^{(1)},\bo\tau,\bo\eta, \mb A\mid \mb Y)
\\ \notag
    \propto & ~\prod_{i=1}^n \prod_{j=1}^p \prod_{c=1}^d \prod_{k=1}^{K_1}
    \frac{\left[\exp\left(\beta_{jc0} + \sum_{k=1}^{K_1} \beta_{jck} g_{j,k} a_{i,k} \right)\right]^{\mathbbm{1}(y_{ij} = c )}}{\sum_{l=1}^{d} \exp\left(\beta_{jl0} + \sum_{k=1}^{K_1} \beta_{jlk} g_{j,k} a_{i,k} \right)}
    \times \textcolor{black}{\underbrace{\prod_{k=1}^{K_1} \eta_{k}^{a_{ik}} (1-\eta_{k})^{1-a_{ik}}}_{\text{the only difference from }\eqref{eq-pos}}}
    \\ \notag
    &\times
    \prod_{c=1}^{d-1}\prod_{j=1}^p \exp\left\{ -1/2\sigma_{c0}^{-2} (\beta_{jc0} - \mu_{c0})^2 \right\}
    \\ \notag
    &\times
    \prod_{k=1}^{K_1} \prod_{j=1}^p 
    \prod_{c=1}^{d-1}
    \left[\sigma_{ck}^{-1}  \exp\left( -1/2 \sigma_{ck}^{-2}
    \beta_{jck}^2 \right)\right]^{g_{j,k}}
    \left[v_0^{-1}  \exp\left( -1/2 v_0^{-2}
    \beta_{jck}^2 \right)\right]^{1-g_{j,k}}
    \\ \notag
    &\times
    \prod_{j=1}^p \prod_{k=1}^{K_1} \gamma^{g_{j,k}} (1-\gamma)^{1-g_{j,k}}
    \times
    \prod_{k=1}^{K_1} \prod_{c=1}^{d-1} 
    \sigma_{ck}^{-2a_{\sigma}-2} \exp\left( -b_{\sigma} \sigma_{ck}^{-2}\right)
    \\ \notag
    &\times
    \prod_{i=1}^n \prod_{j=1}^p \prod_{c=1}^{d-1} \text{PG}(w_{ijc}\mid 1,0)
    \times 
    \text{Dirichlet}(\bo\tau;\, \aaa_{\tau})
    \times
    \textcolor{black}{\prod_{k=1}^{K_1} \text{Beta}(\eta_{k};\, \aaa_{\eta})}
    \times \text{Beta}(\gamma;\, \aaa_{\gamma}).
\end{align}
Therefore, the posterior distribution of $\eta_{k}$ is 
$$
\eta_{k} 
\sim \text{Beta}\left(1 + \sum_{i=1}^n a_{ik}, 
~ 1 + \sum_{i=1}^n (1-a_{ik})\right).
$$
All the other steps of the Gibbs sampler for sampling other parameters are the same as those detailed in Section \ref{sec-gibbs2layer}.

\color{black}

\section{Additional Numerical Analyses}\label{sec-nume}

\subsection{Overfitted Mixture Model applied to the Deep Latent Class Layer in Bayesian Pyramids for Selecting $B$}

For finite mixture models with an unknown number of mixture components $B$ where only an upper bound of $B$ is known as $B_{\text{upper}}$, \cite{rousseau2011overfit} proposed to use shrinkage priors for the mixture proportion parameters, such as the Dirichlet priors with small Dirichlet hyperparameters. More specifically, denote the mixture proportion parameters corresponding to the $B_{\text{upper}}$ ``overfitted'' mixture components by $\bo\pi := (\pi_1,\ldots,\pi_{B_{\text{upper}}})$, then $\bo\pi$ lives on the $(B_{\text{upper}}-1)$-dimensional probability simplex. The overfitted mixture framework in \cite{rousseau2011overfit} guarantees that with the prior $\bo\pi\sim \text{Dirichlet}(a_1, \ldots,  a_{B_{\text{upper}}})$ where the Dirichlet parameters are small enough with respect to the dimension of the observations, the redundant mixture components will be ``emptied out'' in the posterior. 

Such an overfitted-mixture-model framework provides a convenient way to select cardinality $B$ of the deep latent variable $z$ in our model.
We have performed numerical experiments for both the simulated data and the real data of DNA nucleotide sequences using an overfitted mixture; see the following for details.
In all the scenarios, we set the hyperparameters of the Dirichlet distribution of the deep latent class proportion to be $0.01$.

\medskip
\noindent
\textbf{Simulations.} We have conducted {a new simulation study} using simulated data generated from the setting described in Section 5 in the manuscript, while overfitting $B_{\text{upper}}$ to be 5 (note the true $B=2$ in the simulation). The true deep latent class proportions are $(0.5,~0.5)$. In Figure \ref{fig-overfitmix-simu}, we provide the boxplots of the posterior means of the sorted $(\tau_b;\; b\in[B])$ across the 50 simulation replications. In particular, the first column in  Figure \ref{fig-overfitmix-simu} is the boxplot for the largest deep latent class proportion $\tau_{(1)}$ across replicates, second column the second largest proportion $\tau_{(2)}$, etc. One can see that in most of the simulation trials, only the largest two deep latent classes have significant posterior means while the last three are very close to zero.
    
    \begin{figure}[h!]
        \centering
        \includegraphics[width=0.6\textwidth]{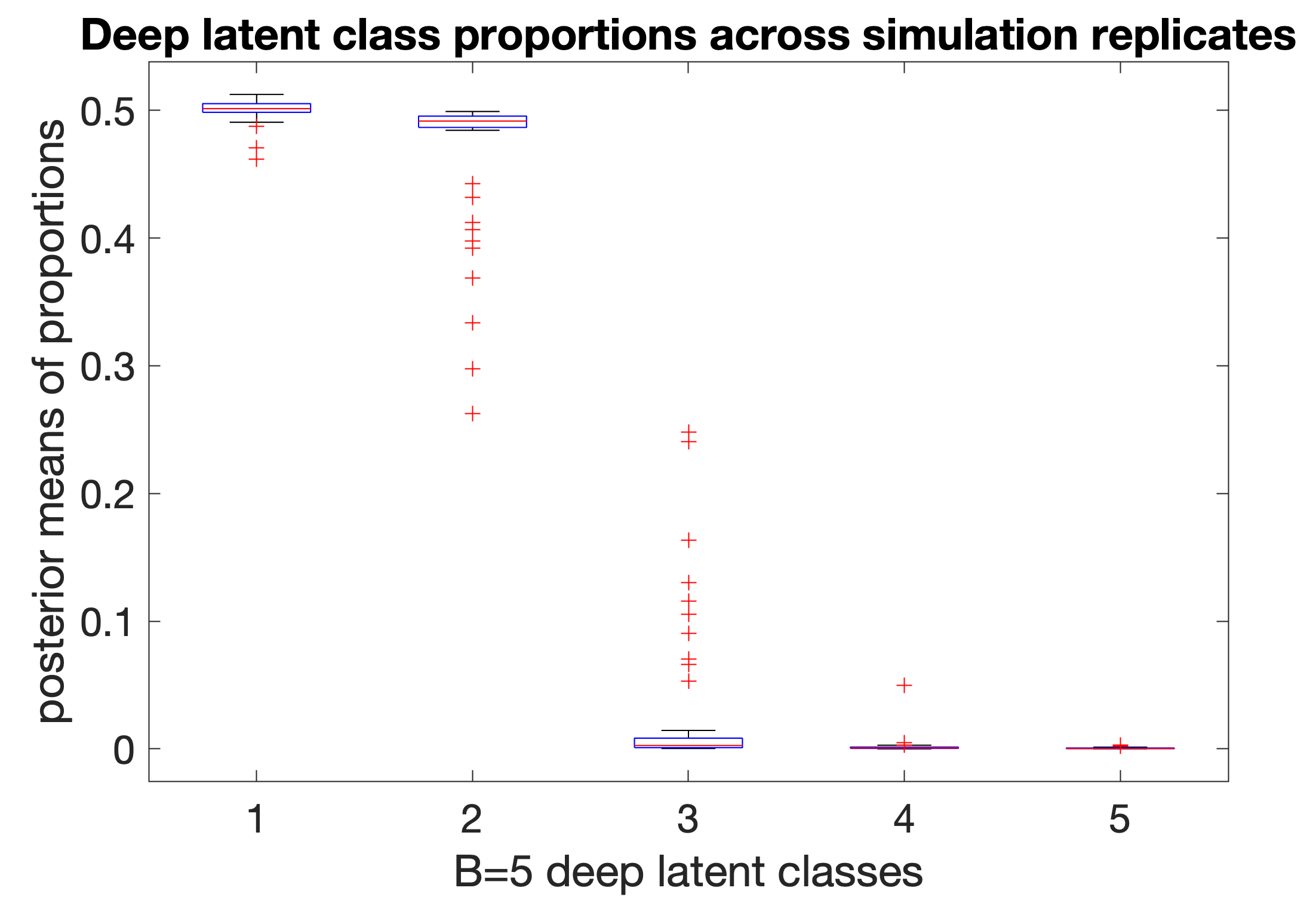}%
        \caption{Simulation results for the overfitted mixture model applied to the deep latent class variable $z\in[B]$. The true deep latent class proportions are $(0.5,~0.5)$ with $B_{\text{true}=2}$, while in the simulation the $B$ is specified to be 5.}
        \label{fig-overfitmix-simu}
    \end{figure}

\medskip
\noindent
\textbf{Splice data application.} We reanalyzed the splice junction data, now specifying the number of deep latent categories to be $B=5$ ($B=2$ for the splice data in the main paper). Results have been obtained under the overfitted mixture framework, where the posterior means for the $B=5$ deep latent classes are: 
    $$
    \bo\tau_{\text{post. mean}} = (8.2\times 10^{-6},~  1.2\times 10^{-5},~    6.0\times 10^{-6},~    0.7517,~    0.2482)^\top,
    $$
    with the first three mixture components having probability weights very close to zero. Hence, three of the classes are effectively deleted, leading to  $B_{\text{eff}}=2$. In the plot of the new data analysis results in Figure \ref{fig-overfitmix-splice}, the first three columns of the posterior mode of the $n\times B$ matrix $\mathbf Z$ are all zero, indicating these categories are effectively emptied out in the posterior.
    From the rightmost two plots, our estimated deep latent component with posterior proportion $0.2482$ seems to be practically meaningful and capture the sequence type ``EI'' of the splice junction.
    We also used the ``rule lists'' approach \citep{angelino2017rule} for downstream classification, and the classification accuracy for all three junction types are similar to those presented in the main text.

\begin{figure}[h!]
    \centering
    \resizebox{0.9\textwidth}{!}{%
    \includegraphics[height=3cm]{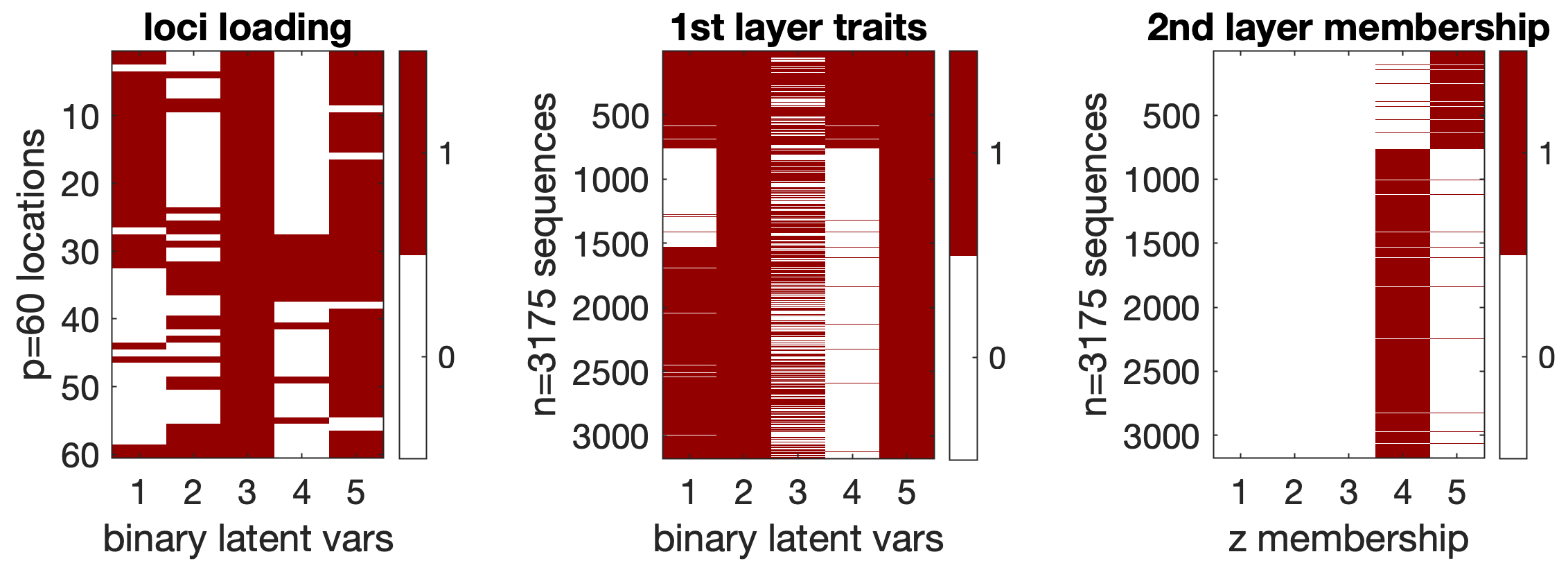}%
    \quad
    \includegraphics[height=3cm]{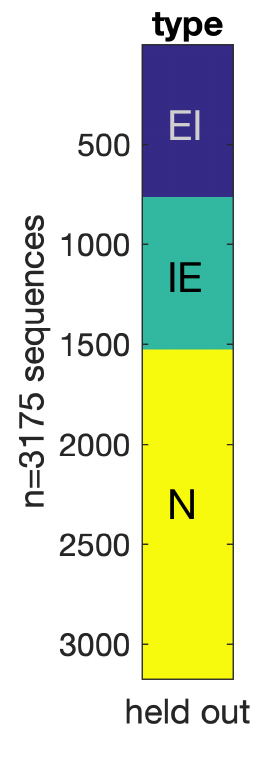}%
    }
    \caption{{\textit{Overfitted} mixture framework applied to the deep latent class variable $z\in\{1,\ldots,B\}$ for the splice junction data. Here $B$ is specified to be $B=5$ while only $B_{\text{eff}}=3$ effective latent categories remain in the posterior, as shown by the third plot.}}
    \label{fig-overfitmix-splice}
\end{figure}


\color{black}
\subsection{Analysis of the Promoter Dataset}
\label{sec-promoter}


The promoter data consist of A, C, G, T nucleotides at $p = 57$ positions for $n = 106$ DNA nucleotide sequences, along with a binary variable for each sequence indicating whether it is a promoter or a non-promoter sequence. 
In this dataset, the first 53 sequences are promoters and remaining 53 are non-promoter sequences.
We let the variable $z$ in the deepest latent layer have $B=2$ categories, since there are two types of sequences; we do not use any information of which sequence belongs to which type when fitting the two-latent-layer SG-LCM to data.

Similarly to the analysis of the splice junction data, we  take $K_{\upper}=7$ and $\alpha_0 = 2$ in the CSP prior. 
We ran the chain for 15000 iterations, discarding the first 5000 as burn-in, and retaining every fifth sample post burn-in to thin the chain.
Our method selects a number of $K^\star = 4$ binary latent variables in the middle layer.
Denote the collected posterior samples of the nucleotide sequences' latent binary traits by $(a^{(r)}_{i,k};\, i\in[106], k\in[4])$. and denote those of the nucleotide sequences' deep latent category by $(z_i^{(r)};\, i\in[106])$ where $z_i^{(r)}\in\{1,2\}$.
Similarly as for the splice junction data, we define the final estimators as 
$\widehat\GG = \left(\widehat g_{j,k}\right)$,  $\widehat{\mb A} = \left(\widehat a_{i,k}\right)$, and $\widehat{\mb Z} = (\widehat z_{i,b})$ to be
\begin{align*}
    &\widehat g_{j,k} = \mathbbm{1}\left(\frac{1}{R} \sum_{r=1}^R g^{(r)}_{j,k} > \frac{1}{2}\right),\quad
    \widehat a_{i,k} = \mathbbm{1}\left(\frac{1}{R} \sum_{r=1}^R a^{(r)}_{i,k} > \frac{1}{2}\right),\\[3mm]
    &
    \widehat z_{i,b} =
    \begin{dcases}
        1, & \text{if}~ b = \argmax_{b\in[B]} \frac{1}{R} \sum_{r=1}^R \mathbbm{1}\left(z^{(r)}_{i}=b\right);\\
        0, & \text{otherwise}.
    \end{dcases}
\end{align*}
In Fig.~\ref{fig-promoter}, the three plots display the three estimated matrices $\widehat\GG$, $\widehat{\mb A}$, and $\widehat{\mb Z}$, respectively.
As for the estimated loci loading matrix $\widehat\GG$, Fig.~\ref{fig-splice}(a) provides information on how the $p=60$ loci depend on the four selected binary latent traits. 

The gray reference lines in Fig.~\ref{fig-promoter}(b)-(c) mark the $x$-axis value of $i=53$, which separates the two nucleotide sequence types in the sample: the promoter sequences $i\in\{1,\ldots,53\}$ and non-promoter sequences $i\in\{54,\ldots,106\}$.
As for the learned binary latent representations of the nucleotide sequences shown in Fig.~\ref{fig-promoter}(b), we found that the first 53 promoter sequences generally possess more binary latent traits while the last 53 non-promoter sequences do not have any binary latent traits.
The following eight promoter sequences indexed from $i=10,11,\ldots,17$ possess the densest binary trait representations, \verb|PRNAB_P1|, \verb|PRNAB_P2|, \verb|PRNDEX_P2|, \verb|RRND_P1|, \verb|RRNE_P1|, \verb|RRNG_P1|, \verb|RRNG_P2|, \verb|RRNX_P1|.
In particular, Fig.~\ref{fig-promoter}(b) shows that seven out of these eight sequences, except the fourth one \verb|RRND_P1|, are the only seven sequences in the dataset that possess at least three binary traits.
Interestingly, the aforementioned eight sequences indexed from $i=10,11,\ldots,17$ are also the only eight out of the $n=106$ ones that have their names starting with ``\verb|PRN|''.
This result shows the usefulness of the proposed methodology in learning interpretable and meaningful latent features.
Such insight also suggests that domain scientists could conduct future research to investigate how these eight promoter sequences function differently from the other promoter sequences.


\begin{figure}[h!]
	\centering
	
    \includegraphics[width=0.9\textwidth]{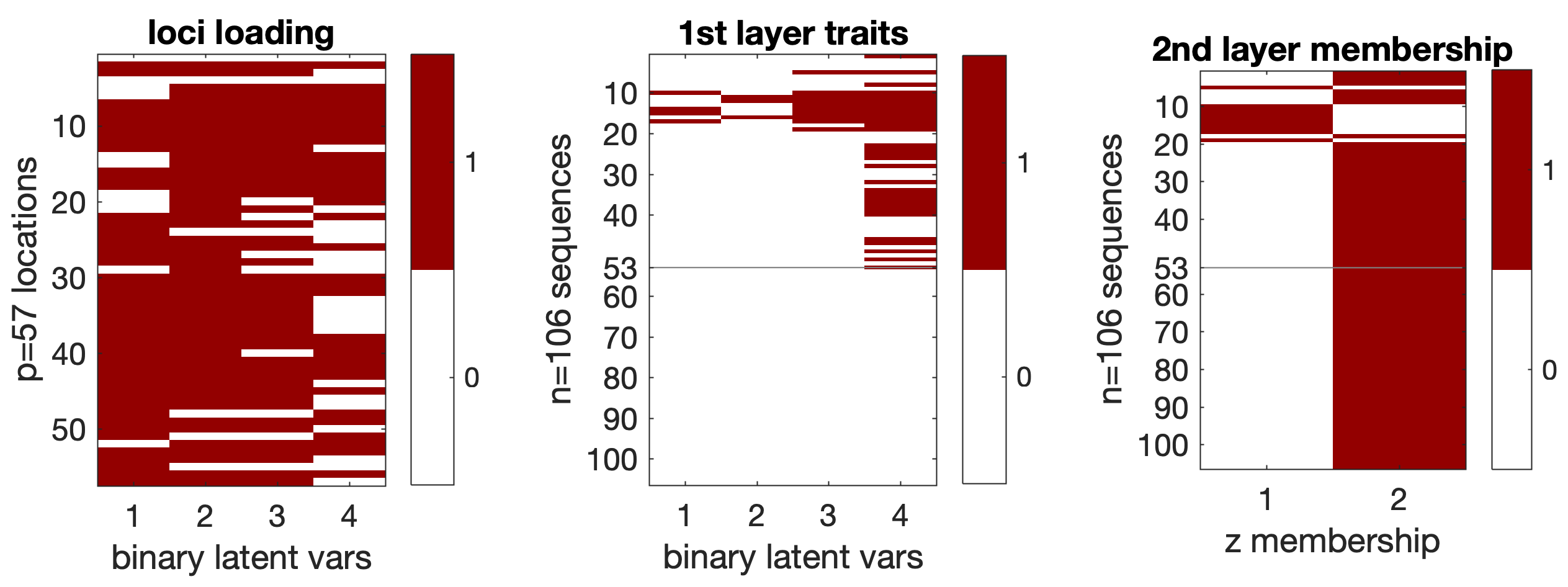}%
    
    \begin{minipage}[c]{0.3\textwidth}\centering
    (a) $\widehat\GG_{60\times 4}$
    \end{minipage}
    ~
    \begin{minipage}[c]{0.3\textwidth}\centering
    (b) $\widehat{\mb A}_{106\times 4}$
    \end{minipage}
    ~
    \begin{minipage}[c]{0.3\textwidth}\centering
    (c) $\widehat{\mb Z}_{106\times 2}$
    \end{minipage}

	\caption{Promoter data analysis under the CSP prior with $K_{\upper} = 7$. 
	Plots are presented with the $K^\star = 4$ binary latent traits selected by our proposed method. The gray reference lines in (b) and (c) mark the $x$-axis value of $i=53$, which separates the two nucleotide sequence types in the sample: the promoter sequences $i\in\{1,\ldots,53\}$ and non-promoter sequences $i\in\{54,\ldots,106\}$.}
	\label{fig-promoter}
\end{figure}




\end{document}